\documentclass[11pt]{article}

\usepackage[margin=1in]{geometry}
\usepackage{amsmath,amssymb,amsthm,mathtools}
\usepackage{booktabs}
\usepackage{tabularx}
\usepackage{graphicx}
\graphicspath{{figures/}}
\usepackage{microtype}
\usepackage{enumitem}
\usepackage[round]{natbib}
\usepackage{xcolor}
\usepackage[colorlinks=true,linkcolor=blue!50!black,citecolor=blue!50!black,urlcolor=blue!50!black]{hyperref}
\hypersetup{
  pdftitle={The Mathematics of Volatility Surfaces},
  pdfauthor={Miquel Noguer i Alonso},
  pdfsubject={Arbitrage geometry, local volatility, stochastic fields, neural operators, normalizing flows, and signatures},
  pdfkeywords={volatility surface, local volatility, static arbitrage, stochastic fields, neural operators, normalizing flows, signatures}
}

\newtheorem{theorem}{Theorem}[section]
\newtheorem{proposition}[theorem]{Proposition}
\newtheorem{lemma}[theorem]{Lemma}
\newtheorem{corollary}[theorem]{Corollary}

\theoremstyle{definition}
\newtheorem{definition}[theorem]{Definition}
\newtheorem{assumption}[theorem]{Assumption}

\theoremstyle{remark}
\newtheorem{remark}[theorem]{Remark}

\newcommand{\tier}[1]{\textnormal{[\textsc{#1}]}}

\newcommand{\R}{\mathbb{R}}
\newcommand{\E}{\mathbb{E}}
\newcommand{\Prob}{\mathbb{P}}
\newcommand{\Hs}{\mathcal{H}}
\newcommand{\Ksurf}{\mathcal{K}}
\newcommand{\Kc}{\mathcal{K}_{c}}
\newcommand{\Mcap}{\overline M}
\newcommand{\Kcmbar}{\mathcal{K}_{c,m,\Mcap}}
\newcommand{\Ka}{\mathcal{K}_{a}}
\newcommand{\Dom}{\mathcal{D}}
\newcommand{\tr}{\operatorname{tr}}
\newcommand{\ip}[2]{\left\langle #1,\, #2 \right\rangle}
\newcommand{\norm}[1]{\left\lVert #1 \right\rVert}
\newcommand{\dd}{\,\mathrm{d}}

\title{\textbf{The Mathematics of Volatility Surfaces}\\[0.4em]
\large Arbitrage Geometry, Local Volatility, Stochastic Fields,\\
Neural Operators, Normalizing Flows, and Signatures}

\author{Miquel Noguer i Alonso\\
\normalsize Artificial Intelligence Finance Institute (AIFI)}

\date{\today}

\begin{document}

\maketitle

\begin{abstract}
\noindent This paper develops a unified mathematical theory of implied, local, and learned volatility surfaces. Total variance $w_t(k,\tau)=\tau\sigma_t^2(k,\tau)$ is an infinite-dimensional state constrained by positivity, calendar monotonicity, and the butterfly differential inequality. We establish the topology and tangent geometry of this arbitrage set and prove that a nondegenerate Gaussian shock at an active constraint exits with probability tending to one half. Exact invariance therefore requires tangency, reflection, or confinement to an arbitrage-free manifold. We separate this static invariance problem from dynamic no-arbitrage, derive the Musiela maturity-transport identity, and identify the additional fixed-contract martingale restriction. We formulate Hilbert-space dynamics, prove an exact modal reduction with closed-form truncation error, derive Karhunen--Lo\`eve factors, identify the portfolio derivative as a vega field, and obtain the covariance-optimal hedge $\alpha^\ast=(H^\ast C H)^{-1}H^\ast C\nu$. The local-volatility chart completes the geometry: Dupire local variance is the ratio $a=\partial_\tau w/g[w]$ of the calendar and butterfly constraint functionals. Neural operators provide arbitrage-free universal approximation through simplex and cone heads. Normalizing-flow maps then add tractable conditional densities: we derive exact change-of-variables formulae for exponential local-variance flows and invertible stick-breaking price-simplex flows, while stating the quasi-invariance conditions required in genuine function space. Finally, fading-signature fields encode surface history and yield autonomous finite-dimensional controlled dynamics. The result is one framework for representation, dynamics, arbitrage, dimension reduction, likelihood-based learning, simulation, and hedging, together with a falsifiable empirical protocol.
\medskip

\noindent\textbf{Keywords:} implied volatility surface; local volatility; static arbitrage; stochastic fields; neural operators; normalizing flows; signatures; Karhunen--Lo\`eve expansion; functional hedging; SSVI.\\

\end{abstract}
\newpage
\tableofcontents
\newpage
\section{Introduction}\label{sec:intro}

The implied volatility surface is the market's state variable for index options, yet most of the modeling literature does not treat it as a state. Parametric families fit the surface slice by slice; stochastic volatility models generate it as an output of a low-dimensional latent process; desk practice differentiates it into buckets. In each case the object that the market actually quotes --- a two-dimensional field with hard shape constraints --- is dismantled before the mathematics begins.

This paper keeps the object whole. We model implied total variance
\begin{equation}\label{eq:state}
w_t(k,\tau) \;=\; \tau\,\sigma_t^2(k,\tau), \qquad (k,\tau)\in\Dom \coloneqq [k_{\min},k_{\max}]\times[\tau_{\min},\tau_{\max}],
\end{equation}
as a stochastic process taking values in a weighted Sobolev space $\Hs=H^2_\rho(\Dom)$, and we take seriously the fact that at every time $t$ the realized field must lie in the admissible set
\footnote{This manuscript consolidates, substantially revises, and supersedes
three 2026 working drafts by the author: \emph{Volatility Surfaces as
Constrained Stochastic Fields}, \emph{Random Fields as Multidimensional
Stochastic Differential Equations}, and \emph{Local Volatility Fields}. The
present paper is the archival statement of the combined results; overlapping
drafts should not be treated as independent contributions.}
\begin{equation}\label{eq:Kintro}
\Ksurf \;=\; \bigl\{\, w\in\Hs \;:\; w>0,\;\; \partial_\tau w \ge 0,\;\; g[w]\ge 0 \,\bigr\},
\end{equation}
where $g[w]$ is the Gatheral--Jacquier butterfly functional recalled in Section~\ref{sec:constraints}. The surface is thus a \emph{constrained stochastic field}: a random element of an infinite-dimensional space that never leaves a thin, curved, boundary-active subset of it. (We use ``stochastic field'' and ``random field'' interchangeably throughout.)

Two questions drive the paper. First, \emph{what does the constraint geometry do to the dynamics?} We show that the answer is drastic: Gaussian field models --- the natural infinite-dimensional analogue of the PCA-driven factor models used since \citet{cont2002dynamics} --- are incompatible with the boundary of $\Ksurf$. At any surface with an active constraint, a nondegenerate Gaussian shock exits the admissible set with probability tending to $1/2$ as the time step shrinks (Theorem~\ref{thm:boundary}), and in continuous time exact invariance forces degeneracy, reflection, or confinement to a submanifold (Corollary~\ref{cor:trichotomy}). This is the surface-space analogue of the square-root boundary behavior familiar from short-rate models, and of the consistency results of \citet{filipovic2001consistency} for Heath--Jarrow--Morton models: shape constraints are not decoration, they are the binding physics of the state space.

Second, \emph{what does the field perspective buy the practitioner?} Three things. (i) A spectral theory: the increment covariance operator is trace class, so Karhunen--Lo\`eve factors exist, and the familiar level/skew/curvature modes of surface PCA become rigorous objects with quantified truncation error (Section~\ref{sec:spectral}). (ii) A hedging theory: the sensitivity of a book to the surface is a single object, the \emph{vega field} $\nu_w\in\Hs$, defined as the Riesz representer of the Fr\'echet derivative of the portfolio value functional; bucketed vegas are samples of this field under reproducing-kernel bumps (Lemma~\ref{lem:bucket}), and the minimum-variance hedge with $n$ traded options has the closed form $\alpha^\ast=G^{-1}H^\ast C\nu$ (Theorem~\ref{thm:hedge}). (iii) A model-design discipline: the boundary theorem sorts the space of possible models into three viable construction templates --- convex price-coordinate dynamics, tangent-projected innovations, and arbitrage-free parameter maps --- which we develop and compare in Section~\ref{sec:constructions}.

\subsection{Contributions}\label{sec:contributions}

\begin{enumerate}[label=\textbf{C\arabic*.},leftmargin=2.6em]
\item \textbf{State space.} A formulation of the surface as an $H^2_\rho(\Dom)$-valued process, with the embedding, reproducing-kernel, and price-coordinate structure needed downstream (Section~\ref{sec:statespace}).
\item \textbf{Constraint geometry.} Closedness and weak closedness of the admissible set with a positive variance floor; an exact banded correspondence between variance and price charts; convexity in price coordinates versus nonconvexity in total-variance coordinates; the contingent tangent cone and an explicit directional derivative of the butterfly functional; a first-order feasibility criterion (Sections~\ref{sec:constraints}--\ref{sec:geometry}).
\item \textbf{Boundary incompatibility.} The $1/2$-limit theorem for Euler shocks of nondegenerate Gaussian field dynamics at active constraints, a uniform $\sqrt\Delta$ boundary-layer law, and the continuous-time trichotomy (degenerate normal noise, reflection, or submanifold confinement) with an explicit localized DDS proof (Section~\ref{sec:dynamics}).
\item \textbf{Interior survival.} A Borell--TIS estimate quantifying when unconstrained Gaussian steps are safe: an explicit exponential bound in terms of the constraint margin and the $C^2$ concentration of the noise (Section~\ref{sec:dynamics}).
\item \textbf{Infinite-dimensional dynamics and exact reduction.} A Hilbert-space SDE formulation and an exact diagonal modal theorem that identifies the finite-dimensional factor SDE, the function-valued existence condition, and the full mean-square truncation error (Section~\ref{sec:hilbertdynamics}).
\item \textbf{Spectral dynamics and functional hedging.} Karhunen--Lo\`eve factors of the increment covariance operator; the vega field as a Riesz representer; an exact covariance law under changes of Sobolev weight; the closed-form minimum-variance field hedge $\alpha^\ast = G^{-1}H^\ast C\nu$ with its residual-risk formula and a joint spot--field extension (Sections~\ref{sec:spectral}--\ref{sec:hedging}).
\item \textbf{Local-volatility chart.} The identity $a=\partial_\tau w/g[w]$, its zero--pole geometry, the convex-cone chart, a stable parabolic inverse, and the SSVI at-the-money local-variance formula (Section~\ref{sec:localchart}).
\item \textbf{Dynamic-arbitrage separation.} An exact Musiela roll-down identity, its cancellation along a fixed-expiry contract, and a precise separation between static surface feasibility and the risk-neutral drift restriction for discounted option prices (Section~\ref{sec:dynamicarb}).
\item \textbf{Learning, likelihood, and memory.} Constraint-preserving neural-operator heads, exact normalizing-flow maps in the local and price charts, and fading-signature fields giving finite-dimensional path-dependent realizations with separate surface and history truncation axes (Sections~\ref{sec:neuraloperators}--\ref{sec:signaturefields}).
\item \textbf{Constructions and a pre-registered empirical program.} Classical, learned, and path-dependent model templates with tiered guarantees, including a fully well-posed finite-grid reflected price construction, and a falsifiable end-of-day SPX design: nine models, six metric families, hypotheses H1--H7, and a reproducibility checklist (Sections~\ref{sec:constructions}--\ref{sec:empirics}).
\end{enumerate}

\subsection{Related literature}\label{sec:literature}

\emph{Dynamics of implied volatility.} The empirical program descends from \citet{skiadopoulos1999dynamics} and \citet{cont2002dynamics}, who applied principal component analysis to smiles and surfaces and documented the small number of dominant modes. Our Section~\ref{sec:spectral} is the operator-theoretic completion of that program; our Section~\ref{sec:dynamics} explains why its Gaussian-factor reading cannot be exactly arbitrage-free.

\emph{Market models of option prices.} Direct modeling of implied volatilities or option prices as states goes back to \citet{schonbucher1999market}, with the existence and drift-restriction theory developed by \citet{schweizer2008term,schweizer2008multistrike}, \citet{carmona2009local} for local-volatility codebooks, and \citet{kallsen2015hjm} in a Heath--Jarrow--Morton spirit; \citet{durrleman2010implied} connects implied and spot volatilities. Relative to this line, our contribution is to work on the full two-dimensional field with the static-arbitrage set treated as a geometric object --- tangent cones, normal directions, projections --- and then to state exactly where this geometry stops and the dynamic drift restriction begins.

\emph{Static arbitrage.} The constraint set itself is classical: \citet{carr2005note}, \citet{davis2007range}, and \citet{roper2010arbitrage} characterize arbitrage-free call-price and total-variance surfaces; \citet{gatheral2014arbitrage} give the butterfly functional and the SVI/SSVI parametrizations we use as the smooth-map construction; \citet{lee2004moment} bounds wing growth; \citet{fengler2009smoothing} and \citet{aitsahalia2003nonparametric} develop shape-constrained estimation; \citet{cohen2020repair} repair quoted surfaces. We import these results as the definition of $\Ksurf$ and study their interaction with stochastic dynamics.

\emph{Local volatility and realizations.} Dupire's local-volatility construction is \citet{dupire1994pricing}; the total-variance representation and SVI calculus are standard since \citet{gatheral2006surface}. Finite-dimensional realization theory follows \citet{bjork2001existence} and \citet{filipovic2001consistency}; the boundary-degenerate benchmark is the square-root model of \citet{cox1985theory}.

\emph{Infinite-dimensional stochastic analysis and random fields.} The functional-analytic toolkit is from \citet{daprato2014stochastic} (Hilbert-space-valued diffusions), \citet{aubin1990setvalued} and \citet{rockafellar1998variational} (tangent cones and set-valued analysis), \citet{adler2007random} and \citet{borell1975brunn} (Gaussian concentration on Banach spaces), and \citet{bogachev1998gaussian}. Reflected constructions in function space connect to \citet{nualart1992reflection}, \citet{haussmann1989stochastic}, and, in finite dimensions, \citet{cepa1998skorohod} and \citet{slominski2001euler}. Functional data analysis enters through \citet{ramsay2005functional} and \citet{horvath2012inference}.

\emph{Modern surface generators.} Neural and generative surface models --- \citet{cohen2023neural}, \citet{cont2023simulation}, \citet{cuchiero2020gan} --- enforce or encourage no-arbitrage inside flexible function classes; rough- and path-dependent volatility \citep{gatheral2018rough,guyon2023path} constrains what realistic dynamics must reproduce. Our framework supplies the geometry against which all such generators can be audited, and our empirical program (Section~\ref{sec:empirics}) includes them as competing models.

\subsection{Claim tiering and standards}\label{sec:tiering}

Following the standards used across this program, every formal statement is tagged with one of three tiers. \tier{Proved} means a complete proof is given here or in the cited source under the stated hypotheses. \tier{Conditional} means the statement is proved modulo an explicitly identified hypothesis or verification that we have not carried out (for example, the verification that a cited well-posedness theorem applies to our specific constraint set). \tier{Conjectural} means we state the claim with supporting reasoning but without proof. No numerical results are reported in this version; Section~\ref{sec:empirics} is a pre-registered design, and every threshold appearing there is a design constant fixed before data contact, not an estimate.

\medskip
\noindent\emph{Roadmap.} Sections~\ref{sec:statespace}--\ref{sec:geometry} build the state space and its arbitrage geometry. Section~\ref{sec:localchart} develops the local-volatility chart, and Section~\ref{sec:dynamicarb} separates static feasibility from dynamic no-arbitrage through the Musiela coordinate map. Section~\ref{sec:dynamics} proves the boundary theorem and interior survival estimate. Sections~\ref{sec:hilbertdynamics}--\ref{sec:hedging} develop dynamics, spectral reduction, and hedging. Section~\ref{sec:constructions} gives classical constructions; Sections~\ref{sec:neuraloperators}--\ref{sec:signaturefields} add operator learning, likelihood flows, and path-dependent realizations. Section~\ref{sec:empirics} specifies the empirical program. Section~\ref{sec:scope} records the theorem frontier. Appendices contain technical derivations and the reproducibility checklist.

\section{The state space: total variance fields in a weighted Sobolev space}\label{sec:statespace}

\subsection{Domain and coordinates}\label{sec:domain}

Fix a forward curve $F_t(\tau)$ and write $k=\log(K/F_t(\tau))$ for forward log-moneyness. The modeling domain is the compact rectangle $\Dom=[k_{\min},k_{\max}]\times[\tau_{\min},\tau_{\max}]$ with $\tau_{\min}>0$ and $-\infty<k_{\min}<0<k_{\max}<\infty$. Working on a compact domain with $\tau_{\min}>0$ is an economic choice, not a technical evasion: quoted surfaces are supported on such a window, the wing asymptotics of \citet{lee2004moment} govern extrapolation beyond it, and every statement in this paper is about the quoted window. The state variable is the total variance field \eqref{eq:state}; we suppress $t$ when statics are in view and write $w_k=\partial_k w$, $w_{kk}=\partial^2_k w$, $w_\tau=\partial_\tau w$.

\subsection{The space \texorpdfstring{$H^2_\rho(\Dom)$}{H2 rho(D)}}\label{sec:sobolev}

Let $\rho\in C(\overline{\Dom})$ satisfy $0<\rho_{\min}\le\rho\le\rho_{\max}<\infty$. Define
\begin{equation}\label{eq:norm}
\ip{u}{v}_{\Hs} \;=\; \sum_{|\beta|\le 2} \int_{\Dom} \partial^\beta u\,(x)\; \partial^\beta v\,(x)\; \rho(x) \dd x,
\qquad \Hs \;=\; H^2_\rho(\Dom) \;=\; \bigl(H^2(\Dom),\ip{\cdot}{\cdot}_{\Hs}\bigr).
\end{equation}
Because $\rho$ is bounded above and below, the $\Hs$-norm is equivalent to the standard $H^2$-norm: the weight changes the geometry (adjoints, Riesz representers, orthogonality) but not the topology. The weight is where market structure enters the inner product --- vega-weighting, liquidity-weighting, or maturity discounting are all choices of $\rho$ --- and Lemma~\ref{lem:weightcovariance} makes the covariance exact while showing that the optimal hedge itself is weight-invariant.

\begin{proposition}[Embedding and reproducing kernel; \tier{Proved}]\label{prop:embedding}
Let $\Dom\subset\R^2$ be a compact rectangle. Then:
\begin{enumerate}[label=(\roman*)]
\item $H^2(\Dom)\hookrightarrow C^{0,\alpha}(\overline{\Dom})$ continuously for every $\alpha\in(0,1)$, and compactly into $C^0(\overline{\Dom})$; moreover $H^2(\Dom)\hookrightarrow\hookrightarrow H^s(\Dom)$ compactly for every $s<2$.
\item Point evaluation $\delta_x:w\mapsto w(x)$ is a bounded linear functional on $\Hs$ for every $x\in\overline{\Dom}$; hence $\Hs$ is a reproducing kernel Hilbert space: there exist kernel sections $R_x\in\Hs$ with $w(x)=\ip{R_x}{w}_{\Hs}$ for all $w\in\Hs$.
\item First derivatives evaluate weakly but not pointwise: $\partial^\beta w\in H^1(\Dom)\subset L^p(\Dom)$ for all $p<\infty$ when $|\beta|=1$, and $\partial^\beta w\in L^2(\Dom)$ when $|\beta|=2$.
\end{enumerate}
\end{proposition}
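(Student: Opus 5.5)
The plan is to reduce everything to standard Sobolev embedding theory on Lipschitz domains, using that the weighted norm \eqref{eq:norm} is equivalent to the standard $H^2$ norm because $\rho_{\min}\le\rho\le\rho_{\max}$. Concretely, $\rho_{\min}\|u\|_{H^2}^2\le\|u\|_{\Hs}^2\le\rho_{\max}\|u\|_{H^2}^2$. Every continuity and compactness statement can therefore be proved for unweighted $H^2(\Dom)$ and transferred. A compact rectangle is a bounded Lipschitz domain, so it admits a bounded Stein (or reflection) extension operator $E:H^m(\Dom)\to H^m(\R^2)$ for $m=0,1,2$. I will use $E$ to reduce statements on $\Dom$ to statements on $\R^2$.

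For (i), I will extend $w$ to $Ew\in H^2(\R^2)$ and invoke Morrey's inequality in dimension $n=2$. Since $\nabla(Ew)\in H^1(\R^2)\subset L^p(\R^2)$ for every $p<\infty$ (the critical Sobolev embedding in two dimensions), Morrey gives $Ew\in C^{0,1-2/p}$. Letting $p\to\infty$ covers every $\alpha\in(0,1)$, with a bound by $\|w\|_{H^2}$. Compactness into $C^0(\overline{\Dom})$ then follows from Arzel\`a--Ascoli, because bounded sets in $C^{0,\alpha}(\overline\Dom)$ are equibounded and equicontinuous. For $H^2\hookrightarrow\hookrightarrow H^s$ with $s<2$, I will use two steps:
\begin{itemize}
\item Rellich--Kondrachov gives $H^1\hookrightarrow\hookrightarrow L^2$ on bounded Lipschitz domains. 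Applying it to $w$ and its first derivatives yields compactness of $H^2\hookrightarrow H^1$.
\item For non-integer $s\in(1,2)$, I will interpolate: $\|u\|_{H^s}\le C\|u\|_{H^1}^{2-s}\|u\|_{H^2}^{s-1}$. Hence an $H^2$-bounded sequence that is Cauchy in $H^1$ is Cauchy in $H^s$. The range $s\le1$ follows from the continuous embedding $H^1\hookrightarrow H^s$.
\end{itemize}
This interpolation step is the one requiring the most care, since it depends on how $H^s(\Dom)$ is defined on a domain (restriction of $H^s(\R^2)$ versus Sobolev--Slobodeckij norms). I will fix the restriction definition, which agrees with the others on Lipschitz domains, and use the extension operator to get the interpolation inequality.

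For (ii), (i) gives $|w(x)|\le\|w\|_{C^0}\le C\|w\|_{H^2}\le C\rho_{\min}^{-1/2}\|w\|_{\Hs}$. So $\delta_x$ is bounded on the Hilbert space $\Hs$, and the Riesz representation theorem yields $R_x\in\Hs$ with $w(x)=\ip{R_x}{w}_{\Hs}$. By definition this makes $\Hs$ an RKHS, with kernel $R(x,y)=\ip{R_x}{R_y}_{\Hs}$. Note that $R_x$ depends on $\rho$; this is the weight dependence used later.

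For (iii), if $|\beta|=1$ then $\partial^\beta w\in H^1(\Dom)$ by definition. The two-dimensional critical embedding $H^1\hookrightarrow L^p$ for all $p<\infty$, obtained via extension and Gagliardo--Nirenberg, gives the inclusion. If $|\beta|=2$, then $\partial^\beta w\in L^2$ is immediate. The statement that first derivatives are not pointwise defined is the failure of $H^1\hookrightarrow L^\infty$ in two dimensions. I will exhibit this with the standard example $\log\log(1/|x-x_0|)$, localized by a cutoff and integrated once in $k$ to produce a $w\in H^2$ with unbounded $w_k$. This shows $\delta_x\circ\partial_k$ is unbounded on $\Hs$.
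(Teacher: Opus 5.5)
Your handling of the substantive claims matches the paper's proof. You use norm equivalence, the two-dimensional Sobolev/Morrey embedding for (i), Rellich--Kondrachov for compactness, Riesz representation for $R_x$ in (ii), and the critical embedding $H^1\hookrightarrow L^p$ for (iii). Your interpolation inequality is a correct way to reach the fractional range $1<s<2$, which the paper leaves inside the citation.

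The step that fails is your added counterexample for ``not pointwise.'' Integrating an $H^1$ function once in $k$ does not land in $H^2$. It controls every derivative of order at most two except $w_{\tau\tau}=\int^k\partial_\tau^2 f\,\dd s$, which needs two $\tau$-derivatives of $f$, and $H^1$ does not supply them. Take your bump $f=\chi\log\log(1/|x-x_0|)$ with $x_0$ interior and $\chi$ of small support. On the strip of $k$ to the right of $\operatorname{supp}\chi$ you have $w(k,\tau)=W(\tau)\coloneqq\int_\R f(s,\tau)\dd s$. The substitution $s-k_0=(\tau-\tau_0)u$ gives, to leading order,
\[
W'(\tau)\approx-\frac{\pi\operatorname{sgn}(\tau-\tau_0)}{\log(1/|\tau-\tau_0|)},\qquad |W''(\tau)|\approx\frac{\pi}{|\tau-\tau_0|\log^2(1/|\tau-\tau_0|)}.
\]
This is not square integrable near $\tau_0$, so $w_{\tau\tau}\notin L^2$ and $w\notin H^2$.

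The fix is the classical example $w=(k-k_0)\,\chi\,\log\log(1/|x-x_0|)$. With $r=|x-x_0|$, every second derivative is $O\bigl(1/(r\log(1/r))\bigr)$, which lies in $L^2$ in two dimensions. Meanwhile $w_k=\log\log(1/r)+O(1)$ is unbounded. Mollifying this $w$ gives an $\Hs$-bounded family with $\partial_k w_\varepsilon(x_0)\to\infty$. That is the precise sense in which $w\mapsto w_k(x_0)$ has no bounded extension to $\Hs$.

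The paper's own proof of (iii) establishes only the embedding content and treats ``not pointwise'' as a gloss. So this error does not affect the displayed claims, but the construction as you wrote it would not go through.
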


\begin{proof}
(i) is the Sobolev embedding theorem in dimension $d=2$ with $s=2>d/2$, plus Rellich--Kondrachov compactness on the Lipschitz domain $\Dom$. (ii) follows from (i): $|w(x)|\le \norm{w}_{C^0}\le c\norm{w}_{H^2}\le c'\norm{w}_{\Hs}$, and Riesz representation yields $R_x$. (iii) is again Sobolev embedding, now for $H^1(\Dom)$ in $d=2$.
\end{proof}

Part (iii) is the honest print on everything that follows: statements about $w$ itself can be made pointwise, statements about $w_k$ can be made in every $L^p$, and statements about $w_{kk}$ --- hence about the butterfly functional --- can only be made almost everywhere or after pairing with a test function. The paper keeps this discipline explicit rather than assuming smoothness it does not need.

\subsection{Price coordinates and the Black--Scholes map}\label{sec:pricecoords}

For $w>0$ define the normalized Black--Scholes call price at $(k,\tau)$ by
\begin{equation}\label{eq:bs}
c \;=\; B(k,w) \;\coloneqq\; N(d_+) - e^{k} N(d_-), \qquad d_\pm \;=\; -\frac{k}{\sqrt{w}} \pm \frac{\sqrt{w}}{2},
\end{equation}
where $N$ is the standard normal distribution function; $c$ is the forward-discounted call price divided by the forward. For each fixed $k$, $w\mapsto B(k,w)$ is smooth and strictly increasing,
\begin{equation}\label{eq:vega}
\partial_w B(k,w) \;=\; \frac{n(d_+)}{2\sqrt{w}} \;>\; 0,
\end{equation}
with $n=N'$, so $B(k,\cdot)$ is a diffeomorphism from $(0,\infty)$ onto its image $\bigl((1-e^k)_+,\,1\bigr)$. Composing pointwise, $B$ maps total-variance fields to normalized call-price fields; we write $c=B[w]$ and $w=B^{-1}[c]$ for the induced maps on functions and record that both preserve $H^2$-regularity on sets where $w$ is bounded away from $0$ and $\infty$ (all coefficient functions of the chain rule are then smooth and bounded). The two coordinate systems carry the same information; Section~\ref{sec:constraints} shows they carry very different geometry.

\section{Static no-arbitrage as a constraint set}\label{sec:constraints}

\subsection{The three constraint families}\label{sec:families}

Static arbitrage on the quoted window is excluded by three families of shape constraints on $w$ \citep{carr2005note,davis2007range,roper2010arbitrage,gatheral2014arbitrage}.

\paragraph{Positivity.} $w(k,\tau)>0$ on $\Dom$: total implied variance is positive wherever a vol is quoted.

\paragraph{Calendar monotonicity.} At fixed forward log-moneyness,
\begin{equation}\label{eq:calendar}
\partial_\tau w(k,\tau)\;\ge\;0 \qquad \text{a.e.\ on } \Dom.
\end{equation}
This is equivalent to the absence of calendar-spread arbitrage between maturities when strikes are compared at equal forward moneyness \citep[Lem.~2.1]{gatheral2014arbitrage}; the fixed-strike statement differs by the forward drift, a distinction that matters in data construction (Section~\ref{sec:empirics}) but not in the geometry.

\paragraph{Butterfly positivity.} For each fixed $\tau$, define the Gatheral--Jacquier functional
\begin{equation}\label{eq:gfun}
g[w](k,\tau) \;=\; \Bigl(1-\frac{k\,w_k}{2w}\Bigr)^{\!2} \;-\; \frac{w_k^2}{4}\Bigl(\frac{1}{w}+\frac14\Bigr) \;+\; \frac{w_{kk}}{2}.
\end{equation}
On slices with enough regularity, the implied risk-neutral density of $\log(S_\tau/F)$ is
\begin{equation}\label{eq:density}
p(k,\tau) \;=\; \frac{g[w](k,\tau)}{\sqrt{2\pi\, w(k,\tau)}}\, \exp\!\Bigl(-\tfrac12 d_-(k,w(k,\tau))^2\Bigr),
\end{equation}
so $g[w]\ge 0$ is equivalent to butterfly-arbitrage-freeness of the slice \citep[Sec.~2]{gatheral2014arbitrage}. For $w\in\Hs$ with $w\ge m>0$, the right-hand side of \eqref{eq:gfun} is well defined as an element of $L^2(\Dom)$: the term $w_{kk}/2$ is $L^2$ by definition of $\Hs$, and the lower-order terms lie in $L^2$ by Proposition~\ref{prop:embedding}(iii) and the uniform bounds $m\le w\le \norm{w}_{C^0}$. The constraint $g[w]\ge 0$ is imposed almost everywhere. The ratio $\partial_\tau w/g[w]$ is Dupire's local variance wherever the denominator is positive, which makes the same two constraint functionals dynamically observable through the local-volatility chart.

\paragraph{Wings.} The moment bounds of \citet{lee2004moment} constrain $w_k$ as $|k|\to\infty$. On the compact window $\Dom$ they are dominated by the butterfly constraint and are omitted from $\Ksurf$; they reappear as extrapolation discipline in Section~\ref{sec:empirics}.

\subsection{The admissible set and its topology}\label{sec:admissible}

Because positivity with strict inequality does not define a closed set, we work throughout with a variance floor $m>0$ and define
\begin{equation}\label{eq:Km}
\Ksurf_m \;=\; \bigl\{\, w\in\Hs \;:\; w\ge m \text{ on } \overline{\Dom},\;\; \partial_\tau w\ge 0 \text{ a.e.},\;\; g[w]\ge 0 \text{ a.e.} \,\bigr\},
\qquad \Ksurf \;=\; \bigcup_{m>0}\Ksurf_m .
\end{equation}
The floor is economically innocuous on $\Dom$: with $\tau_{\min}>0$ and any positive lower bound on quoted vols, realized surfaces satisfy $w\ge m$ for some $m>0$. All analytic statements below are for $\Ksurf_m$ with fixed $m>0$.

\begin{proposition}[Closedness; \tier{Proved}]\label{prop:closed}
For every $m>0$, $\Ksurf_m$ is closed and weakly sequentially closed in $\Hs$.
\end{proposition}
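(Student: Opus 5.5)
The plan is to prove weak sequential closedness directly; strong closedness then follows at once, since strongly convergent sequences converge weakly. Let $w_n\in\Ksurf_m$ with $w_n\rightharpoonup w$ weakly in $\Hs$. Because the $\Hs$-norm is equivalent to the $H^2$-norm, weak convergence in $\Hs$ is the same as weak convergence in $H^2(\Dom)$. The sequence is bounded by the uniform boundedness principle. By the compact embeddings of Proposition~\ref{prop:embedding}(i), we get $w_n\to w$ uniformly on $\overline{\Dom}$ and $w_n\to w$ strongly in $H^s$ for any $s\in(1,2)$. In particular $\partial_k w_n\to\partial_k w$ and $\partial_\tau w_n\to\partial_\tau w$ strongly in $L^2(\Dom)$. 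Since $H^s\hookrightarrow W^{1,p}$ in dimension two with $p=2/(2-s)$, by choosing $s$ close to $2$ these first derivatives also converge in $L^p$ for any finite $p$ we need. The second derivatives satisfy only $\partial_k^2 w_n\rightharpoonup\partial_k^2 w$ weakly in $L^2$.

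The three constraints are then handled in turn. The floor $w\ge m$ passes to the limit pointwise by uniform convergence. For calendar monotonicity, the set $\{f\in L^2:f\ge 0\text{ a.e.}\}$ is closed and convex in $L^2$, hence weakly closed. Since $\partial_\tau w_n\to\partial_\tau w$ in $L^2$, even strongly, the limit satisfies $\partial_\tau w\ge 0$ a.e.

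The butterfly constraint is the main obstacle, because $g$ is nonlinear and $w_{kk}$ converges only weakly. The key observation is that $g$ is affine in $w_{kk}$. Write $g[w]=G(k,w,w_k)+\tfrac12 w_{kk}$, where
\[
G(k,w,w_k)=\Bigl(1-\frac{k w_k}{2w}\Bigr)^{2}-\frac{w_k^2}{4}\Bigl(\frac1w+\frac14\Bigr)
\]
is a polynomial in $w_k$ whose coefficients are smooth in $(k,1/w)$. Since $w_n\ge m$ and $\sup_n\norm{w_n}_{C^0}<\infty$, the coefficients converge uniformly. The terms $w_{n,k}^2$ converge in $L^2$ because $w_{n,k}\to w_k$ in $L^4$. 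Hence $G(k,w_n,w_{n,k})\to G(k,w,w_k)$ strongly in $L^2$. It follows that $g[w_n]\rightharpoonup g[w]$ weakly in $L^2(\Dom)$.

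Now pair with an arbitrary nonnegative $\varphi\in L^2(\Dom)$. We have $\int g[w_n]\varphi\ge0$, and weak convergence gives $\int g[w]\varphi\ge0$. Taking $\varphi$ to be the indicator of $\{g[w]<0\}$ then shows $g[w]\ge0$ a.e. Therefore $w\in\Ksurf_m$, which proves weak sequential closedness and hence closedness.
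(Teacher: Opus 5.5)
Your proof is correct and follows essentially the same route as the paper: the compact embeddings give uniform convergence of $w_n$ and strong $L^q$ convergence of the first derivatives, so the floor and calendar constraints pass to the limit directly. For the butterfly constraint you split $g$ into the linear top-order term $\tfrac12 w_{kk}$, which converges weakly, and a lower-order part converging strongly in $L^2$, so $g[w_n]\rightharpoonup g[w]$ and nonnegativity is preserved. The differences are cosmetic: you get first-derivative convergence via $H^2\hookrightarrow\hookrightarrow H^s\hookrightarrow W^{1,p}$ instead of Rellich--Kondrachov on $\nabla w_n$, use strong rather than weak $L^2$ convergence for the calendar constraint, and make weak closedness of the nonnegative cone explicit by testing against an indicator function.
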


\begin{proof}
Weak sequential closedness implies norm closedness, so we prove the former. Let $w_n\in\Ksurf_m$ with $w_n\rightharpoonup w$ in $\Hs$. By Proposition~\ref{prop:embedding}(i) the embeddings $H^2\hookrightarrow\hookrightarrow C^0(\overline{\Dom})$ and $H^2\hookrightarrow\hookrightarrow H^1$ are compact, so along the full sequence $w_n\to w$ in $C^0(\overline{\Dom})$ and $\nabla w_n\to\nabla w$ in $L^q(\Dom)$ for every $q<\infty$ (Rellich--Kondrachov applied to $\nabla w_n$ bounded in $H^1$), while $\partial^\beta w_n\rightharpoonup\partial^\beta w$ weakly in $L^2$ for $|\beta|=2$.

\emph{Floor.} $w_n\ge m$ and uniform convergence give $w\ge m$.

\emph{Calendar.} $\partial_\tau w_n\rightharpoonup\partial_\tau w$ weakly in $L^2$; the cone $\{u\in L^2: u\ge 0 \text{ a.e.}\}$ is convex and norm-closed, hence weakly closed; so $\partial_\tau w\ge 0$ a.e.

\emph{Butterfly.} Decompose $g[w_n]=\ell_n+q_n$ with $\ell_n=\tfrac12\partial^2_k w_n$ and $q_n$ the lower-order part of \eqref{eq:gfun}. We claim $q_n\to q$ strongly in $L^2$: indeed $1/w_n\to 1/w$ uniformly (denominators $\ge m$), $w_{n,k}\to w_k$ in $L^4$, hence $w_{n,k}^2\to w_k^2$ in $L^2$ and $k\,w_{n,k}/(2w_n)\to k\,w_k/(2w)$ in $L^4$, so the square converges in $L^2$; products of an $L^2$-convergent factor with a uniformly convergent bounded factor converge in $L^2$. Meanwhile $\ell_n\rightharpoonup\ell$ weakly in $L^2$. Therefore $g[w_n]\rightharpoonup g[w]$ weakly in $L^2$, and the weakly closed cone $\{u\ge 0\}$ again yields $g[w]\ge 0$ a.e.
\end{proof}

Weak closedness is what makes $\Ksurf_m$ usable: bounded minimizing sequences for projection and calibration problems have weak limit points that remain admissible, so metric projections onto $\Ksurf_m$ exist (uniqueness fails in general precisely because of the nonconvexity established next).

\subsection{Convexity in price coordinates, nonconvexity in variance coordinates}\label{sec:convexity}

Map the constraints through \eqref{eq:bs}. Writing $c(k,\tau)=B(k,w(k,\tau))$ and using $\partial_K = K^{-1}\partial_k$ for actual strike $K=Fe^{k}$, the classical characterization of arbitrage-free normalized call prices on a compact window \citep{carr2005note,davis2007range,roper2010arbitrage} reads: $(1-e^k)_+\le c\le 1$; $\;\partial_\tau c\ge 0$ a.e.; and convexity in strike, which in log-moneyness coordinates is the linear differential inequality
\begin{equation}\label{eq:convexk}
(\partial_{kk} - \partial_k)\, c \;\ge\; 0 \qquad \text{a.e.}
\end{equation}
Define $\Kc$ as the set of $c\in\Hs$ satisfying these conditions.

For $0<m<\Mcap<\infty$, define the bounded variance band and its price image
\begin{equation}\label{eq:bandedcharts}
 \Ksurf_{m,\Mcap}\coloneqq\{w\in\Ksurf_m:w\le \Mcap\},\qquad
 \Kcmbar\coloneqq\Kc\cap
 \{B(k,m)\le c(k,\tau)\le B(k,\Mcap)\text{ on }\overline\Dom\}.
\end{equation}
The upper band is not cosmetic: the lower price floor excludes the intrinsic
boundary, while $c\le B(k,\Mcap)<1$ excludes the upper wall where Black--Scholes
vega vanishes and $B^{-1}$ ceases to be uniformly regular.

\begin{proposition}[Exact correspondence of the variance and price charts;
\tier{Proved}]
\label{prop:chartcorrespondence}
For every $0<m<\Mcap<\infty$, the pointwise Black--Scholes map is a smooth
bijection
\[
 B:\Ksurf_{m,\Mcap}\longrightarrow\Kcmbar,
 \qquad B^{-1}:\Kcmbar\longrightarrow\Ksurf_{m,\Mcap},
\]
and both maps are restrictions of smooth Nemytskii maps on $H^2$-neighborhoods
of the displayed bands. Moreover, for $c=B[w]$,
\begin{equation}\label{eq:chartidentities}
 \partial_\tau c=B_w(k,w)\,\partial_\tau w,
 \qquad
 (\partial_{kk}-\partial_k)c=\frac{n(d_+)}{\sqrt w}\,g[w].
\end{equation}
Thus calendar and butterfly inequalities are equivalent in the two charts.
The set $\Kcmbar$ is closed, convex, and weakly closed in $\Hs$.
\end{proposition}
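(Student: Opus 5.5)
The plan is to prove the two identities \eqref{eq:chartidentities} first, since every other claim rests on them, then the regularity of the Nemytskii maps, and finally the convexity and closedness of $\Kcmbar$.

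\emph{Step 1: the identities.} The first identity is the chain rule applied to $c(k,\tau)=B(k,w(k,\tau))$ in $\tau$. For the second, expand
\[
 c_k=B_k+B_w w_k,\qquad c_{kk}=B_{kk}+2B_{kw}w_k+B_{ww}w_k^2+B_w w_{kk}.
\]
I then substitute the closed-form Black--Scholes Greeks in $(k,w)$. Using $n(d_-)e^{k}=n(d_+)$, these are
\[
 B_k=-e^kN(d_-),\qquad B_w=\frac{n(d_+)}{2\sqrt w},
\]
\[
 B_{kk}-B_k=B_w\cdot\frac{2}{\sqrt w}\cdot\sqrt w\cdot\frac{1}{w}\cdot w,\qquad
 B_{kw}=B_w\Bigl(\frac{k}{w}-\frac{1}{2}\Bigr)\frac{1}{1},\qquad
 B_{ww}=B_w\Bigl(\frac{k^2}{2w^2}-\frac{1}{2w}-\frac18\Bigr).
\]
The exact constants are checked by differentiating $\log n(d_+)$. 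Grouping terms gives
\[
 (\partial_{kk}-\partial_k)c=B_w\bigl[2(1-\tfrac{kw_k}{2w})^2\cdot\ldots\bigr]=2B_w\,g[w],
\]
and the full expansion of the square reproduces \eqref{eq:gfun}. This is the Gatheral--Jacquier computation, and I would cite \citet{gatheral2014arbitrage} for the algebra while displaying the key Greeks. Since $2B_w=n(d_+)/\sqrt w$, the displayed formula follows.

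These identities hold a.e. for $w\in H^2$ with $w\ge m$. The chain rule for Sobolev functions composed with smooth $B$ applies because $w$ is continuous and bounded, and $w_k\in L^4$ makes the quadratic terms $L^2$, exactly as in the proof of Proposition~\ref{prop:closed}.

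\emph{Step 2: the bijection and its regularity.} Pointwise, $B(k,\cdot)$ is a strictly increasing diffeomorphism of $[m,\Mcap]$ onto $[B(k,m),B(k,\Mcap)]$, with $B_w\ge\beta>0$ uniformly on the compact set $[k_{\min},k_{\max}]\times[m,\Mcap]$. Positivity of $B_w$ means the calendar inequalities correspond, and positivity of $n(d_+)/\sqrt w$ means the butterfly inequalities correspond. The floor and cap bands map onto each other by monotonicity. I would also check that $\Kcmbar$'s intrinsic bounds $(1-e^k)_+\le c\le 1$ are implied automatically by the band, since $B(k,m)>(1-e^k)_+$ and $B(k,\Mcap)<1$.

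For smoothness, I would take open neighborhoods $\{m/2<w<2\Mcap\}$, which are $C^0$-open and hence $H^2$-open by Proposition~\ref{prop:embedding}(i). On these, $w\mapsto B(\cdot,w(\cdot))$ is a smooth Nemytskii operator on $H^2$ in two dimensions. The point is that $H^2$ is not an algebra in the needed sense for second derivatives, but the terms $\partial^\beta$ of the composition have the form $(\text{smooth bounded})\cdot(w_{kk},\,w_k^2,\dots)$, which lie in $L^2$ by the $L^4$ bound. Fréchet differentiability would then follow by the same estimates applied to Taylor remainders. For $B^{-1}$, I would argue identically, using the inverse function theorem pointwise: $B^{-1}$ is smooth with derivative $1/B_w$, bounded on the price band because $B(k,\Mcap)<1$ keeps vega away from zero. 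This step, making the smooth-Nemytskii claim on $H^2$ honest for both maps, is the main obstacle; the rest is bookkeeping.

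\emph{Step 3: convexity and closedness of $\Kcmbar$.} Every defining condition of $\Kcmbar$ is linear in $c$:
\begin{itemize}
\item the pointwise bounds with fixed functions $B(k,m)$ and $B(k,\Mcap)$;
\item $\partial_\tau c\ge0$;
\item $(\partial_{kk}-\partial_k)c\ge0$.
\end{itemize}
Each therefore defines a convex set. Closedness follows as in Proposition~\ref{prop:closed}. The pointwise bounds are closed because point evaluation is continuous. The differential inequalities are preimages of the closed cone $\{u\ge0\}\subset L^2$ under bounded linear maps $\Hs\to L^2$. A closed convex set is weakly closed by Mazur's lemma.
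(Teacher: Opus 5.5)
Your proposal follows the same route as the paper. The shared ingredients are:
\begin{itemize}
\item the chain-rule expansion of $c_k,c_{kk}$, whose only quadratic term $B_{ww}w_k^2$ is controlled by $w_k\in L^4$;
\item uniform bounds on $B$, $B^{-1}$ and their derivatives, coming from the compact band;
\item strict positivity of the two multipliers;
\item closed convexity of $\Kcmbar$, as $\Kc$ intersected with an order interval between two fixed $H^2$ functions.
\end{itemize}

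One displayed Greek needs fixing. It should read $B_{kw}=B_w\,d_+/\sqrt w=B_w\bigl(\tfrac12-\tfrac kw\bigr)$, not $B_w\bigl(\tfrac kw-\tfrac12\bigr)$. With your sign, the grouping leaves a spurious term $B_w\bigl(\tfrac{4k}{w}-2\bigr)w_k$ and does not produce $g[w]$. With the correct sign you get $(2B_{kw}-B_w)w_k=-\tfrac{2k}{w}B_ww_k$. Combining this with $B_{kk}-B_k=2B_w$ (your garbled product does reduce to this) and your correct $B_{ww}$ gives
\[
(\partial_{kk}-\partial_k)c
=B_w\Bigl[2-\frac{2kw_k}{w}+\frac{k^2w_k^2}{2w^2}-\frac{w_k^2}{2w}-\frac{w_k^2}{8}+w_{kk}\Bigr]
=2B_w\,g[w].
\]

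Separately, your remark about $H^2$ not being an algebra is inaccurate. $H^2$ on a planar domain is a Banach algebra, since $2>d/2$. That fact makes the Nemytskii smoothness of $B$ and $B^{-1}$ easier to establish, not harder.
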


\begin{proof}
On the compact set
$[k_{\min},k_{\max}]\times[m,\Mcap]$, $B$ and its pointwise inverse are smooth,
$B_w$ is bounded above and away from zero, and all derivatives needed by the
$H^2$ chain rule are bounded. More explicitly,
\[
 c_k=B_k+B_ww_k,
 \qquad
 c_{kk}=B_{kk}+2B_{kw}w_k+B_{ww}w_k^2+B_ww_{kk},
\]
with analogous weak derivative formulae in $\tau$. The only quadratic term,
$B_{ww}w_k^2$, lies in $L^2$ because $w_k\in H^1\hookrightarrow L^4$.
The inverse direction has the same form, since the derivatives of $B^{-1}$
are bounded on the compact price band. Thus the associated Nemytskii maps
preserve $H^2$ and are smooth on neighborhoods of the bands. Strict
monotonicity in \eqref{eq:vega} gives the
pointwise bijection and the order bounds in \eqref{eq:bandedcharts}.
Differentiation in $\tau$ gives the first identity in
\eqref{eq:chartidentities}; the density calculation behind
\eqref{eq:density} gives the second. Both multipliers are strictly positive,
so the constraint signs are equivalent. Finally, $\Kcmbar$ is the intersection
of $\Kc$ with an order interval between two fixed $H^2$ functions. Each set is
closed and convex, hence so is the intersection; norm-closed convex subsets of
a Hilbert space are weakly closed.
\end{proof}

\begin{proposition}[Coordinate asymmetry; \tier{Proved}]\label{prop:convexity}
\begin{enumerate}[label=(\roman*)]
\item $\Kc$ is a closed convex subset of $\Hs$, and it is weakly closed.
\item On $\Dom=[-2,2]\times[\tau_{\min},\tau_{\max}]$ and for $m=10^{-3}$, the total-variance set $\Ksurf_m$ is not convex.
\end{enumerate}
\end{proposition}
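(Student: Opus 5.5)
For part (i) I would write $\Kc$ as an intersection of sets that are each closed and convex in $\Hs$. There are four of them: the order interval $\{(1-e^k)_+\le c\le 1 \text{ on } \overline\Dom\}$, the calendar cone $\{\partial_\tau c\ge 0 \text{ a.e.}\}$, and the convexity cone $\{(\partial_{kk}-\partial_k)c\ge0 \text{ a.e.}\}$.

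\emph{Order interval.} Point evaluation $\delta_x$ is bounded by Proposition~\ref{prop:embedding}(ii). Each pointwise condition $c(x)\ge (1-e^{k})_+$ or $c(x)\le 1$ is therefore a closed half-space, and the interval is an intersection of closed half-spaces.

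\emph{Cones.} The maps $c\mapsto\partial_\tau c$ and $c\mapsto(\partial_{kk}-\partial_k)c$ are bounded linear operators $\Hs\to L^2(\Dom)$. The cone $\{u\in L^2:u\ge0\text{ a.e.}\}$ is convex and norm-closed, so its preimages under these operators are closed convex cones.

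The intersection is then closed and convex, and Mazur's theorem gives weak closedness. This mirrors the last step of the proof of Proposition~\ref{prop:chartcorrespondence}, with no band needed, because every constraint is linear in $c$.

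For part (ii) I would exhibit two explicit surfaces $w_0,w_1\in\Ksurf_m$ whose midpoint violates the butterfly constraint. I take both to be $\tau$-independent, or linear increasing in $\tau$ with the same slope, so that calendar monotonicity and the floor are preserved under averaging. The candidates are affine-in-$k$ smiles $w_i(k)=a+b_i k$ with opposite skews $b_1=-b_0=b$; the midpoint is the flat smile $a$, for which $g\equiv1$. That pair fails to produce a violation, which shows $g$ must be exploited through its nonlinearity in the \emph{other} direction. So I would instead look for two admissible surfaces whose average makes $g$ negative.

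The concrete choice is a linear-in-$k$ pair at \emph{different levels}: $w_0=a_0+b_0k$ and $w_1=a_1+b_1k$, both satisfying $g\ge0$ on $[-2,2]$. For affine $w$ one has $w_{kk}=0$, so
\[
g=\Bigl(1-\frac{kb}{2w}\Bigr)^2-\frac{b^2}{4}\Bigl(\frac1w+\frac14\Bigr),
\]
and I would use the identity
\[
1-\frac{kb}{2w}=\frac{2a+bk}{2w}.
\]
The key observation is that $g$ at $k=0$ equals $1-\tfrac{b^2}{4}(1/a+1/4)$. This is increasing in $a$, while the skew $b$ averages linearly. Taking $w_0$ with small level and small slope, and $w_1$ with large level and large slope, both points are admissible at $k=0$, but the midpoint has slope $(b_0+b_1)/2$ against level $(a_0+a_1)/2$. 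Because of the $1/a$ nonlinearity this can fail. For example, with $b_0=0$ and small $a_0$, and with $a_1$ large and $b_1$ just below the admissible threshold $\approx 4$, the midpoint has slope near $2$ and level $a_1/2$. Then $g(0)\approx 1-1/a_1\cdot 2-1/4\cdot 1$.

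I would tune the numbers, or add a small convex quadratic term $\varepsilon k^2$ to keep the endpoints strictly admissible on all of $[-2,2]$ while also respecting the floor $10^{-3}$. I would then verify $g\ge0$ for the endpoints across the whole interval by checking the explicit rational function. If the $k=0$ approach does not yield a violation, I would fall back on a known non-convexity example, such as two SVI slices from \citet{gatheral2014arbitrage}.

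The main obstacle is finding parameters for which both endpoints are admissible on all of $[-2,2]$ and not merely at $k=0$. A steep linear skew makes $2a+bk$ small at one wing, which depresses the first square. I expect to resolve this with a case check of the explicit quartic numerator, possibly supported by a certified numerical evaluation on a grid plus a Lipschitz bound.
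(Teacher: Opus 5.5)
Your part (i) is the paper's argument: bounded linear constraint maps into $C^0$ or $L^2$, preimages of closed convex cones and half-spaces, intersection, then Mazur. Nothing needs adding there.

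Part (ii) has a genuine gap. No certificate is produced, and the mechanism you propose provably cannot produce one. At $k=0$,
\[
g[w](0,\tau)=1-\frac{w_k^2}{4w}-\frac{w_k^2}{16}+\frac{w_{kk}}{2}.
\]
The map $(x,y)\mapsto y^2/x$ is jointly convex on $x>0$, and $(w,w_k,w_{kk})$ at a fixed point are linear in $w$, so this expression is concave in $w$. Hence $g[\bar w](0,\tau)\ge\tfrac12\bigl(g[w_0](0,\tau)+g[w_1](0,\tau)\bigr)$ for every admissible pair: the ``$1/a$ nonlinearity'' works against a violation, not for it. Your own estimate shows this, since $g(0)\approx\tfrac34-\tfrac{2}{a_1}$ is positive for exactly the large $a_1$ you need to push $b_1$ toward $4$.

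The same splitting shows where nonconvexity must come from. Every term of $g$ except $A^2$, with $A=1-kw_k/(2w)$, is concave in $w$. At the midpoint, $\bar A=\pi_0A_0+\pi_1A_1$ with weights $\pi_i=w_i/(w_0+w_1)$ taken at the same $k$. A violation therefore requires $\bar A^2$ to collapse. That happens when $A_0$ and $A_1$ have opposite signs at some $k\neq0$ and the slice with large $|A|$ has small $w$: it gets small weight in $\bar A$, but half its slope still enters $\bar w_k$.

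Affine slices cannot do this, because $A=\tfrac12+\tfrac{w(0)}{2w(k)}>\tfrac12$. In fact one can check that the admissible affine slices on $[-2,2]$ form a convex set. For $b>0$, each constraint with $k\ge0$ reads $b\le 2u/(k+\sqrt{u+u^2/4})$ with $u=a+bk$, which is a concave bound in $u$. The constraints with $k<0$ are implied by the floor and the $k=0$ constraint. So tuning $(a_i,b_i)$ will not help, and adding $\varepsilon k^2$ cannot remove the $k=0$ obstruction.

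The paper supplies exactly the missing ingredient. It takes two SVI slices with minima displaced to $k>0$, one of them very thin near $k\approx1.5$. At $k\approx1.75$ this gives $A_1\approx0.50$ and $A_2\approx-2.92$, with weights $\approx0.83$ and $0.17$. Hence $\bar A\approx-0.08$, while the remaining terms of $g[\bar w]$ contribute about $-0.59$. Admissibility of both slices on all of $[-2,2]$ (floor and $\inf g>0$) and the midpoint bound are then certified by outward-rounded interval arithmetic.

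Your fallback to ``a known SVI example'' names the right family but gives neither parameters nor a verification. That window-wide verification is the actual content of (ii).
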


\begin{proof}
(i) Each defining condition of $\Kc$ is an affine inequality composed with a bounded linear map on $\Hs$ ($c\mapsto c$ pointwise, $c\mapsto\partial_\tau c$, $c\mapsto(\partial_{kk}-\partial_k)c$, each into $C^0$ or $L^2$), followed by intersection with a closed convex cone or order interval. Intersections of closed convex sets are closed convex; convex plus norm-closed implies weakly closed.

(ii) The map $B$ of \eqref{eq:bs} is strictly monotone but nonlinear in $w$, with second derivative
\begin{equation}\label{eq:vomma}
\partial^2_w B(k,w) \;=\; \frac{n(d_+)}{2\sqrt{w}}\;\Bigl(\frac{d_+ d_-}{2w} \;-\; \frac{1}{2w}\Bigr)
\;=\; \frac{n(d_+)}{4 w^{3/2}}\,\bigl(d_+ d_- - 1\bigr),
\end{equation}
whose sign changes across $\Dom$. Here is an explicit certificate. Let
\[
w_i(k)=a_i+b_i\{\rho_i(k-\mu_i)+[(k-\mu_i)^2+s_i^2]^{1/2}\},
\]
independent of $\tau$, with
\begin{align*}
(a_1,b_1,\rho_1,\mu_1,s_1)&=(1.18496492,0.76812442,0.57526503,1.00347025,0.1311388577407955),\\
(a_2,b_2,\rho_2,\mu_2,s_2)&=(-0.06074514,1.80133337,0.07870247,1.49987078,0.034545633915444976).
\end{align*}
Directed-rounding interval arithmetic on $[-2,2]$ gives
\[
\inf w_1\ge1.2672888,\quad \inf g[w_1]\ge0.0037007,
\qquad
\inf w_2\ge0.0012673,\quad \inf g[w_2]\ge0.0029721.
\]
Thus both slices lie in $\Ksurf_{10^{-3}}$ (their calendar derivative is zero),
whereas for $\bar w=(w_1+w_2)/2$,
\[
g[\bar w](1.75010)\le-0.5871470.
\]
Hence $\bar w\notin\Ksurf_{10^{-3}}$. The standard-library verifier
\texttt{verify\_svi\_nonconvexity.py} ships with the source and encloses every
operation using 50-digit decimal arithmetic with outward rounding on 25,000
subintervals; the displayed margins are interval bounds, not sampled minima.
\end{proof}

For every $\Mcap\ge\max_{i=1,2}\sup_{[-2,2]}w_i$, the same certificate lies
inside the bounded band and proves that $\Ksurf_{10^{-3},\Mcap}$ is nonconvex.
Proposition~\ref{prop:chartcorrespondence} therefore gives a smooth
homeomorphism from this nonconvex variance set onto the convex price set
$\Kcmbar$: the chart, rather than a relaxation, absorbs the nonconvexity.

The asymmetry of Proposition~\ref{prop:convexity} organizes the rest of the paper: whenever convexity is needed (projections with uniqueness, reflected dynamics, feasibility of quadratic programs), we pass to price coordinates; whenever spectral and hedging statements are needed in the coordinates the desk quotes, we work in $w$ and keep the geometry nonlinear.

\begin{figure}[htbp]
\centering
\includegraphics[width=0.98\textwidth]{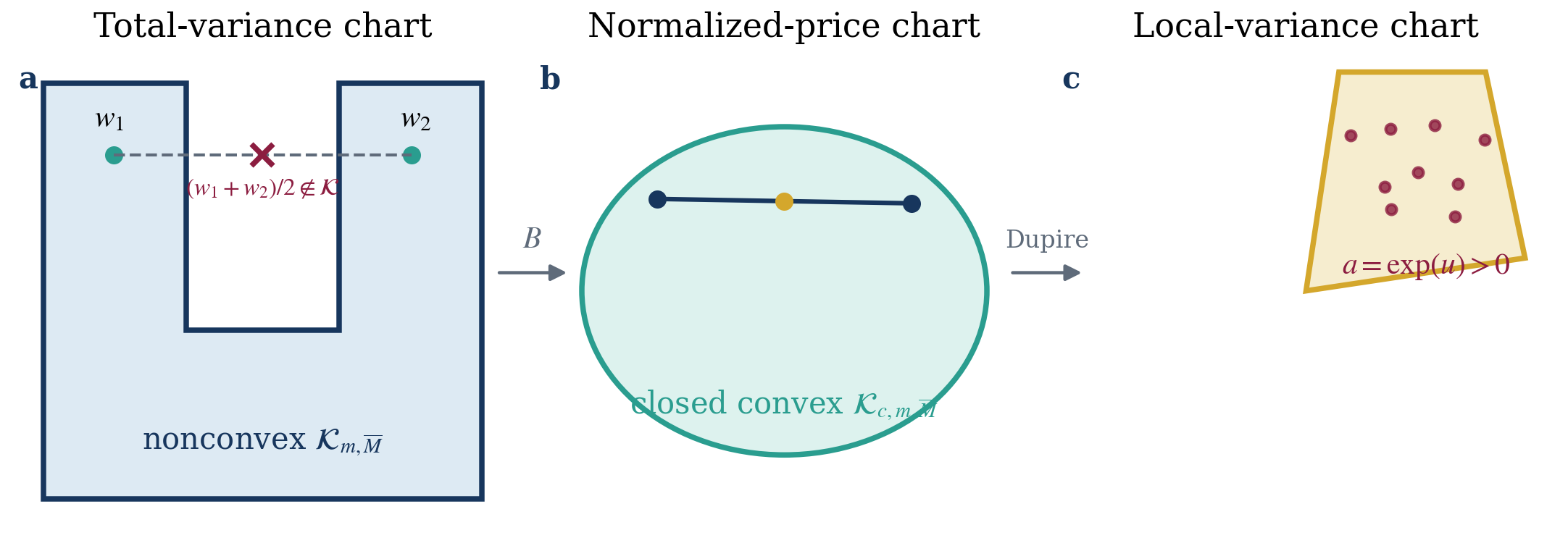}
\caption{Schematic geometry of the three surface charts. The total-variance
chart can be nonconvex (left), while the normalized-price chart is convex on a
fixed variance band (center) and the local-variance chart is the positive cone
(right). The left panel illustrates the midpoint obstruction; the rigorous
numerical certificate is Proposition~\ref{prop:convexity}(ii). No market data
are plotted.}
\label{fig:threecharts}
\end{figure}

\section{First-order geometry: tangent cones and feasible directions}\label{sec:geometry}

\subsection{Active sets and the linearization cone}\label{sec:active}

Fix $w\in\Ksurf_m$. Define the active sets
\begin{equation}\label{eq:active}
A_{\mathrm{fl}}(w) = \{x\in\overline{\Dom}: w(x)=m\},\qquad
A_{\mathrm{cal}}(w) = \{\partial_\tau w = 0\},\qquad
A_{\mathrm{bf}}(w) = \{g[w]=0\},
\end{equation}
the latter two up to null sets. The \emph{contingent (Bouligand) cone} to $\Ksurf_m$ at $w$ is
\begin{equation}\label{eq:contingent}
T_{\Ksurf_m}(w) \;=\; \Bigl\{ h\in\Hs:\ \exists\, t_n\downarrow 0,\ h_n\to h \text{ in } \Hs \text{ with } w+t_n h_n\in\Ksurf_m \Bigr\},
\end{equation}
and the \emph{linearization cone} is
\begin{equation}\label{eq:lincone}
L(w) \;=\; \Bigl\{ h\in\Hs:\ h\ge 0 \text{ on } A_{\mathrm{fl}}(w),\ \ \partial_\tau h\ge 0 \text{ a.e.\ on } A_{\mathrm{cal}}(w),\ \ Dg[w]h\ge 0 \text{ a.e.\ on } A_{\mathrm{bf}}(w) \Bigr\},
\end{equation}
where $Dg[w]$ is the directional derivative of the butterfly functional computed next.

For localization of almost-everywhere constraints, a sharper closed set is
useful. If $0\le u\in L^2(\Dom)$, define its \emph{essential contact set}
\begin{equation}\label{eq:essentialcontact}
 Z_{\mathrm{ess}}(u)
 \coloneqq
 \left\{x\in\overline\Dom:
 \operatorname*{ess\,inf}_{B_r(x)\cap\Dom}u=0
 \text{ for every }r>0\right\}.
\end{equation}
This is closed, and if an open set $U$ contains $Z_{\mathrm{ess}}(u)$, then
compactness gives a $\gamma>0$ such that $u\ge\gamma$ a.e. on
$\Dom\setminus U$: each point of the complement has a ball with positive
essential infimum, and finitely many such balls suffice. For a continuous
nonnegative representative, $Z_{\mathrm{ess}}(u)=\{u=0\}$. We use
\eqref{eq:essentialcontact} only to localize uniform margins; it does not
replace the $L^2$-cone active set in \eqref{eq:lincone}, since a field may be
positive a.e. while having zero local essential infimum.

\begin{lemma}[Directional derivative of $g$; \tier{Proved}]\label{lem:Dg}
Let $w\in\Hs$ with $w\ge m>0$ and let $A=1-\dfrac{k\,w_k}{2w}$. For $h\in\Hs$,
\begin{equation}\label{eq:Dg}
Dg[w]h \;=\; -\,\frac{k\,A}{w^{2}}\,\bigl(w\,h_k - w_k\,h\bigr)
\;-\; \frac{w_k h_k}{2}\Bigl(\frac1w+\frac14\Bigr)
\;+\; \frac{w_k^{2}}{4w^{2}}\,h
\;+\; \frac{h_{kk}}{2},
\end{equation}
and $g:\{w\ge m\}\subset\Hs\to L^2(\Dom)$ is Fr\'echet differentiable with derivative \eqref{eq:Dg}, locally Lipschitz on bounded subsets of $\{w\ge m\}$.
\end{lemma}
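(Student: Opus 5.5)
The plan has two parts. First I compute the Gateaux derivative of \eqref{eq:gfun} term by term. Then I upgrade it to a Fr\'echet derivative by bounding the remainder in $L^2(\Dom)$, using the embeddings of Proposition~\ref{prop:embedding}.

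\emph{Formal derivative.} Write $g[w]=A^2-\tfrac14 w_k^2(w^{-1}+\tfrac14)+\tfrac12 w_{kk}$, where $A=1-kw_k/(2w)$, and differentiate along $w+th$.
\begin{itemize}
\item The variation of $A$ is $\delta A=-\tfrac{k}{2}\,(h_k w - w_k h)/w^2$. Hence $\delta(A^2)=2A\,\delta A=-\tfrac{kA}{w^2}(wh_k-w_kh)$.
\item The middle term contributes $-\tfrac12 w_kh_k(w^{-1}+\tfrac14)+\tfrac14 w_k^2 h/w^2$.
\item The last term is linear and gives $h_{kk}/2$.
\end{itemize}
These three pieces are exactly \eqref{eq:Dg}.

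\emph{$L^2$ bounds and Fr\'echet differentiability.} I treat $g$ as a composition of smooth maps between the following spaces:
\begin{itemize}
\item $w\mapsto w\in C^0$, with $w\mapsto 1/w$ smooth on $\{w\ge m\}$, where $m$ is the floor constant of \eqref{eq:Km};
\item $w\mapsto w_k\in L^4$ (via $H^1\hookrightarrow L^4$ from Proposition~\ref{prop:embedding}(iii));
\item $w\mapsto w_{kk}\in L^2$.
\end{itemize}
Products then close by H\"older's inequality. An $L^4\times L^4$ product lands in $L^2$, and $C^0$ factors act boundedly on $L^p$. Each term of $g$ is therefore a polynomial in $(w_k,k)$ of degree at most $2$ in $w_k$, with coefficients smooth in $1/w$. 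Such terms are continuous multilinear expressions from $C^0\times L^4\times L^4$ into $L^2$.

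Concretely, the remainder $g[w+h]-g[w]-Dg[w]h$ is a finite sum of terms with one of two forms:
\begin{itemize}
\item a quadratic term $\phi\,h_k^2$ or $\phi\,h_kh$, with $\phi$ bounded in $C^0$;
\item a term $\psi\, w_k^j\,R(h)$, with $|R(h)|\lesssim|h|^2$ pointwise, coming from Taylor expanding $1/(w+h)$ and using $w+h\ge m/2$ once $\norm{h}_{C^0}\le m/2$.
\end{itemize}
Each has $L^2$ norm bounded by $C\norm{h}_{\Hs}^2$, with $C$ depending on $m$ and $\norm{w}_{\Hs}$. This gives Fr\'echet differentiability. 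Note that the domain $\{w\ge m\}$ is not open, so I differentiate on the open neighbourhood $\{w> m/2\}$ and restrict.

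\emph{Local Lipschitz property.} Apply the same multilinear bounds to $Dg[w_1]-Dg[w_2]$. Each coefficient in \eqref{eq:Dg}, such as $kA/w^2$, $w_k/w$, $w_k^2/w^2$ and $w_k$, is Lipschitz in $w$ into $C^0$ or $L^4$ on bounded subsets of $\{w\ge m\}$. For $C^0$ coefficients this follows from $|1/w_1-1/w_2|\le m^{-2}|w_1-w_2|$.

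\emph{Main obstacle.} The delicate point is an integrability trap in terms like $(w_k^2/w^2)\,h$ and $A\,w_k h_k/w$: $w_k^2\in L^2$ only, so each product must be placed correctly.
\begin{itemize}
\item For $(w_k^2/w^2)\,h$, pair $w_k^2\in L^2$ with $h\in C^0$.
\item For $A\,w_k h_k/w$, note that $A\in L^4$ contains $w_k$, so $A w_k\in L^2$ must not be paired with $h_k\in L^4$. Instead use $L^8$: since $H^1\hookrightarrow L^p$ for all $p<\infty$, $A,w_k,h_k\in L^8$ would suffice only with a further factor. The clean route is to bound $A w_k h_k$ via $\norm{A}_{L^8}\norm{w_k}_{L^8}\norm{h_k}_{L^4}$, which lies in $L^2$ by H\"older because $1/8+1/8+1/4=1/2$.
\end{itemize}
I will systematically use these $L^p$ exponents for all $p<\infty$ from Proposition~\ref{prop:embedding}(iii) to balance every triple product.
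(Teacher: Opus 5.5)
Your proposal is correct and follows essentially the paper's route: the same term-by-term variation of \eqref{eq:gfun}, followed by a quadratic Taylor remainder controlled through $H^2\hookrightarrow C^0$ and $H^1\hookrightarrow L^4$, and your direct $L^2$ remainder bound and coefficient-difference Lipschitz estimate actually fit the lemma's $L^2$ claim more closely than Appendix~\ref{app:Dg}, which records only an $L^1$ remainder and defers the $L^2$ second variation and Lipschitz control to Lemma~\ref{lem:PhiC2}. Your ``main obstacle'' is illusory, however: since $g$ is quadratic in $w_k$, the term $2A\,\delta A$ yields only $A\,h_k$ and $A\,w_kh$, never $A\,w_kh_k$, and every term of the derivative, the remainder, and the coefficient differences has at most two first-derivative factors, so $L^4\cdot L^4\subset L^2$ with $C^0$ multipliers closes every estimate without any $L^8$ bookkeeping---this also covers remainder terms such as $k^2w_kh_kh/w^3$, which have an $L^4$ rather than $C^0$ coefficient and are missing from your two-form list.
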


The computation, together with the second-order remainder estimate that makes the Fr\'echet claim precise, is Appendix~\ref{app:Dg}. Note the structure of \eqref{eq:Dg}: the only term that sees the full $H^2$ topology is the linear term $h_{kk}/2$; all lower-order terms are continuous already in $H^1\cap C^0$. This split --- weak top order, strong lower order --- is the same one that drove the closedness proof, and it will drive the boundary theorem.

\subsection{First-order feasibility}\label{sec:feasibility}

\begin{proposition}[Feasible directions; \tier{Proved}]\label{prop:feasible}
Let $w\in\Ksurf_m\cap C^2(\overline\Dom)$ and let
$h\in\Hs\cap C^2(\overline{\Dom})$ satisfy, for some $\varepsilon>0$ and open
neighborhoods $U_{\mathrm{fl}},U_{\mathrm{cal}},U_{\mathrm{bf}}$ of
$A_{\mathrm{fl}}(w)$,
$Z_{\mathrm{ess}}(\partial_\tau w)$, and $Z_{\mathrm{ess}}(g[w])$,
respectively,
\[
h\ge\varepsilon \text{ on } U_{\mathrm{fl}},\qquad
\partial_\tau h\ge\varepsilon \text{ a.e.\ on } U_{\mathrm{cal}},\qquad
Dg[w]h\ge\varepsilon \text{ a.e.\ on } U_{\mathrm{bf}}.
\]
Then there is $t_0>0$ with $w+th\in\Ksurf_m$ for all $t\in[0,t_0]$. Consequently every such $h$ lies in $T_{\Ksurf_m}(w)$, and $T_{\Ksurf_m}(w)\subseteq L(w)$ always. \tier{Proved}
\end{proposition}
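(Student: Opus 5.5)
I will verify the three constraint families separately along the ray $w+th$, splitting $\overline\Dom$ in each case into the near-active neighborhood $U_\bullet$ and its complement, where a strict margin is available. With $w,h\in C^2(\overline\Dom)$ all quantities are continuous (or a.e.\ bounded), so every remainder is controlled uniformly rather than only in $L^2$. This is the reason for the $C^2$ hypothesis, since Lemma~\ref{lem:Dg} by itself gives only an $L^2$ remainder.

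The floor and calendar constraints are elementary. On $U_{\mathrm{fl}}$ we have $w+th\ge m+t\varepsilon\ge m$. The compact set $\overline\Dom\setminus U_{\mathrm{fl}}$ does not meet $A_{\mathrm{fl}}(w)$, so continuity gives $w\ge m+\gamma_{\mathrm{fl}}$ there for some $\gamma_{\mathrm{fl}}>0$. Hence $w+th\ge m$ once $t\norm{h}_{C^0}\le\gamma_{\mathrm{fl}}$. For the calendar constraint, $\partial_\tau(w+th)\ge t\varepsilon\ge0$ a.e.\ on $U_{\mathrm{cal}}$. The localization remark following \eqref{eq:essentialcontact} yields $\gamma_{\mathrm{cal}}>0$ with $\partial_\tau w\ge\gamma_{\mathrm{cal}}$ a.e.\ off $U_{\mathrm{cal}}$, and this dominates $t\norm{\partial_\tau h}_{C^0}$ for small $t$.

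The butterfly constraint is the step I expect to be the main obstacle, because $g$ is nonlinear. I will write the exact second-order Taylor expansion $g[w+th]=g[w]+t\,Dg[w]h+R_t$, where $R_t=\int_0^1(1-s)\,D^2g[w+sth](th,th)\,ds$. Every term of $g$ in \eqref{eq:gfun} is a rational function of $(k,w,w_k)$ with denominators powers of $w$, plus the linear term $w_{kk}/2$. The linear term contributes nothing to $R_t$. So $R_t$ involves only $h$ and $h_k$ and the coefficients $w$, $w_k$, evaluated along the segment. For $t\le t_1$ (with $t_1$ chosen so that $w+sth\ge m/2$), these coefficients are uniformly bounded on $\overline\Dom$. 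This gives the pointwise bound $|R_t|\le Ct^2$ with $C=C(m,\norm{w}_{C^1},\norm{h}_{C^1})$.

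With this bound the two regions close as before. On $U_{\mathrm{bf}}$, $g[w+th]\ge0+t\varepsilon-Ct^2\ge0$ a.e.\ for $t\le\varepsilon/C$. Off $U_{\mathrm{bf}}$, the essential-contact localization gives $g[w]\ge\gamma_{\mathrm{bf}}$ a.e. Moreover $Dg[w]h$ is bounded, since its coefficients are continuous and $h_{kk}$ is bounded. Hence $g[w+th]\ge\gamma_{\mathrm{bf}}-t\norm{Dg[w]h}_\infty-Ct^2\ge0$ for small $t$. Taking $t_0$ as the minimum of these finitely many thresholds gives $w+th\in\Ksurf_m$ for all $t\in[0,t_0]$. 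Membership $h\in T_{\Ksurf_m}(w)$ then follows from the constant sequence $h_n=h$ and any $t_n\downarrow0$ with $t_n\le t_0$.

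For the inclusion $T_{\Ksurf_m}(w)\subseteq L(w)$, I take $h_n\to h$ in $\Hs$ and $t_n\downarrow0$ with $w+t_nh_n\in\Ksurf_m$, and pass to the limit in the difference quotients. On $A_{\mathrm{fl}}$, $(w+t_nh_n-m)/t_n=h_n\ge0$ pointwise, and the $C^0$ convergence from Proposition~\ref{prop:embedding} gives $h\ge0$ there. On $A_{\mathrm{cal}}$, $\partial_\tau h_n\ge0$ a.e., and these functions converge in $L^2$, so the limit satisfies $\partial_\tau h\ge0$ a.e.\ on that set. On $A_{\mathrm{bf}}$, $(g[w+t_nh_n]-g[w])/t_n\ge0$ a.e. By the Fréchet differentiability of Lemma~\ref{lem:Dg}, this quotient converges in $L^2$ to $Dg[w]h$. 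Nonnegativity on a fixed set is closed under $L^2$ limits, so $Dg[w]h\ge0$ a.e.\ on $A_{\mathrm{bf}}$, and therefore $h\in L(w)$.
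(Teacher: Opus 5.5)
Your proof is correct and follows the paper's argument essentially step for step. You obtain margins off $U_{\mathrm{fl}},U_{\mathrm{cal}},U_{\mathrm{bf}}$ from the essential-contact localization, use the $C^2$ hypothesis to get a pointwise $O(t^2)$ butterfly remainder, and pass to the limit in the difference quotients for $T_{\Ksurf_m}(w)\subseteq L(w)$; the only differences are cosmetic. You derive the $L^\infty$ remainder directly from the integral Taylor formula, noting that it needs only $C^1$ control since $w_{kk}$ enters $g$ linearly, where the paper cites the $C^2$ form of the Appendix estimate; and you use strong $L^2$ convergence of the quotients where the paper invokes weak closedness of the nonnegative cone.
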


\begin{proof}
For smooth $w$, the continuous representatives make the essential contact
sets equal to their pointwise zero sets. The localization property following
\eqref{eq:essentialcontact} therefore gives a common $\gamma>0$ with
$w\ge m+\gamma$ off $U_{\mathrm{fl}}$,
$\partial_\tau w\ge\gamma$ off $U_{\mathrm{cal}}$, and
$g[w]\ge\gamma$ off $U_{\mathrm{bf}}$. The essential contact sets create the
inactive-region margin; the smooth-stratum assumption is still needed because
the nonlinear butterfly remainder must be controlled pointwise rather than
only in $L^1$ or $L^2$.
For the floor and calendar constraints, $w+th\ge m$ on $U_{\mathrm{fl}}$ for
$t\ge0$ and $\ge m+\gamma-t\norm{h}_{C^0}$ elsewhere; similarly for
$\partial_\tau(w+th)$. For the butterfly, Lemma~\ref{lem:Dg}
(Appendix~\ref{app:Dg}, estimate \eqref{eq:remainder}) gives
\[
g[w+th] \;=\; g[w] + t\,Dg[w]h + r_t,\qquad \norm{r_t}_{L^\infty} \le C_h\,t^2,
\]
with $C_h<\infty$ because $h\in C^2$ and $w+th\ge m$ for small $t$; on $U_{\mathrm{bf}}$ the right side is $\ge t\varepsilon - C_h t^2$ a.e., and off it $\ge \gamma - t\norm{Dg[w]h}_{L^\infty(\Dom\setminus U_{\mathrm{bf}})} - C_h t^2$, where the middle norm is finite again by $h\in C^2$ and Proposition~\ref{prop:embedding}. Choosing $t_0$ small makes all three families nonnegative. The inclusion $T_{\Ksurf_m}(w)\subseteq L(w)$ is the standard first-order necessary condition: if $w+t_nh_n\in\Ksurf_m$, pass to the limit in each constraint after dividing by $t_n$, using the differentiability of Lemma~\ref{lem:Dg} and the weak closedness of the nonnegative cones as in Proposition~\ref{prop:closed}.
\end{proof}

\begin{remark}[Constraint qualification; \tier{Conditional}]\label{rem:cq}
Equality $T_{\Ksurf_m}(w)=\overline{L(w)}$ holds under a Robinson-type constraint qualification: the existence of a single strictly feasible direction $\bar h$ as in Proposition~\ref{prop:feasible} \citep[Ch.~4]{aubin1990setvalued}. In price coordinates such a Slater direction exists at every $c\in\Kc$ with room above the intrinsic bound (push $c$ toward a fixed strictly arbitrage-free interior surface); pulling it back through $B^{-1}$ produces a candidate $\bar h$ in variance coordinates. We tier the equality \tier{Conditional} on the verification that the pullback direction retains strict margins for the nonlinear functional $Dg[w]$ at every boundary point of interest; the verification is mechanical at any given $w$ and is part of the projected-innovation implementation of Section~\ref{sec:constructions}.
\end{remark}

\section{The local-volatility chart: Dupire as constraint geometry}\label{sec:localchart}

The implied-variance and price charts have a third companion. Define the local
variance field $a=\sigma_{\mathrm{loc}}^2$ and the positive cone
$\Ka=\{a\in C(\overline\Dom):a\ge0\}$. The following identity makes this chart
intrinsic to the constraint geometry rather than an auxiliary pricing device.

\begin{theorem}[Dupire in constraint coordinates; \tier{Proved}]
\label{thm:dupireconstraint}
Let $w\in H^s_\rho(\Dom)$, $s\ge3$, with $w,w_k,w_{kk},w_\tau$ continuous on
$\overline\Dom$ and with uniform margins $w\ge m>0$ and $g[w]\ge\gamma>0$.
Let $c(k,\tau)=B(k,w(k,\tau))$ be its normalized call-price surface. In the
forward-normalized, zero-carry convention \citep{dupire1994pricing}, Dupire
local variance satisfies
\begin{equation}\label{eq:dupireratio}
 \boxed{\quad a(k,\tau)=\frac{\partial_\tau w(k,\tau)}{g[w](k,\tau)}\quad}.
\end{equation}
Equivalently, the price chart solves the forward equation
\begin{equation}\label{eq:dupireforward}
 \partial_\tau c=\frac12a(\partial_{kk}-\partial_k)c.
\end{equation}
\end{theorem}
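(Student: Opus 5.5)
The plan is to prove the forward equation \eqref{eq:dupireforward} directly from the chart identities of Proposition~\ref{prop:chartcorrespondence}, and then read off \eqref{eq:dupireratio} by division. First I will fix the convention. In forward-normalized, zero-carry coordinates, with normalized strike $x=e^k$ and $C(x,\tau)=c(\log x,\tau)$, Dupire's equation reads $\partial_\tau C=\tfrac12 a\,x^2\partial_{xx}C$. The change of variables $\partial_x=x^{-1}\partial_k$ gives $x^2\partial_{xx}=\partial_{kk}-\partial_k$, so Dupire's forward equation in log-moneyness is exactly \eqref{eq:dupireforward}. Local variance is then \emph{defined} by
\[
a=\frac{2\,\partial_\tau c}{(\partial_{kk}-\partial_k)c}
\]
wherever the denominator is positive. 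Under the stated regularity ($w\in H^s$, $s\ge3$, with $w,w_k,w_{kk},w_\tau$ continuous), the composite $c=B(k,w)$ has continuous $c_\tau,c_k,c_{kk}$ on $\overline\Dom$. This is because $B$ is smooth on $[k_{\min},k_{\max}]\times[m,\sup w]$, so every expression holds pointwise rather than merely a.e. This is the only place the extra smoothness beyond $H^2$ is used.

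Second, I substitute the two chart identities \eqref{eq:chartidentities}: $\partial_\tau c=B_w\,w_\tau=\frac{n(d_+)}{2\sqrt w}w_\tau$ by \eqref{eq:vega}, and $(\partial_{kk}-\partial_k)c=\frac{n(d_+)}{\sqrt w}g[w]$. The common factor $n(d_+)/\sqrt w$ is strictly positive and continuous, and $g[w]\ge\gamma>0$ makes the denominator bounded away from zero. The ratio is therefore
\[
a=\frac{2\cdot\tfrac{n(d_+)}{2\sqrt w}\,w_\tau}{\tfrac{n(d_+)}{\sqrt w}\,g[w]}=\frac{w_\tau}{g[w]},
\]
which is continuous and nonnegative whenever calendar monotonicity holds. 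Multiplying back gives \eqref{eq:dupireforward} pointwise, which establishes the equivalence of the two displayed forms.

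The main obstacle is the second chart identity, which Proposition~\ref{prop:chartcorrespondence} only attributes to ``the density calculation.'' I would verify it explicitly rather than cite it. Expanding $c_{kk}-c_k=(B_{kk}-B_k)+(2B_{kw}-B_w)w_k+B_{ww}w_k^2+B_ww_{kk}$, I will compute each Black--Scholes Greek using $n(d_-)=e^{-k}n(d_+)$. These are: $B_k=-e^kN(d_-)$; $B_{kk}-B_k=e^kn(d_-)/\sqrt w=n(d_+)/\sqrt w$; $B_{kw}=n(d_+)\,d_-/(2w)\cdot(\ldots)$ obtained by differentiating $B_w=n(d_+)/(2\sqrt w)$ in $k$ via $\partial_kd_+=-1/\sqrt w$; and $B_{ww}$ from \eqref{eq:vomma}. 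After factoring out $n(d_+)/\sqrt w$, the coefficients should collapse to $1-\frac{kw_k}{w}+\frac{k^2w_k^2}{4w^2}-\frac{w_k^2}{4}(\frac1w+\frac14)+\frac{w_{kk}}2$, which is $g[w]$ from \eqref{eq:gfun}. Concretely, the $w_k$ coefficient is $(d_+/\sqrt w-1/2)\cdot$ with $d_+=-k/\sqrt w+\sqrt w/2$, giving $-k/w$. The $w_k^2$ coefficient is $(d_+d_--1)/(4w)$ with $d_+d_-=k^2/w-w/4$. This bookkeeping is where sign errors are likely, so I will check it term by term.
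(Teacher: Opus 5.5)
Your proposal is correct and follows the paper's proof: both obtain $\partial_\tau c$ from the total-variance vega and $(\partial_{kk}-\partial_k)c$ from the Gatheral--Jacquier identity, then divide. Your only addition is to verify that identity by hand rather than cite it. In that verification, $B_{kw}=n(d_+)\,d_+/(2w)$ (from $\partial_k d_+=-1/\sqrt w$), not a $d_-$ expression; this is the form your $w_k$ coefficient $d_+/\sqrt w-\tfrac12=-k/w$ actually uses, and with it and $B_{ww}$ from \eqref{eq:vomma} the bracket collapses exactly to $g[w]$.
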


\begin{proof}
The total-variance vega is
$\partial_wB=n(d_+)/(2\sqrt w)$, hence
$\partial_\tau c=n(d_+)\partial_\tau w/(2\sqrt w)$. The
Gatheral--Jacquier density identity gives
$(\partial_{kk}-\partial_k)c=n(d_+)g[w]/\sqrt w$. Dividing the two identities
proves \eqref{eq:dupireratio}; substitution yields \eqref{eq:dupireforward}.
\end{proof}

\begin{corollary}[Zeros and poles; \tier{Proved}]\label{cor:dupiredivisor}
On an admissible surface, the zero set of $a$ inside $\{g>0\}$ is exactly the
calendar-active set $\{\partial_\tau w=0,g>0\}$. The pole set
$\{g=0,\partial_\tau w>0\}$ lies in the butterfly-active stratum. On the doubly
active set $\{g=\partial_\tau w=0\}$ the ratio is indeterminate. Positivity of
$a$ alone is not a certificate of static arbitrage: simultaneous negativity of
numerator and denominator loses one sign bit, which convexity in the price
chart restores through \eqref{eq:dupireforward}.
\end{corollary}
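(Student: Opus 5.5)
The plan is to read every clause of Corollary~\ref{cor:dupiredivisor} off the two chart identities \eqref{eq:chartidentities} of Proposition~\ref{prop:chartcorrespondence}, together with the ratio \eqref{eq:dupireratio} of Theorem~\ref{thm:dupireconstraint}. We do not use the ratio formula as a black box, because its hypotheses assume a uniform margin $g\ge\gamma>0$. Instead we work pointwise with the two multiplier identities
\[
\partial_\tau c=\tfrac{n(d_+)}{2\sqrt w}\,\partial_\tau w,\qquad (\partial_{kk}-\partial_k)c=\tfrac{n(d_+)}{\sqrt w}\,g[w].
\]
These hold for any $w$ with enough regularity for the terms to be continuous. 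Throughout, "admissible" means $w\in\Ksurf_m$ with the continuity assumed in Theorem~\ref{thm:dupireconstraint}. Then $\partial_\tau w\ge0$ and $g\ge0$ hold pointwise on $\overline\Dom$, since an a.e.\ inequality between continuous functions holds everywhere.

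First, on the open set $\{g>0\}$ the margin hypothesis of Theorem~\ref{thm:dupireconstraint} holds locally. We can therefore localize that theorem, or simply divide the two identities, to get $a=\partial_\tau w/g$ there. The denominator is strictly positive, so $a(x)=0$ exactly when $\partial_\tau w(x)=0$. This proves the zero-set statement.

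Second, define $a$ by the forward equation \eqref{eq:dupireforward}, so that $\tfrac12 a=\partial_\tau c/(\partial_{kk}-\partial_k)c$. A pole occurs where the denominator vanishes while the numerator does not. Since $n(d_+)/\sqrt w>0$, the denominator vanishes exactly on $\{g=0\}$, and the numerator is nonzero exactly on $\{\partial_\tau w>0\}$. Hence the pole set is $\{g=0,\partial_\tau w>0\}$, which is contained in the butterfly-active set $A_{\mathrm{bf}}(w)$. On $\{g=\partial_\tau w=0\}$ both sides of \eqref{eq:dupireforward} vanish. Any value of $a$ then satisfies the forward equation at that point, which proves the indeterminacy claim.

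Third, for the final claim we give an explicit sign-flip example: a surface with $\partial_\tau w<0$ and $g[w]<0$ on an open set $V$. One route is to perturb an admissible surface $w_0$ by $-\varepsilon\phi$. Here $\phi$ is a smooth bump with $\partial_\tau\phi>0$ and $\partial_{kk}\phi$ large on $V$, placed near a point where $\partial_\tau w_0$ and $g[w_0]$ are both small, for example near a doubly active point of a surface with zero calendar slope. By Lemma~\ref{lem:Dg}, $Dg[w_0](-\phi)\approx-\phi_{kk}/2$ dominates, so both functionals become negative on $V$. At the same time $a=\partial_\tau w/g>0$ there, while the surface violates both constraints. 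By Proposition~\ref{prop:chartcorrespondence} this corresponds to $\partial_\tau c<0$ and $(\partial_{kk}-\partial_k)c<0$. Price-chart convexity, the condition $(\partial_{kk}-\partial_k)c\ge0$, detects the violation that positivity of $a$ misses; given that, \eqref{eq:dupireforward} forces $\partial_\tau c\ge0$. This restores the lost sign bit.

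The main obstacle is making this counterexample concrete while keeping $w\ge m$ and the required regularity. A $\tau$-independent base surface with $g[w_0]$ vanishing at an interior point, such as a tuned SVI slice, should suffice: its $\partial_\tau w_0$ is identically zero, so a small perturbation controls both signs.
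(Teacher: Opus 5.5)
Your proposal is correct and follows the paper's route. The paper gives no separate proof; it reads the corollary directly off the ratio \eqref{eq:dupireratio} and the two positive-multiplier identities \eqref{eq:chartidentities}, with $(\partial_{kk}-\partial_k)c\ge0$ and $a\ge0$ forcing $\partial_\tau c\ge0$ through \eqref{eq:dupireforward}. Your pointwise treatment of $\{g>0\}$, which avoids the uniform-margin hypothesis, and your explicit sign-flip surface add a little care beyond the paper. The example can also be made elementary: $w=c_0-\beta\tau-\alpha k^2$ with $\alpha>1$, $\beta>0$ and $c_0$ large has $g[w](0,\tau)=1-\alpha<0$ and $\partial_\tau w=-\beta<0$, so $a=\beta/(\alpha-1)>0$ near $k=0$ although both constraints fail.
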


The three charts therefore separate three kinds of simplicity:
\begin{center}
\begin{tabularx}{\textwidth}{@{}lllX@{}}
\toprule
Chart & State & Geometry & Cost of the chart map\\
\midrule
Implied & $w$ & closed, generally nonconvex & native market quotation\\
Price & $c$ & closed convex set & pointwise Black--Scholes map\\
Local & $a$ & closed convex cone & differential ratio / parabolic inverse\\
\bottomrule
\end{tabularx}
\end{center}
On the strict interior, $a\mapsto\log a$ is a homeomorphism from $\{a>0\}$
onto $C(\overline\Dom)$. Thus positivity can be coordinatized away, but only by
placing the boundary at infinite log-distance. The chart removes a static
constraint; it does not remove the dynamic consistency restriction linking the
evolution of $a$ and the option-price codebook.

\begin{proposition}[Stable inverse local chart; \tier{Proved, cited}]
\label{prop:localinverse}
Let $I=(k_-,k_+)$, $J=[\tau_0,\tau_1]$, and $Q=I\times J$. Fix
$c_0\in H^2(I)$ and Dirichlet traces $b_\pm\in H^1(J)$ satisfying the corner
compatibility conditions. Let $a_i\in W^{1,\infty}(Q)$, $i=1,2$, obey
\[
0<\underline a\le a_i\le\overline a<\infty,
\qquad \|a_i\|_{W^{1,\infty}(Q)}\le A.
\]
Then \eqref{eq:dupireforward}, with $c(\cdot,\tau_0)=c_0$ and
$c(k_\pm,\cdot)=b_\pm$, has a unique weak solution
$c[a_i]\in C(J;L^2(I))\cap L^2(J;H^2(I))$; it is classical under the usual
parabolic H\"older compatibility assumptions. Moreover,
\begin{align}\label{eq:localstability}
 &\sup_{\tau\in J}\|c[a_1](\cdot,\tau)-c[a_2](\cdot,\tau)\|_{L^2(I)}^2
 +\int_{\tau_0}^{\tau_1}\!\|\partial_k(c[a_1]-c[a_2])\|_{L^2(I)}^2\dd\tau
 \nonumber\\[-2pt]
 &\hspace{4cm}\le K\|a_1-a_2\|_{L^\infty(Q)}^2,
\end{align}
where $K$ depends only on $I,J,\underline a,\overline a,A$ and the displayed
data norms. If the data are restrictions of the whole-line Dupire solution
with the natural call-price wing limits, then $c[a_i]$ preserves the static
call-price inequalities; on a truncated interval this preservation requires
boundary traces compatible with that global solution.
\end{proposition}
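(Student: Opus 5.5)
The plan is to treat \eqref{eq:dupireforward} as a nondivergence-form linear parabolic equation in the forward variable $\tau$. I would rewrite it in divergence form to run energy estimates, using the $W^{1,\infty}$ bound on $a$:
\[
\partial_\tau c=\tfrac12\partial_k(a\,\partial_k c)-\tfrac12(\partial_k a+a)\,\partial_k c .
\]
The first-order coefficient $b=-\tfrac12(\partial_ka+a)$ is bounded by $A$.

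\emph{Existence and uniqueness.} First reduce to homogeneous Dirichlet data. Take a lifting $\ell(k,\tau)$ that interpolates $b_\pm$ linearly in $k$; $\ell$ inherits the $H^1(J)$ regularity of the traces. Set $u=c-\ell$. Then $u$ solves the same operator with a forcing term in $L^2(J;H^{-1}(I))$. The corner compatibility ensures the initial datum $u_0=c_0-\ell(\cdot,\tau_0)$ lies in $H^1_0(I)\cap H^2(I)$. The bilinear form is coercive up to a G\aa rding shift, since $a\ge\underline a$ and $b$ is bounded. Galerkin approximation then gives a unique weak solution in $C(J;L^2)\cap L^2(J;H^1_0)$; these are Lions' theorem and the standard results in Evans or Ladyzhenskaya--Solonnikov--Ural'tseva. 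To upgrade to $L^2(J;H^2(I))$, return to the nondivergence form: $a\,\partial_{kk}u$ equals $L^2$ terms divided by $a\ge\underline a$. Combined with the $H^1$-energy estimate, obtained by testing against $-\partial_{kk}u$ and using $u_0\in H^1_0$, this gives $\partial_{kk}u\in L^2(Q)$. The classical statement under H\"older compatibility is Schauder theory, which I would cite rather than reprove.

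\emph{Stability.} Let $e=c[a_1]-c[a_2]$. It has zero initial and boundary data and solves
\[
\partial_\tau e-\tfrac12 a_1(\partial_{kk}-\partial_k)e=\tfrac12(a_1-a_2)(\partial_{kk}-\partial_k)c[a_2].
\]
I would test this equation with $e$ and integrate by parts using the divergence form for $a_1$. This yields
\[
\tfrac{d}{d\tau}\|e\|^2+\underline a\|\partial_ke\|^2\le C(A)\|e\|^2+\|a_1-a_2\|_{L^\infty}\|(\partial_{kk}-\partial_k)c[a_2]\|_{L^2}\|e\|_{L^2}.
\]
Young's inequality and Gr\"onwall then give \eqref{eq:localstability}, with $K$ proportional to $\|c[a_2]\|_{L^2(J;H^2)}^2$. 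That norm is bounded by the a priori estimate of the previous step in terms of $I,J,\underline a,\overline a,A$ and the data norms.

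\emph{Main obstacle.} The delicate part is this a priori $H^2$ bound with constants uniform over the class of coefficients. The tool I would use first is the $H^1$-energy estimate tested against $-\partial_{kk}u$. The main risk is the term $\partial_k a\,\partial_k u\,\partial_{kk}u$, which can be absorbed by Young's inequality because $\|\partial_ka\|_\infty\le A$.

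\emph{Preservation of static inequalities.} This part is a maximum-principle argument.
\begin{itemize}
\item \emph{Calendar monotonicity.} $v=\partial_\tau c$ satisfies $v=\tfrac12 a(\partial_{kk}-\partial_k)c$. Convexity of $c$ in strike therefore gives $v\ge0$ directly.
\item \emph{Strike convexity.} $z=(\partial_{kk}-\partial_k)c$ satisfies a linear parabolic equation obtained by differentiating. Its initial sign is nonnegative, and its boundary values are nonnegative only if the traces come from the global Dupire solution. The weak maximum principle then yields $z\ge0$.
\item \emph{Price bounds.} The bounds $(1-e^k)_+\le c\le1$ follow by comparison, since both $1$ and the intrinsic value are sub- or supersolutions in the appropriate sense.
\end{itemize}
On a truncated interval, arbitrary traces can break the sign of $z$ at $k_\pm$. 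This is exactly why the statement imposes the compatibility hypothesis.
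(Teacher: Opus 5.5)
Your existence, uniqueness and stability argument is essentially the paper's. It has the same ingredients: uniform parabolicity, subtraction to a zero-data problem forced by $\tfrac12(a_1-a_2)(\partial_{kk}-\partial_k)c[a_2]$, an $L^2$ energy estimate closed by Young and Gr\"onwall, and a maximal-regularity bound on $\|c[a_2]\|_{L^2(J;H^2(I))}$. Your test against $-\partial_{kk}u$ supplies that bound, and the $\partial_ka$ term you flag does not arise if you test the nondivergence form directly.

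The routes differ in the last assertion. The paper runs no maximum principle on $Q$. It writes the whole-line solution as call prices of $\dd X_\tau=-\tfrac12a\dd\tau+\sqrt a\dd W_\tau$, which gives the price bounds, strike convexity and maturity monotonicity at once. It then restricts, and uniqueness on $Q$ identifies $c[a_i]$ with that restriction. This needs no extra regularity, but it requires extending $a$ to $\R\times J$ and data that are literally global restrictions.

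Your maximum-principle route stays on the window and gives a sharper, checkable criterion. Since $z=2\partial_\tau b_\pm/a$ on $k=k_\pm$, it suffices that the initial slice is admissible, the traces respect the price bounds, and the traces are nondecreasing in $\tau$. So your ``only if the traces come from the global solution'' overstates what is needed. And when the traces do come from the global solution, uniqueness alone already finishes the proof.

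The cost is regularity you have not supplied. Under the stated hypotheses $z=(\partial_{kk}-\partial_k)c$ is only in $L^2(Q)$ and has no lateral traces. Also, differentiating the equation in $k$ produces $\partial_{kk}a$ unless you keep the Fokker--Planck form $\partial_\tau z=\tfrac12(\partial_{kk}-\partial_k)(az)$ and test with $z^-$. You therefore need to prove the sign statements for smoothed coefficients and compatible smoothed data, then pass to the limit with the energy estimates; all three inequalities are closed under $L^2(Q)$ convergence.

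A cleaner variant is to differentiate in $\tau$ instead. Then $v=\partial_\tau c$ solves $\partial_\tau v=(\partial_\tau a/a)\,v+\tfrac12a(\partial_{kk}-\partial_k)v$, which uses only $\partial_\tau a\in L^\infty$. This gives $v\ge0$ from $v(\cdot,\tau_0)=\tfrac12a(\cdot,\tau_0)(\partial_{kk}-\partial_k)c_0\ge0$ and $\partial_\tau b_\pm\ge0$, and convexity follows from $z=2v/a$.
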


\begin{proof}[Proof sketch]
Uniform parabolicity gives existence and uniqueness. Subtracting the two
equations yields a zero-data parabolic equation whose forcing is
$\tfrac12(a_1-a_2)(\partial_{kk}-\partial_k)c[a_2]$. The standard $L^2$
parabolic estimate, followed by Young's inequality and Gronwall, gives
\eqref{eq:localstability}; maximal regularity bounds the forcing norm by the
stated data and ellipticity constants. On the whole line, the representation
as call prices under
$\dd X_\tau=-\tfrac12a(X_\tau,\tau)\dd\tau+\sqrt{a(X_\tau,\tau)}\dd W_\tau$
preserves the option bounds, strike convexity, and maturity monotonicity.
Restriction proves the last assertion. See \citet{friedman1964parabolic} for
the parabolic estimates and regularity used here.
\end{proof}

\begin{proposition}[SSVI local variance; \tier{Proved}]\label{prop:ssvilocal}
For SSVI, $w=\frac\theta2(1+\rho\phi k+R)$ with
$R=\sqrt{(\phi k+\rho)^2+1-\rho^2}$, so
\begin{equation}\label{eq:ssviderivatives}
w_k=\frac{\theta\phi}{2}\left(\rho+\frac{\phi k+\rho}{R}\right),
\qquad
w_{kk}=\frac{\theta\phi^2(1-\rho^2)}{2R^3},
\end{equation}
and, when $\phi=\phi(\theta)$,
\begin{equation}\label{eq:ssvitur}
w_\tau=\frac{\theta'_\tau}{2}
\left[1+\rho\phi k+R
+\theta\phi'(\theta)k\left(\rho+\frac{\phi k+\rho}{R}\right)\right].
\end{equation}
Equations~\eqref{eq:ssviderivatives}--\eqref{eq:ssvitur} inserted into
\eqref{eq:dupireratio} give the full closed form. At
the money,
\begin{equation}\label{eq:ssviatm}
a(0,\tau)=\frac{\theta'_\tau}
{1+\frac{\theta\phi^2}{4}(1-2\rho^2)-\frac{(\theta\phi\rho)^2}{16}}.
\end{equation}
This makes the SSVI butterfly conditions readable as positivity and boundedness
of the local ellipticity denominator on the quoted window.
\end{proposition}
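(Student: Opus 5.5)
The plan is a direct calculus verification, followed by an appeal to Theorem~\ref{thm:dupireconstraint}. First I fix the standing SSVI conditions: $\theta>0$, $\phi>0$, $|\rho|<1$, and $\theta,\phi$ smooth in their arguments. Then $R^2=(\phi k+\rho)^2+1-\rho^2\ge1-\rho^2>0$, so $R$ is smooth and bounded away from zero on $\overline\Dom$. Hence $w$ is $C^\infty$ on the compact window, and the regularity hypotheses of Theorem~\ref{thm:dupireconstraint} hold automatically. The margins $w\ge m$ and $g[w]\ge\gamma$ are exactly the quoted-window butterfly conditions referred to in the last sentence of the statement; I will state them as hypotheses. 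This yields the closed form \eqref{eq:dupireratio} wherever the denominator is positive.

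\emph{Step 1 ($k$-derivatives).} Write $u=\phi k+\rho$, so that $R=\sqrt{u^2+1-\rho^2}$ and $R_k=\phi u/R$. Differentiating $w=\frac\theta2(1+\rho\phi k+R)$ gives $w_k=\frac{\theta\phi}{2}(\rho+u/R)$. Differentiating once more gives
\[
\partial_k(u/R)=\frac{\phi R-u\,\phi u/R}{R^2}=\frac{\phi(R^2-u^2)}{R^3}=\frac{\phi(1-\rho^2)}{R^3}.
\]
This is \eqref{eq:ssviderivatives}.

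\emph{Step 2 ($\tau$-derivative).} The map $\tau\mapsto w$ factors through $\theta(\tau)$ and $\phi(\theta)$, and $R$ depends on $\tau$ only through the product $\phi k$. By the chain rule,
\[
w_\tau=\theta'_\tau\Bigl[\tfrac12(1+\rho\phi k+R)+\tfrac\theta2\,\phi'(\theta)\,\partial_\phi(\rho\phi k+R)\Bigr],
\]
with $\partial_\phi R=ku/R$. Collecting terms gives \eqref{eq:ssvitur}.

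\emph{Step 3 (at the money).} Set $k=0$. Then $u=\rho$, $R=1$, $w=\theta$, $w_k=\theta\phi\rho$, $w_{kk}=\theta\phi^2(1-\rho^2)/2$, and $w_\tau=\theta'_\tau$. The factor $A=1-kw_k/(2w)$ in \eqref{eq:gfun} equals $1$. Substituting into \eqref{eq:gfun} gives
\[
g=1-\frac{\theta^2\phi^2\rho^2}{4}\Bigl(\frac1\theta+\frac14\Bigr)+\frac{\theta\phi^2(1-\rho^2)}{4}.
\]
Regrouping the $\theta\phi^2$ terms produces the denominator of \eqref{eq:ssviatm}, namely $1+\frac{\theta\phi^2}{4}(1-2\rho^2)-\frac{(\theta\phi\rho)^2}{16}$. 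Dividing $w_\tau$ by this $g$ via \eqref{eq:dupireratio} gives \eqref{eq:ssviatm}.

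There is no real obstacle; this is bookkeeping. The one delicate point is Step 2: one must remember that $\phi$ depends on $\tau$ through $\theta$. This is what produces the $\theta\phi'(\theta)k$ term, and that term vanishes at the money. The remaining care is to invoke Theorem~\ref{thm:dupireconstraint} only where $g[w]\ge\gamma>0$. That is also where the last sentence of the statement comes from: positivity of the ATM denominator is the butterfly condition at $k=0$.
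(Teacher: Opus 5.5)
Your proposal is correct and follows the paper's proof essentially line for line. Both use direct differentiation via $R_k=\phi(\phi k+\rho)/R$ and $\partial_k[(\phi k+\rho)/R]=\phi(1-\rho^2)/R^3$, the chain rule through $\theta(\tau)$ and $\phi(\theta)$, and substitution at $k=0$ into $g[w]$. Your explicit standing SSVI conditions and the regularity/margin hypotheses needed to invoke Theorem~\ref{thm:dupireconstraint} are a sensible addition that the paper leaves implicit.
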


\begin{proof}
Since $R_k=\phi(\phi k+\rho)/R$ and
$\partial_k[(\phi k+\rho)/R]=\phi(1-\rho^2)/R^3$, direct differentiation gives
\eqref{eq:ssviderivatives}. The maturity derivative is obtained by the chain
rule through $\theta(\tau)$ and $\phi(\theta)$; inserting it and
\eqref{eq:ssviderivatives} into \eqref{eq:gfun} gives the full elementary ratio
$a=w_\tau/g[w]$. At $k=0$, $R=1$, $w=\theta$,
$w_k=\theta\phi\rho$, $w_{kk}=\theta\phi^2(1-\rho^2)/2$, and
$w_\tau=\theta'_\tau$. Substitution into $g[w]$ yields
\[
g[w](0,\tau)=1+\frac{\theta\phi^2}{4}(1-2\rho^2)
-\frac{(\theta\phi\rho)^2}{16},
\]
which proves \eqref{eq:ssviatm}.
\end{proof}

\section{Static feasibility versus dynamic no-arbitrage}
\label{sec:dynamicarb}

Membership in $\Ksurf_m$ excludes butterfly and calendar arbitrage at one
observation time. It does not by itself make the evolution of option prices
arbitrage-free. The missing object is the roll of absolute expiry through the
time-to-maturity chart, followed by the martingale restriction for each fixed
contract. This distinction is the option-surface analogue of the separation
between positivity of an HJM curve and its risk-neutral drift restriction
\citep{carmona2009local,kallsen2015hjm}.

\begin{proposition}[Musiela transport and fixed-contract restriction;
\tier{Proved}]
\label{prop:musiela}
Let $\widehat w_t(k,T)$ be an absolute-expiry total-variance codebook satisfying,
for each fixed $(k,T)$,
\begin{equation}\label{eq:absoluteexpiry}
 \dd\widehat w_t(k,T)=b_t(k,T)\dd t
 +\sum_{r=1}^d\sigma_t^r(k,T)\dd B_t^r,
\end{equation}
with $C^1$ regularity in $T$ and $C^2$ regularity in $k$. Define the Musiela
field $w_t(k,\tau)=\widehat w_t(k,t+\tau)$ and the shifted coefficients
$b_t^M(k,\tau)=b_t(k,t+\tau)$ and
$\sigma_t^{M,r}(k,\tau)=\sigma_t^r(k,t+\tau)$. Assume also that $b$ and each
$\sigma^r$ are jointly continuous in their arguments and $C^1$ in $k$, with
the local bounds required by the It\^o--Wentzell formula. Then
\begin{equation}\label{eq:musiela}
 \dd w_t(k,\tau)=
 \bigl(\partial_\tau w_t(k,\tau)+b_t^M(k,\tau)\bigr)\dd t
 +\sum_{r=1}^d\sigma_t^{M,r}(k,\tau)\dd B_t^r.
\end{equation}
For a fixed contract $(K,T)$, put $\tau_t=T-t$ and
$k_t=\log(K/F_t(T))$, and suppose
$\dd k_t=\beta_t^k\dd t+\sum_r\gamma_t^r\dd B_t^r$. Then the
$\partial_\tau w$ roll term cancels along the contract and It\^o--Wentzell gives
\begin{align}\label{eq:fixedcontract}
 \dd w_t(k_t,\tau_t)
 &=\Bigl[b_t^M+\beta_t^k w_k
 +\frac12\sum_r(\gamma_t^r)^2w_{kk}
 +\sum_r\gamma_t^r\partial_k\sigma_t^{M,r}\Bigr]_{(k_t,\tau_t)}\dd t
 \nonumber\\
 &\quad+\sum_r\bigl[\sigma_t^{M,r}+\gamma_t^r w_k\bigr]_{(k_t,\tau_t)}\dd B_t^r.
\end{align}
Consequently, $w_t\in\Ksurf_m$ for every $t$ is a static invariance condition.
Dynamic absence of arbitrage additionally requires the discounted,
non-normalized call price of every fixed $(K,T)$ to be a local martingale under
an equivalent pricing measure; applying It\^o--Wentzell to the Black--Scholes
map $B(k_t,w_t(k_t,\tau_t))$, together with the forward and discount factors,
imposes a separate drift restriction on $(b^M,\sigma^M,F,D)$.
\end{proposition}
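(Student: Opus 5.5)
The statement has three parts: the Musiela transport identity \eqref{eq:musiela}, the fixed-contract dynamics \eqref{eq:fixedcontract}, and the separation remark. I will treat them in that order. Each analytic step is an application of the Itô--Wentzell formula for a random field composed with a semimartingale argument. The first step is \eqref{eq:musiela}. Fix $(k,\tau)$ and regard $w_t(k,\tau)=\widehat w_t(k,T_t)$ with the deterministic, finite-variation argument $T_t=t+\tau$. Itô--Wentzell for the field $T\mapsto\widehat w_t(k,T)$ evaluated along $T_t$ gives
\[
\dd\widehat w_t(k,T_t)=b_t(k,T_t)\dd t+\sum_r\sigma_t^r(k,T_t)\dd B_t^r+\partial_T\widehat w_t(k,T_t)\dd t .
\]
No second-order or cross-variation terms appear, because $T_t$ has no martingale part. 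The joint continuity and $C^1$-in-$T$ hypotheses are exactly what the formula needs. Since $\partial_T\widehat w_t(k,t+\tau)=\partial_\tau w_t(k,\tau)$, substituting the shifted coefficients yields \eqref{eq:musiela}.

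The second step is \eqref{eq:fixedcontract}. I will not apply Itô--Wentzell to the Musiela field, whose $\tau$-roll would have to be cancelled by hand. Instead I apply it to the absolute-expiry field at fixed $T$, in the variable $k$ only: $w_t(k_t,\tau_t)=\widehat w_t(k_t,T)$, and $T$ does not move. This makes the cancellation of the roll term structural: the $+\partial_\tau w\,\dd t$ from \eqref{eq:musiela} and the $-\partial_\tau w\,\dd t$ from $\dd\tau_t=-\dd t$ cancel identically. The $C^2$-in-$k$ regularity of $\widehat w$ and the $C^1$-in-$k$ regularity of $\sigma^r$ are what Itô--Wentzell requires in the $k$ variable. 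The formula then gives four contributions:
\begin{itemize}[leftmargin=2em]
\item the drift $b$ and the noise $\sigma^r$ of the field;
\item the transport term $w_k\,\dd k_t$, which supplies $\beta^k w_k$ in the drift and $\gamma^r w_k$ in the noise;
\item the Itô correction $\tfrac12 w_{kk}\,\dd\langle k\rangle_t=\tfrac12\sum_r(\gamma^r)^2w_{kk}\,\dd t$;
\item the Wentzell cross-variation $\sum_r\partial_k\sigma^r\gamma^r\,\dd t$ between the field's noise and the argument's noise.
\end{itemize}
Rewriting $\widehat w_t(\cdot,T)$ and its coefficients in Musiela variables at $\tau_t=T-t$ produces \eqref{eq:fixedcontract}, and gives a cross-check of the roll cancellation via \eqref{eq:musiela}.

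The main obstacle is bookkeeping rather than analysis: verifying that the regularity stated suffices for Itô--Wentzell. This holds in Kunita's version provided $\sigma^r$ is $C^1$ in $k$ with local integrability of the derivatives. I will simply cite the theorem and record the local bounds as the assumed ones, without proving the formula itself.

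The closing claims are structural. Membership $w_t\in\Ksurf_m$ at each $t$ involves only the time-$t$ field, so it is a static invariance statement about the state process. By contrast, the discounted price of a fixed contract is $D_t(T)F_t(T)\,B(k_t,w_t(k_t,\tau_t))$. Applying Itô to $B$, using the dynamics just derived and those of $F$ and $D$, produces a $\dd t$ coefficient. The local-martingale requirement is that this coefficient vanish, which is an additional equation in $(b^M,\sigma^M,F,D)$. To show it is genuinely separate, I will exhibit a simple counterexample: a deterministic nonzero drift $b$ keeping the field in $\Ksurf_m$ (for instance a $k$-independent increase of $w$ in $\tau$-direction) with $\sigma=0$ and constant forward. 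Then the price drift $B_w b\neq0$, so static feasibility does not imply the drift restriction.
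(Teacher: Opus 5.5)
Your proof is correct. Your first step coincides with the paper's: what the paper calls the deterministic chain rule for $T=t+\tau$ is exactly your first-order It\^o--Wentzell formula with a finite-variation argument.

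For \eqref{eq:fixedcontract} you take a different and more economical route. The paper composes the Musiela field $w_t$ with the two-dimensional argument $(k_t,\tau_t)$. It then lets the $+\partial_\tau w_t\,\dd t$ of \eqref{eq:musiela} cancel against $w_\tau\,\dd\tau_t$. You instead note that $w_t(k_t,\tau_t)=\widehat w_t(k_t,T)$ and apply It\^o--Wentzell in $k$ alone, so the roll term never appears and \eqref{eq:musiela} is not used.

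Your version needs only the $k$-regularity of $\widehat w$ and $\sigma^r$ at fixed $T$, and it turns the asserted cancellation into an identity. The paper's version stays in the time-to-maturity chart, which is where the Hilbert-space models of Section~\ref{sec:hilbertdynamics} are specified, with $\partial_\tau$ in the generator. It therefore shows directly that a model posed in Musiela coordinates produces the right fixed-contract dynamics without first undoing the shift.

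Your separation counterexample goes beyond the paper, which only asserts the extra drift restriction and defers to the market-model literature. Two points need tightening.

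First, a $k$-independent drift added to an arbitrary admissible surface need not stay in $\Ksurf_m$. Formula \eqref{eq:Dg} with $h\equiv1$ gives $Dg[w]h=(Akw_k+w_k^2/4)/w^2$. This is negative wherever $Akw_k<-w_k^2/4$, for example under a downward skew at moderate $k>0$, so an active butterfly region can be pushed out. Use instead the flat codebook $\widehat w_t(k,T)=s^2T+ct$ with $c>0$, $\sigma^r\equiv0$, constant forward and zero rates. Then $w_k\equiv0$, $g[w_t]\equiv1$, $\partial_\tau w_t=s^2$, and $w_t\ge m$ once $m\le s^2\tau_{\min}$.

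Second, ``the price drift $B_wb$ is nonzero'' is not decisive by itself, since with nonzero noise a Girsanov change alters drifts. What closes the argument is that with $\sigma\equiv0$ the fixed-contract price $F\,B(k,s^2T+ct)$ is deterministic and strictly increasing. A continuous finite-variation local martingale is constant, so this price is a local martingale under no equivalent measure.
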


\begin{proof}
The deterministic chain rule for $T=t+\tau$ applied to
\eqref{eq:absoluteexpiry} gives \eqref{eq:musiela}. Compose that random field
with $(k_t,\tau_t)$, use $\dd\tau_t=-\dd t$, and apply the It\^o--Wentzell
formula. The $+\partial_\tau w_t\dd t$ term in \eqref{eq:musiela} cancels
$w_\tau\dd\tau_t$; the remaining drift, quadratic-variation, and cross-variation
terms are exactly \eqref{eq:fixedcontract}; see \citet{kunita1990stochastic}
for the stochastic field composition formula. The final assertion is the
fundamental discounted-price martingale condition under an equivalent pricing
measure. The option-surface forms of the resulting drift restriction are
developed by \citet{carmona2009local,kallsen2015hjm}.
\end{proof}

\begin{remark}[What the constructions guarantee]
\label{rem:staticdynamic}
The heads, projections, reflections, and flow maps below guarantee
$w_t\in\Ksurf_m$ or $c_t\in\Kc$ at every generated time. They are therefore
exactly statically arbitrage-free. They become dynamically arbitrage-free market
models only after their drift and joint forward/discount dynamics satisfy the
fixed-contract condition of Proposition~\ref{prop:musiela}. The empirical
study treats them as physical-measure scenario generators unless that
additional restriction is explicitly imposed.
\end{remark}

\section{Dynamics: the boundary incompatibility theorem}\label{sec:dynamics}

\subsection{Gaussian field dynamics}\label{sec:gaussdyn}

The natural infinite-dimensional lift of surface PCA is the $\Hs$-valued It\^o equation
\begin{equation}\label{eq:sde}
\dd w_t \;=\; \mu_t\,\dd t \;+\; \Sigma\,\dd W_t,
\end{equation}
with $W$ a cylindrical Wiener process on $\Hs$, $\Sigma$ Hilbert--Schmidt so that $C=\Sigma\Sigma^\ast$ is a trace-class, self-adjoint, nonnegative covariance operator, and $\mu$ adapted with locally bounded norms \citep{daprato2014stochastic}. The one-step Euler shock from $w_0$ over $[0,\Delta]$ is
\begin{equation}\label{eq:euler}
w_\Delta \;=\; w_0 \;+\; \Delta\,\mu \;+\; \sqrt{\Delta}\,\xi, \qquad \xi\sim N(0,C) \text{ on } \Hs .
\end{equation}
The question is whether \eqref{eq:sde}--\eqref{eq:euler} can respect $\Ksurf_m$. The answer splits cleanly by the location of $w_0$.

\subsection{Smoothed constraint functionals}\label{sec:smoothed}

Pointwise evaluation of $g[w]$ is not continuous on $\Hs$ (Proposition~\ref{prop:embedding}(iii)), so boundary activity is detected through localized averages. For $0\le\psi\in L^\infty(\Dom)$, $\psi\not\equiv0$, define
\begin{equation}\label{eq:Phi}
\Phi_\psi^{\mathrm{bf}}(w) = \int_\Dom g[w]\,\psi \dd x,\qquad
\Phi_\psi^{\mathrm{cal}}(w) = \int_\Dom \partial_\tau w\,\psi \dd x,\qquad
\Phi_\psi^{\mathrm{fl}}(w) = \int_\Dom (w-m)\,\psi \dd x .
\end{equation}
Each is finite on $\{w\ge m\}$, nonnegative on $\Ksurf_m$, and vanishes exactly when the corresponding constraint is active a.e.\ on $\{\psi>0\}$. By Lemma~\ref{lem:Dg}, $\Phi_\psi^{\mathrm{bf}}$ is $C^1$ on $\{w\ge m\}\subset\Hs$ with locally Lipschitz derivative
\begin{equation}\label{eq:DPhi}
D\Phi_\psi^{\mathrm{bf}}[w]h \;=\; \int_\Dom \bigl(Dg[w]h\bigr)\,\psi \dd x
\;=\; \ip{\ell_\psi(w)}{h}_{\Hs},
\end{equation}
the last identity by Riesz representation; the calendar and floor functionals are bounded linear, hence trivially $C^1$. Call $\psi$ an \emph{active window} at $w_0$ for a constraint family if the corresponding $\Phi_\psi(w_0)=0$.

\begin{lemma}[Second differentiability of the butterfly window;
\tier{Proved}]
\label{lem:PhiC2}
Let $0\le\psi\in L^\infty(\Dom)$. On the open set
$\mathcal O=\{w\in\Hs:\inf_{\overline\Dom}w>0\}$, the scalar functional
$\Phi_\psi^{\mathrm{bf}}$ is twice continuously Fr\'echet differentiable.
Writing $A=1-kw_k/(2w)$, the diagonal second variation of the underlying
butterfly map is
\begin{align}\label{eq:D2g}
D^2g[w][h,h]
={}&\frac{k^2}{2w^4}(wh_k-w_kh)^2
 +\frac{2Ak}{w^2}\left(hh_k-\frac{w_k}{w}h^2\right) \notag\\
&-\frac{h_k^2}{2}\left(\frac1w+\frac14\right)
 +\frac{w_k}{w^2}hh_k-\frac{w_k^2}{2w^3}h^2,
\end{align}
and polarization gives the symmetric bilinear map
$D^2g[w]:\Hs\times\Hs\to L^2(\Dom)$. Consequently
\[
 D^2\Phi_\psi^{\mathrm{bf}}[w](h_1,h_2)
 =\int_\Dom D^2g[w][h_1,h_2]\,\psi\dd x.
\]
For
every $m,R>0$ there is $K=K(m,R,\psi)$ such that
\begin{equation}\label{eq:PhiHessian}
 \bigl|D^2\Phi_\psi^{\mathrm{bf}}[w](h_1,h_2)\bigr|
 \le K\norm{h_1}_{\Hs}\norm{h_2}_{\Hs}
\end{equation}
whenever $w\ge m$ and $\norm{w}_{\Hs}\le R$. The Hessian is locally
Lipschitz in $w$ in operator norm on bounded subsets of this set.
\end{lemma}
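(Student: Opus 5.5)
The plan is to write $g[w]=q(w,w_k)+\tfrac12 w_{kk}$, where
\[
q(w,p)=\Bigl(1-\frac{kp}{2w}\Bigr)^{2}-\frac{p^{2}}{4}\Bigl(\frac1w+\frac14\Bigr)
\]
is a smooth function of $(k,w,p)$ on $\{w>0\}$. The term $\tfrac12 w_{kk}$ is bounded linear $\Hs\to L^2$, so it contributes nothing to the second variation. The whole lemma therefore reduces to the Nemytskii operator $w\mapsto q(\cdot,w,w_k)$, composed with the bounded linear map $u\mapsto\int u\psi$. The structural input is Proposition~\ref{prop:embedding}: $w\in C^0$ with $\inf w\ge m$, and $w_k,h_k\in H^1\hookrightarrow L^4$ (indeed every $L^p$, $p<\infty$). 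The variable $p=w_k$ is not bounded, so I cannot use a generic Nemytskii theorem. Instead I will exploit the fact that $q$ is a \emph{polynomial of degree two in $p$} with coefficients smooth in $w$ (powers of $1/w$, times $k$). This gives an exact finite Taylor expansion in $p$, and all the difficulty sits in the $w$-dependence, where $w$ is uniformly bounded and bounded away from $0$.

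\emph{Step 1 (formula).} I will compute $q_{ww},q_{wp},q_{pp}$ explicitly and check that
\[
D^2g[w][h,h]=q_{ww}h^2+2q_{wp}hh_k+q_{pp}h_k^2,
\]
matching \eqref{eq:D2g}. Here $q_{pp}=k^2/(2w^2)-\tfrac12(1/w+1/4)$, and regrouping the $k^2$ terms yields $\frac{k^2}{2w^4}(wh_k-w_kh)^2$. This is routine algebra and can be cross-checked against \eqref{eq:Dg} by differentiating that expression in $w$ along $h$. Polarization then gives the bilinear form.

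\emph{Step 2 (boundedness \eqref{eq:PhiHessian}).} Each coefficient has the form $c(k,w)\,w_k^j$ with $j\le 2-(\text{number of }h_k\text{ factors})$; for example $q_{ww}\sim k^2w_k^2/w^4+w_k^2/w^3$, and $q_{wp}\sim w_k/w^2$. On $\{w\ge m,\ \|w\|_{\Hs}\le R\}$, $c$ is bounded by a constant depending on $m$, $R$ and $\|w\|_{C^0}\le cR$. Hölder's inequality with $w_k,h_k\in L^4$, $h\in C^0$ then puts every term in $L^2$; in the worst case $w_k^2h_1h_2$ uses $\|w_k\|_{L^4}^2\|h\|_{C^0}^2$. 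Hence $\|D^2g[w][h_1,h_2]\|_{L^2}\le K\|h_1\|\|h_2\|$, and pairing with $\psi\in L^\infty\subset L^2(\Dom)$ gives \eqref{eq:PhiHessian}.

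\emph{Step 3 (Fréchet $C^2$ and local Lipschitz Hessian).} This is the main obstacle: one must show that the remainder of the first derivative, $Dg[w+h]-Dg[w]-D^2g[w][h,\cdot]$, is $o(\|h\|)$ in operator norm, and that the Hessian map is continuous. My approach is to use Taylor's formula with integral remainder in the $w$-slot only. Since $q$ is exactly quadratic in $p$, the $p$-direction expansion is exact. Along the segment $w+sh$, $s\in[0,1]$, the values stay $\ge m/2$ for $\|h\|$ small, by the $C^0$ embedding; this is why the open set $\mathcal O$ is needed. The third $w$-derivatives of the coefficients are bounded, so the remainder is bounded by terms like $\|h\|_{C^0}\,\|h_1\|\,\|h_2\|$ times polynomials in $\|w_k\|_{L^4}$, $\|h_k\|_{L^4}$. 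The same Hölder bookkeeping as in Step 2 therefore yields an $O(\|h\|^2)$ remainder and a Lipschitz bound on $w\mapsto D^2\Phi_\psi^{\mathrm{bf}}[w]$ on bounded subsets with $w\ge m$. Lipschitz continuity implies continuity, which gives $C^2$ on $\mathcal O$, since every point of $\mathcal O$ has such a neighbourhood with $m=\tfrac12\inf w$.

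The care required is only to avoid products with more than four $L^4$ factors before pairing with $\psi$. Each term is at most quadratic in the derivative variables ($w_k$, $h_k$), so the worst pairing is $L^4\cdot L^4\cdot L^\infty$, which is safe.
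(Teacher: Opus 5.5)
Your proposal is correct and follows essentially the paper's proof. The shared steps are these: the linear top-order term contributes no Hessian, the second variation is computed explicitly, and $H^2\hookrightarrow C^0$ and $H^1\hookrightarrow L^4$ with H\"older put every term of \eqref{eq:D2g} in $L^2$. The operator-norm Lipschitz bound then comes from coefficient differences, which you organize as an integral Taylor remainder along segments. Your explicit $O(\norm{h}_{\Hs}^2)$ operator-norm remainder for $D\Phi_\psi^{\mathrm{bf}}$ spells out the Fr\'echet step that the paper only gestures at via Appendix~\ref{app:Dg}. There are two harmless imprecisions: $q_{ww}$ also contains the term $-2kw_k/w^3$, which is linear in $w_k$; and some third-order remainder terms carry $\norm{h_k}_{L^4}$ rather than $\norm{h}_{C^0}$. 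Neither changes anything, since both quantities are bounded by a constant times $\norm{h}_{\Hs}$.
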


\begin{proof}
The top-order term $w_{kk}/2$ in $g[w]$ is linear, so it contributes no
Hessian. Differentiating the remaining terms twice gives \eqref{eq:D2g}; no
term contains $h_{kk}$, and polarization supplies the cross-variation. In two
dimensions $H^2\hookrightarrow C^0$ and $H^1\hookrightarrow L^4$. Thus
$wh_k-w_kh\in L^4$, $A\in L^4$, and every product in \eqref{eq:D2g} belongs
to $L^2$, with norm bounded by a constant times $\norm{h}_{\Hs}^2$.
Polarization and integration against $\psi$ give \eqref{eq:PhiHessian}.
Applying the same estimates to coefficient differences, using the local
Lipschitz property of $u\mapsto u^{-j}$ on $[m,\infty)$ and
$w_k\in L^4$, proves the stated operator-norm Lipschitz property. This also
promotes the quadratic estimate in Appendix~\ref{app:Dg} from a remainder
statement to the $C^2$ hypothesis required by the Hilbert-space It\^o formula.
\end{proof}

\begin{remark}[Window activity and isolated contacts]
\label{rem:pointcontact}
Theorem~\ref{thm:boundary} is deliberately an $H^2$ result. For the calendar
and butterfly constraints, an active window means equality on a set of positive
measure inside $\{\psi>0\}$; an isolated point contact is invisible to the
$L^2$ order structure. If the state is strengthened to $H^s_\rho(\Dom)$ with
$s>3$, Sobolev embedding gives $C^2$ representatives. Then point evaluation of
$w_\tau$ and $g[w]$ is continuous, one may use
$\Phi_x^{\mathrm{cal}}(w)=w_\tau(x)$ or
$\Phi_x^{\mathrm{bf}}(w)=g[w](x)$, and the same proof yields the half-law with
the pointwise normal. The finite-grid theorem below is the corresponding
discrete treatment of isolated contacts.
\end{remark}

\subsection{The \texorpdfstring{$1/2$}{one-half} law at the boundary}\label{sec:halflaw}

\begin{theorem}[Boundary incompatibility; \tier{Proved}]\label{thm:boundary}
Let $w_0\in\Ksurf_m$, let $\Phi$ be any of the functionals \eqref{eq:Phi} with $\Phi(w_0)=0$, and suppose the noise is nondegenerate along the constraint normal:
\begin{equation}\label{eq:nondeg}
\sigma_\Phi^2 \;\coloneqq\; \ip{\ell}{C\,\ell}_{\Hs} \;>\; 0, \qquad \ell \coloneqq \text{Riesz representer of } D\Phi[w_0].
\end{equation}
Then for every fixed drift value $\mu\in\Hs$, the Euler shock \eqref{eq:euler} satisfies
\begin{equation}\label{eq:halflimit}
\lim_{\Delta\downarrow 0}\ \Prob\bigl[\,\Phi(w_\Delta) < 0\,\bigr] \;=\; \tfrac12 .
\end{equation}
In particular,
\begin{equation}\label{eq:violationliminf}
 \liminf_{\Delta\downarrow0}\Prob[w_\Delta\notin\Ksurf_m]\ge\tfrac12.
\end{equation}
\end{theorem}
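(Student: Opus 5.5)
The plan is a second-order Taylor expansion of $\Phi$ about $w_0$ along the Euler increment $h_\Delta=\Delta\mu+\sqrt\Delta\,\xi$, followed by rescaling by $\sqrt\Delta$. Since $\Phi(w_0)=0$,
\[
\Phi(w_\Delta)=\sqrt\Delta\,\ip{\ell}{\xi}_{\Hs}+\Delta\,\ip{\ell}{\mu}_{\Hs}+R_\Delta ,
\]
where $R_\Delta$ is the remainder. For the calendar and floor windows $\Phi$ is affine, so $R_\Delta\equiv0$. For the butterfly window, Lemma~\ref{lem:PhiC2} gives $|R_\Delta|\le \tfrac12 K\norm{h_\Delta}_{\Hs}^2$. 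This bound holds whenever the segment $[w_0,w_\Delta]$ stays in $\{w\ge m/2,\ \norm{w}_{\Hs}\le R\}$ with $R=\norm{w_0}_{\Hs}+1$. Dividing by $\sqrt\Delta$,
\[
\Delta^{-1/2}\Phi(w_\Delta)=Z+\sqrt\Delta\,\ip{\ell}{\mu}_{\Hs}+\Delta^{-1/2}R_\Delta,
\qquad Z\coloneqq\ip{\ell}{\xi}_{\Hs}\sim N(0,\sigma_\Phi^2).
\]
By \eqref{eq:nondeg}, $Z$ is a nondegenerate centered Gaussian, so $\Prob[Z<0]=\tfrac12$ and $Z$ has no atom at $0$. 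It therefore suffices to show the last two terms tend to $0$ in probability. Slutsky's lemma, together with continuity of the Gaussian law at $0$, then gives \eqref{eq:halflimit}.

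The deterministic drift term is $O(\sqrt\Delta)$. For the remainder, let $E_\Delta$ be the event $\{\sqrt\Delta\norm{\xi}_{\Hs}+\Delta\norm{\mu}_{\Hs}\le \min(1,m/(2c))\}$, where $c$ is the constant in the embedding $\norm{\cdot}_{C^0}\le c\norm{\cdot}_{\Hs}$ from Proposition~\ref{prop:embedding}. On $E_\Delta$ the whole segment stays in the region where Lemma~\ref{lem:PhiC2} applies. Because $\norm{\xi}_{\Hs}<\infty$ almost surely (indeed $\E\norm{\xi}^2=\tr C<\infty$), we have $\Prob[E_\Delta]\to1$. On $E_\Delta$,
\[
\Delta^{-1/2}|R_\Delta|\le K\sqrt\Delta\bigl(\norm{\xi}^2+\Delta\norm{\mu}^2\bigr),
\]
which tends to $0$ in probability. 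This is the step I expect to need the most care. The floor functional is only defined meaningfully on $\{w\ge m\}$, whereas $w_\Delta$ may dip below $m$; the Taylor expansion has to be done on the open set $\mathcal O$ of Lemma~\ref{lem:PhiC2}, using the fact that $g$ and its derivatives extend to $\{w>0\}$. I would make this explicit and work with the bounds for floor $m/2$ rather than $m$.

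For \eqref{eq:violationliminf}, $\Phi(w_\Delta)<0$ forces a violation of the corresponding constraint. For the calendar and butterfly windows, $\Phi_\psi<0$ with $\psi\ge0$ implies the integrand is negative on a set of positive measure. For the floor window it implies $w_\Delta<m$ somewhere. In each case $w_\Delta\notin\Ksurf_m$, so $\Prob[w_\Delta\notin\Ksurf_m]\ge\Prob[\Phi(w_\Delta)<0]$, and taking $\liminf$ gives the bound.
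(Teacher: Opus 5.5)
Your proposal is correct and follows essentially the paper's proof. Both arguments use a first-order expansion at $w_0$ with a quadratic remainder, rescale by $\sqrt\Delta$, apply Slutsky together with the absence of an atom at $0$, and finish with the inclusion $\{\Phi(w_\Delta)<0\}\subseteq\{w_\Delta\notin\Ksurf_m\}$. Your explicit localization event $E_\Delta$ (using the Hessian bound of Lemma~\ref{lem:PhiC2} at floor $m/2$) makes precise the ``for $\norm{u}_{\Hs}$ small'' step that the paper handles with the Appendix remainder estimate; the domain issue you flag concerns the butterfly window, not the floor window, which is affine and defined on all of $\Hs$.
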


\begin{proof}
Write $u_\Delta=\Delta\mu+\sqrt{\Delta}\,\xi$, so $\norm{u_\Delta}_{\Hs}=O_{\Prob}(\sqrt{\Delta})$ since $\E\norm{\xi}^2_{\Hs}=\tr C<\infty$. $C^1$ differentiability with locally Lipschitz derivative at $w_0$ gives
\[
\Phi(w_0+u_\Delta) \;=\; \underbrace{\Phi(w_0)}_{=0} \;+\; \ip{\ell}{u_\Delta}_{\Hs} \;+\; r(u_\Delta),
\qquad |r(u)| \le \kappa\,\norm{u}_{\Hs}^{2}
\]
for $\norm{u}_{\Hs}$ small (for the linear functionals $r\equiv 0$; for $\Phi^{\mathrm{bf}}_\psi$ this is Appendix~\ref{app:Dg}, estimate \eqref{eq:remainder} integrated against $\psi$). Divide by $\sqrt{\Delta}$:
\[
\frac{\Phi(w_\Delta)}{\sqrt{\Delta}} \;=\; \ip{\ell}{\xi}_{\Hs} \;+\; \sqrt{\Delta}\,\ip{\ell}{\mu}_{\Hs} \;+\; \frac{r(u_\Delta)}{\sqrt{\Delta}} .
\]
The first term is exactly $N(0,\sigma_\Phi^2)$ for every $\Delta$; the second is $O(\sqrt{\Delta})$; and
\[
\frac{|r(u_\Delta)|}{\sqrt{\Delta}}
\le \kappa\frac{\norm{u_\Delta}^2}{\sqrt{\Delta}}
=O_{\Prob}(\sqrt{\Delta})\longrightarrow0.
\]
By Slutsky, $\Phi(w_\Delta)/\sqrt{\Delta}\Rightarrow N(0,\sigma_\Phi^2)$, and since the limit law has no atom at $0$,
$\Prob[\Phi(w_\Delta)<0]\to\Prob[N(0,\sigma_\Phi^2)<0]=\tfrac12$. The final statement follows because $\{\Phi(w_\Delta)<0\}\subseteq\{w_\Delta\notin\Ksurf_m\}$.
\end{proof}

\begin{corollary}[The $\sqrt\Delta$ boundary layer; \tier{Proved}]
\label{cor:boundarylayer}
Let $\Phi$ be one of the window functionals in \eqref{eq:Phi}. Let
$\mathcal U\subset\mathcal O$ have a common $\Hs$-neighborhood on which
$\Phi$ is $C^2$, and assume there are constants $L<\infty$ and
$\sigma_0>0$ such that, throughout that neighborhood,
\[
 \norm{D\Phi[w]}_{\Hs^*}+\norm{D^2\Phi[w]}_{\mathrm{op}}\le L,
 \qquad
 \sigma_\Phi(w)\coloneqq
 \sqrt{\ip{\ell(w)}{C\ell(w)}_{\Hs}}\ge\sigma_0.
\]
Let $\mu:\mathcal U\to\Hs$ be bounded and set
$w_\Delta(w)=w+\Delta\mu(w)+\sqrt\Delta\,\xi$ with
$\xi\sim N(0,C)$. Writing $\overline N=1-N$, for every $R<\infty$,
\begin{equation}\label{eq:uniformboundarylayer}
 \sup_{\substack{w\in\mathcal U\cap\Ksurf_m\\
                  0\le\Phi(w)\le R\sqrt\Delta}}
 \left|
 \Prob\!\left[\Phi(w_\Delta(w))<0\right]
 -\overline N\!\left(
   \frac{\Phi(w)}{\sqrt\Delta\,\sigma_\Phi(w)}
  \right)
 \right|\longrightarrow0.
\end{equation}
In particular, if
$\Phi(w_0^\Delta)/(\sqrt\Delta\,\sigma_\Phi(w_0^\Delta))\to r\ge0$,
then the violation probability converges to $\overline N(r)$; $r=0$ recovers
Theorem~\ref{thm:boundary}.
\end{corollary}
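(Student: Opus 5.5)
The plan is to repeat the Slutsky argument of Theorem~\ref{thm:boundary}, now uniformly over base points and without discarding $\Phi(w)$. Fix $w$ in the displayed set and write $u=\Delta\mu(w)+\sqrt\Delta\,\xi$. A second-order Taylor expansion on the common neighborhood gives
\[
\Phi(w+u)=\Phi(w)+\ip{\ell(w)}{u}_{\Hs}+r_w(u),\qquad |r_w(u)|\le \tfrac{L}{2}\norm{u}_{\Hs}^2 ,
\]
valid whenever the segment $[w,w+u]$ stays in the neighborhood. Dividing by $\sqrt\Delta\,\sigma_\Phi(w)$ yields
\[
\frac{\Phi(w_\Delta(w))}{\sqrt\Delta\,\sigma_\Phi(w)}
= Z_w+\frac{\Phi(w)}{\sqrt\Delta\,\sigma_\Phi(w)}+E_w,
\qquad Z_w\coloneqq\frac{\ip{\ell(w)}{\xi}_{\Hs}}{\sigma_\Phi(w)}\sim N(0,1).
\]
The main term is thus exactly standard normal for every $w$, which gives $\Prob[Z_w<-s_w]=\overline N(s_w)$ with $s_w=\Phi(w)/(\sqrt\Delta\,\sigma_\Phi(w))\in[0,R/\sigma_0]$. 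The key step is to bound the error
\[
E_w=\sqrt\Delta\,\frac{\ip{\ell(w)}{\mu(w)}_{\Hs}}{\sigma_\Phi(w)}+\frac{r_w(u)}{\sqrt\Delta\,\sigma_\Phi(w)}
\]
uniformly in $w$. The first piece is at most $\sqrt\Delta\,L\sup\norm{\mu}/\sigma_0$. For the second, $\norm{u}^2\le 2\Delta^2\sup\norm{\mu}^2+2\Delta\norm{\xi}^2$, so $|r_w(u)|/\sqrt\Delta\le C\sqrt\Delta(\Delta+\norm{\xi}^2)/\sigma_0$. This is a bound by a single random variable independent of $w$, and it tends to $0$ in probability because $\E\norm{\xi}^2=\tr C$. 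Hence for each $\eta>0$ there is an event $G_\Delta$, the same for all $w$, with $\Prob[G_\Delta^c]\to0$ and $|E_w|\le\eta$ on $G_\Delta$.

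On the good event the indicator of $\{Z_w+s_w+E_w<0\}$ is sandwiched between those of $\{Z_w<-s_w-\eta\}$ and $\{Z_w<-s_w+\eta\}$. This gives
\[
\bigl|\Prob[\Phi(w_\Delta(w))<0]-\overline N(s_w)\bigr|\le \sup_s|\overline N(s-\eta)-\overline N(s+\eta)|+\Prob[G_\Delta^c]\le \frac{2\eta}{\sqrt{2\pi}}+\Prob[G_\Delta^c].
\]
The right side does not depend on $w$; letting $\Delta\downarrow0$ and then $\eta\downarrow0$ proves \eqref{eq:uniformboundarylayer}. The corollary then follows from continuity of $\overline N$ applied to $s_{w_0^\Delta}\to r$. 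The case $r=0$ gives $\overline N(0)=1/2$, which is Theorem~\ref{thm:boundary}. The restriction $\Phi(w)\le R\sqrt\Delta$ is not needed for the error bound. It only keeps $s_w$ in a compact range, where the comparison with $\overline N$ is meaningful.

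The main obstacle is the localization in the Taylor step. The $C^2$ bound $L$ holds only on the common neighborhood of $\mathcal U$, and for Gaussian $\xi$ the point $w+u$ can leave that neighborhood. For the butterfly functional it may even leave $\mathcal O$, where $\Phi$ is undefined. I would handle this by assuming the neighborhood contains a uniform $\Hs$-ball of radius $\delta$ around each point of $\mathcal U$. If $\mathcal U$ is only assumed to have some neighborhood, this uniform radius must be read into the hypothesis. I would then include $\{\norm{u}_{\Hs}<\delta\}$ in $G_\Delta$. Its complement has probability at most $\Prob[\sqrt\Delta\norm{\xi}>\delta/2]$ once $\Delta\sup\norm{\mu}<\delta/2$, and this tends to $0$ uniformly in $w$ by Chebyshev's inequality. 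The linear functionals $\Phi^{\mathrm{cal}}$ and $\Phi^{\mathrm{fl}}$ have $r\equiv0$ and need no localization.
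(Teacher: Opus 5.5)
Your proof is correct and follows the paper's route. Both arguments use a uniform second-order Taylor expansion on the common neighborhood, a $w$-independent $o_{\Prob}(1)$ bound on the drift-plus-remainder term (from $\tr C<\infty$ and bounded $\mu$), a uniform bound on the probability of leaving the neighborhood, and Gaussian anti-concentration applied to the exactly standard normal leading term. Your uniform-radius reading of the neighborhood hypothesis is exactly what the paper's ``Markov's inequality'' step implicitly needs, and because your sandwich bound uses only the global Lipschitz constant $1/\sqrt{2\pi}$ of $\overline N$, the restriction $\Phi(w)\le R\sqrt\Delta$ is in fact unnecessary.
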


\begin{proof}
Uniform Taylor expansion on the common neighborhood gives
\[
 \Phi(w_\Delta(w))
 =\Phi(w)+\sqrt\Delta\,\ip{\ell(w)}{\xi}_{\Hs}
 +\Delta\ip{\ell(w)}{\mu(w)}_{\Hs}+r_{\Delta,w},
\]
where
$|r_{\Delta,w}|\le(L/2)\norm{\Delta\mu(w)+\sqrt\Delta\xi}_{\Hs}^2$
whenever the step remains in that neighborhood. Since $C$ is trace class and
$\mu$ is bounded,
\[
 \sup_{w\in\mathcal U}
 \Prob\!\left[
 \frac{|\Delta\ip{\ell(w)}{\mu(w)}+r_{\Delta,w}|}
      {\sqrt\Delta\,\sigma_\Phi(w)}>\varepsilon
 \right]\longrightarrow0
\]
for every $\varepsilon>0$; the probability of leaving the common neighborhood
also tends to zero uniformly by Markov's inequality. The leading random term
divided by $\sigma_\Phi(w)$ is standard normal for every $w$. Gaussian
anti-concentration, uniformly over the bounded thresholds
$[0,R/\sigma_0]$, converts the uniform $o_{\Prob}(1)$ remainder into
\eqref{eq:uniformboundarylayer}.
\end{proof}

\begin{figure}[htbp]
\centering
\includegraphics[width=0.98\textwidth]{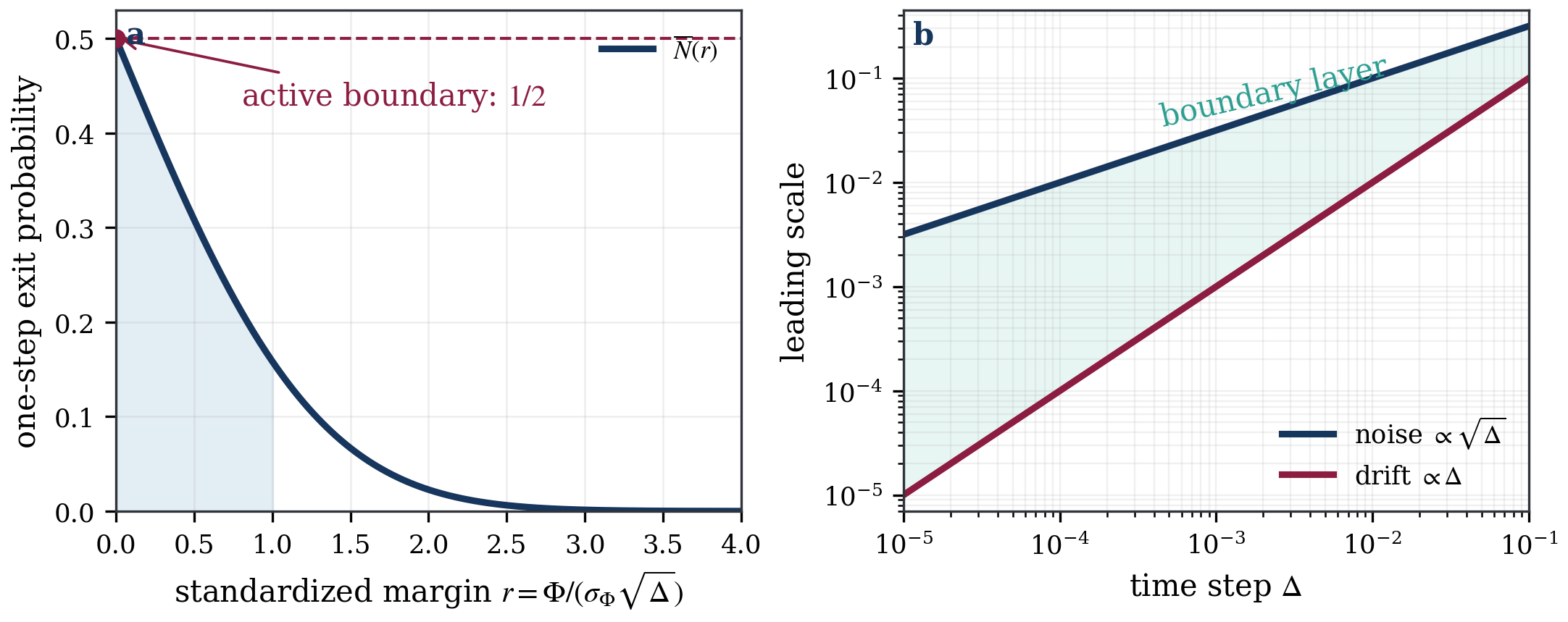}
\caption{The universal Gaussian boundary layer. Panel (a) plots the exact
limit $\overline N(r)$ from Corollary~\ref{cor:boundarylayer}; an active
constraint has standardized margin $r=0$ and exit probability $1/2$. Panel (b)
shows the structural scaling behind the result: normal noise is order
$\sqrt\Delta$, whereas drift is order $\Delta$. These are analytic reference
curves, not estimates.}
\label{fig:boundarylayer}
\end{figure}

The theorem formalizes a simple and unforgiving scaling fact: at the boundary, noise is order $\sqrt{\Delta}$ and drift is order $\Delta$, so no drift --- however cleverly chosen, however dependent on the state --- can rescue a nondegenerate Gaussian model from exiting through an active constraint. The infinitesimal version is sharper still.

\begin{corollary}[Continuous-time trichotomy; \tier{Proved}]\label{cor:trichotomy}
Let $w_t$ solve \eqref{eq:sde} with continuous adapted $\mu$ and $w_t\in\Ksurf_m$ for all $t$ a.s. Let $\Phi$ be one of \eqref{eq:Phi} and let $\theta=\inf\{t:\Phi(w_t)=0\}$ be finite with positive probability. Then on $\{\theta<\infty\}$, a.s.
\[
\sigma_\Phi^2(w_\theta) \;=\; \ip{\ell(w_\theta)}{C\,\ell(w_\theta)}_{\Hs} \;=\; 0 .
\]
Consequently any $\Ksurf_m$-invariant model of type \eqref{eq:sde} must (i) be degenerate in the constraint normal wherever a constraint is active, or leave the class \eqref{eq:sde} by (ii) adding a reflection (Skorokhod) term, or (iii) evolving on a parametrized submanifold contained in $\Ksurf_m$.
\end{corollary}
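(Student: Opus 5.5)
The plan is to prove the trichotomy by a localized Dambis--Dubins--Schwarz (DDS) argument applied to the scalar process $X_t=\Phi(w_t)$ near the hitting time $\theta$. By Lemma~\ref{lem:PhiC2} (for the butterfly window) and trivially for the linear calendar and floor windows, $\Phi$ is $C^2$ on $\mathcal O$ with locally Lipschitz Hessian. The Hilbert-space It\^o formula \citep{daprato2014stochastic} therefore applies up to a localizing stopping time $\varrho_n$ at which $\norm{w_t}_{\Hs}$ exceeds $n$:
\[
 \dd X_t=\Bigl(\ip{\ell(w_t)}{\mu_t}_{\Hs}+\tfrac12\tr\bigl(D^2\Phi[w_t]\,C\bigr)\Bigr)\dd t+\ip{\ell(w_t)}{\Sigma\,\dd W_t}_{\Hs}.
\]
The drift $\beta_t$ is continuous and locally bounded, because $\mu$ is continuous and the Hessian bound \eqref{eq:PhiHessian} gives $|\tr(D^2\Phi C)|\le K\tr C$. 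The martingale part $M$ has quadratic variation $\dd\langle M\rangle_t=\sigma_\Phi^2(w_t)\dd t$, and $t\mapsto\sigma_\Phi^2(w_t)$ is continuous since $\ell(\cdot)$ is locally Lipschitz. Invariance gives $X_t\ge0$ for all $t$, and $X_\theta=0$ on $\{\theta<\infty\}$.

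Next I argue by contradiction on the event $E=\{\theta<\infty,\ \sigma_\Phi^2(w_\theta)>0\}$, assumed to have positive probability. By continuity, on $E$ there is a random $\delta>0$ with $\sigma_\Phi^2(w_t)\ge s_0>0$ and $|\beta_t|\le b_0$ on $[\theta,\theta+\delta]$. Restricting to rational $s_0,b_0$ and a deterministic $\delta$ produces a subevent $E'$ of positive probability. On $E'$ the time change $A_u=\langle M\rangle_{\theta+u}-\langle M\rangle_\theta$ is strictly increasing, and by DDS together with the strong Markov property at $\theta$, $M_{\theta+u}-M_\theta=\beta^*_{A_u}$ for a Brownian motion $\beta^*$ independent of $\mathcal F_\theta$. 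Since $A_u\ge s_0u$, the drift contributes at most $b_0u\le (b_0/s_0)A_u$. Hence
\[
 X_{\theta+u}\le \beta^*_{A_u}+(b_0/s_0)A_u.
\]
Brownian motion with bounded drift started at $0$ takes strictly negative values in every interval $[0,\eta]$ almost surely, by Blumenthal's 0-1 law and the law of the iterated logarithm, which dominates the linear term. So $X$ becomes negative immediately after $\theta$ on $E'$, contradicting $w_t\in\Ksurf_m$. Therefore $\sigma_\Phi^2(w_\theta)=0$ a.s.\ on $\{\theta<\infty\}$.

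The consequence is then a classification. Within class \eqref{eq:sde} the only escape is (i), degeneracy of $C$ along $\ell(w)$ at active states. Otherwise one must add a finite-variation pushing term that is not absolutely continuous, which is Skorokhod reflection (ii) and lies outside \eqref{eq:sde}. Alternatively, one restricts the state to a set where the constraint never becomes active, or becomes active only with tangential noise, such as a parametrized submanifold inside $\Ksurf_m$ (iii).

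The main obstacle is making the DDS step rigorous with random $\theta$ and random $\delta$. I would handle it by localizing with stopping times: $\varrho=\inf\{t>\theta:\sigma_\Phi^2(w_t)<s_0\ \text{or}\ |\beta_t|>b_0\}\wedge(\theta+\delta)$. I would then apply the LIL to the time-changed local martingale on $[\theta,\varrho]$, rather than relying on the continuity argument pathwise.
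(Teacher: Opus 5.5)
Your proof is correct and follows the paper's argument. You apply It\^o's formula to $\Phi(w_t)$ using Lemma~\ref{lem:PhiC2}, use a countable decomposition to get deterministic bounds on the quadratic-variation rate and drift just after $\theta$, time-change by Dambis--Dubins--Schwarz, and invoke the law of the iterated logarithm at zero to force $\Phi(w_{\theta+u})<0$ for small $u$. The differences are cosmetic: you absorb the drift as $(b_0/s_0)A_u$ instead of comparing orders in $u$, and the independence of $\beta^*$ from $\mathcal F_\theta$ is not needed (it follows from the DDS construction, not from a strong Markov property, since $w$ need not be Markov).
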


\begin{proof}
$Y_t=\Phi(w_t)$ is a continuous semimartingale by the Hilbert-space It\^o
formula \citep[Ch.~4]{daprato2014stochastic}; Lemma~\ref{lem:PhiC2} supplies
the required $C^2$ regularity for the butterfly window, and the other two
functionals are linear. Write
\[
 Y_t=Y_0+\int_0^t b_s^\Phi\dd s+M_t,
 \qquad
 \dd\langle M\rangle_t=a_t\dd t,
 \qquad
 a_t=\sigma_\Phi^2(w_t),
\]
where
$b_s^\Phi=D\Phi[w_s]\mu_s+\tfrac12\tr(CD^2\Phi[w_s])$.

Suppose $a_\theta>0$ on a positive-probability subset of
$\{\theta<\infty\}$. Continuity of $w$, $D\Phi$, and the coefficients, followed
by localization and a countable decomposition of that event, supplies
deterministic constants $0<\varepsilon<L<\infty$ and a stopping time
$\zeta>\theta$ such that
\begin{equation}\label{eq:DDSlocalization}
 \varepsilon\le a_t\le L,
 \qquad |b_t^\Phi|\le L,
 \qquad \theta\le t\le\zeta.
\end{equation}
For $0\le u\le\zeta-\theta$, put
$A_u=\langle M\rangle_{\theta+u}-\langle M\rangle_\theta$. Then
$\varepsilon u\le A_u\le Lu$. The Dambis--Dubins--Schwarz theorem
\citep[Thm.~V.1.6--1.7]{revuz1999continuous} gives a Brownian motion
$\widetilde B$ such that
\[
 M_{\theta+u}-M_\theta=\widetilde B_{A_u}.
\]
By the Brownian law of the iterated logarithm at zero
\citep[Ch.~II, Sec.~1]{revuz1999continuous}, there are
$s_n\downarrow0$ for which
$\widetilde B_{s_n}\le-\tfrac12\sqrt{2s_n\log\log(1/s_n)}$.
Because $A$ is continuous and strictly increasing under
\eqref{eq:DDSlocalization}, write $s_n=A_{u_n}$ with $u_n\downarrow0$.
The negative Brownian term is then of order
$-\sqrt{u_n\log\log(1/u_n)}$, while the finite-variation term is bounded below
by $-Lu_n$ and above by $Lu_n$. Hence
$Y_{\theta+u_n}<Y_\theta=0$ for all sufficiently large $n$, contradicting
$Y\ge0$. Therefore $a_\theta=0$ a.s.\ on $\{\theta<\infty\}$. Options (ii)
and (iii) are exactly the two exits from the hypothesis set: (ii) changes the
equation, while (iii) changes the state space.
\end{proof}

Corollary~\ref{cor:trichotomy} is the surface-space analogue of two familiar facts: the CIR square-root diffusion survives at $0$ only because its volatility vanishes there \citep{cox1985theory}, and finite-dimensional HJM realizations live on invariant manifolds \citep{bjork2001existence,filipovic2001consistency}. It also gives the design brief for Section~\ref{sec:constructions}: every viable construction implements one of the three options.

\begin{theorem}[Finite-grid viability test; \tier{Proved, cited}]
\label{thm:gridviability}
Let $x_t\in\R^n$ satisfy $\dd x=b(x)\dd t+\sigma(x)\dd B_t$, with locally
Lipschitz coefficients, and let the discretized arbitrage region be
$K=\{x:\phi_j(x)\ge0,\ j=1,\ldots,J\}$, where the $\phi_j$ are $C^2$ and the
active gradients satisfy the standard constraint qualification. Then $K$ is
stochastically invariant if and only if, for every $x\in K$ and every active
$j$,
\begin{equation}\label{eq:gridviability}
 \sigma(x)^\top\nabla\phi_j(x)=0,
 \qquad
 \nabla\phi_j(x)^\top b(x)
 +\frac12\tr\!\left[\sigma(x)\sigma(x)^\top D^2\phi_j(x)\right]\ge0.
\end{equation}
For affine price-grid constraints the Hessian term vanishes: diffusion must be
tangent and drift inward. For the nonlinear butterfly constraint the second
condition contains the exact It\^o-curvature correction.
\end{theorem}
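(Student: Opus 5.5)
The statement is the classical viability characterization for finite-dimensional diffusions on sets with smooth, qualified boundaries, in the form of Doss, Aubin--Da Prato, Buckdahn--Quincampoix--Rainer--Teichmann and Milian. The plan therefore separates necessity from sufficiency. Necessity reuses the argument of Corollary~\ref{cor:trichotomy}. Sufficiency invokes the cited second-order invariance theorem, after checking that its hypotheses hold for our $K$.

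\emph{Necessity.} Suppose $K$ is invariant, and fix $x\in K$ and an active index $j$ with $\phi_j(x)=0$. Start the diffusion at $x$ and set $Y_t=\phi_j(x_t)$. Since $\phi_j\in C^2$, Itô's formula makes $Y$ a continuous semimartingale with drift
\[
\nabla\phi_j^\top b+\tfrac12\tr[\sigma\sigma^\top D^2\phi_j]
\]
and quadratic-variation density $|\sigma^\top\nabla\phi_j|^2$. Invariance forces $Y_t\ge0$ for all $t$.

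If $|\sigma(x)^\top\nabla\phi_j(x)|>0$, continuity gives a localization window on which this density is bounded below. The Dambis--Dubins--Schwarz and iterated-logarithm argument from the proof of Corollary~\ref{cor:trichotomy} then produces times $u_n\downarrow0$ with $Y_{u_n}<0$, a contradiction. Hence $\sigma^\top\nabla\phi_j=0$ at every active point.

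Given tangency, $Y_t=\int_0^t \beta_s\dd s+M_t$ with $\beta_0$ equal to the second expression in \eqref{eq:gridviability}. Suppose $\beta_0<0$. The martingale part now has density $|\sigma^\top\nabla\phi_j|^2(x_s)$, and I will show it is $O(s)$. It vanishes at $s=0$, and because $\sigma$ is locally Lipschitz and $\nabla\phi_j$ is $C^1$, it is bounded by a constant times $|x_s-x|^2$, which has expectation $O(s)$. So $\E[\langle M\rangle_t]=O(t^2)$, and therefore $M_t=O_{\Prob}(t)$.

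\textbf{This is the main obstacle.} A first-order comparison alone is too crude: I need $\Prob[Y_t<0]>0$ for small $t$, but both the drift contribution and $M_t$ are of order $t$. The plan is to show the drift contribution $\int_0^t\beta_s\dd s\approx\beta_0 t$ beats $M_t$ in probability. First I would try a sharper estimate on $M_t/t$ using Lipschitz control of $\sigma^\top\nabla\phi_j$ together with the fact that $|x_s-x|=O(\sqrt s)$. If that fails, I would fall back on the Stroock--Varadhan support theorem. By that theorem, the deterministic ODE $\dot y=b-\tfrac12\sum_r D\sigma_r\,\sigma_r$, with zero control, lies in the support of the process. That ODE exits $K$ whenever the Stratonovich inward condition fails, and under tangency the Stratonovich condition is equivalent to the Itô condition $\beta_0\ge0$. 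It then remains to deduce exit with positive probability.

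\emph{Sufficiency.} Assume \eqref{eq:gridviability} holds. I would cite the second-order viability theorem for closed sets with $C^2$ boundary, for example Buckdahn et al.\ or Da Prato--Frankowska. That theorem requires a set described by finitely many $C^2$ inequalities satisfying a constraint qualification (MFCQ/LICQ) at active points, together with locally Lipschitz coefficients. These are exactly the stated hypotheses, so it yields invariance. The qualification is what rules out corner pathologies where several constraints are active at once.

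For a self-contained route, I would use a penalization argument. Take $V=\sum_j(\phi_j^-)^4$, which is $C^2$, apply Itô's formula near the boundary, and use the two conditions to bound the generator: $\mathcal L V\le C V$ in a neighborhood of $K$. Gronwall then gives $\E V(x_{t\wedge\tau})=0$. The corner case needs MFCQ to control cross terms between different active constraints.

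The final remarks of the statement follow by inspection. For affine $\phi_j$ we have $D^2\phi_j=0$, so the conditions reduce to tangent diffusion and inward drift. For the butterfly $\phi_j$, the term $\tfrac12\tr[\sigma\sigma^\top D^2\phi_j]$ is the Itô curvature correction, the finite-grid analogue of the Hessian in Lemma~\ref{lem:PhiC2}.
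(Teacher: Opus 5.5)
Your outline is the same as the paper's: It\^o's formula on each active $\phi_j$ for necessity, and a cited stochastic Nagumo theorem for sufficiency. The tangency half of necessity is fine, because the DDS/iterated-logarithm argument of Corollary~\ref{cor:trichotomy} transfers verbatim. The genuine gap is the drift half, which you flag but leave open, and your first remedy cannot close it. Tangency at the single point $x$ only gives $|\sigma^\top\nabla\phi_j(x_s)|\lesssim|x_s-x|$. When $\sigma(x)\neq0$ the diffusion moves tangentially, so $|x_s-x|$ really is of order $\sqrt s$, and no sharpening of that bound improves $\E\langle M\rangle_t=O(t^2)$.

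The missing idea is to measure the martingale density against the distance to the active boundary, not the distance to $x$. Under the invariance hypothesis you have already proved tangency at \emph{every} point of $K\cap\{\phi_j=0\}$. Since $\sigma^\top\nabla\phi_j$ is locally Lipschitz, and the constraint qualification (LICQ at corners) gives $\operatorname{dist}(y,K\cap\{\phi_j=0\})\le C\phi_j(y)$ for $y\in K$ near $x$, you get $|\sigma(y)^\top\nabla\phi_j(y)|\le L\,\phi_j(y)$ there. Now stop $Y_t=\phi_j(x_t)\ge0$ at the exit time $\tau$ of a small ball on which also $|\beta|\le B$. Then $\E Y_{t\wedge\tau}^2\le 2B^2t^2+2L^2\int_0^t\E Y_{s\wedge\tau}^2\dd s$, and Gronwall gives $\E Y_{t\wedge\tau}^2=O(t^2)$. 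Hence $\E\langle M\rangle_{t\wedge\tau}=O(t^3)$ and $M_{t\wedge\tau}=o_{\Prob}(t)$. Since $\beta$ is continuous and $\Prob[\tau\le t]\to0$, $Y_{t\wedge\tau}/t\to\beta_0$ in probability, which contradicts $Y\ge0$ when $\beta_0<0$. The paper's one-line assertion that the finite-variation coefficient ``must point inward'' rests on exactly this estimate.

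Your fallback would work for smooth coefficients, but not under the stated hypotheses. The Stroock--Varadhan support theorem, the Stratonovich correction $\tfrac12\sum_r D\sigma_r\,\sigma_r$, and the equivalence of the It\^o and Stratonovich drift conditions all require differentiable $\sigma$. (That equivalence comes from differentiating $\nabla\phi_j^\top\sigma_r=0$ along $\sigma_r$ on the active stratum.) The theorem, however, assumes only locally Lipschitz coefficients. The same caveat applies to the sufficiency results you name. They are typically formulated in terms of the Stratonovich drift and assume differentiable diffusion coefficients, so ``exactly the stated hypotheses'' overstates the match.

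Your penalization route also has a weak step. $\mathcal L V\le CV$ for $V=\sum_j(\phi_j^-)^4$ needs $\mathcal L\phi_j(y)\ge -C|\phi_j(y)|$ just outside $K$. This does not follow from the boundary conditions when $\phi_j$ is only $C^2$. The tangential term $\sum_r\sigma_r(\pi y)^\top\bigl(D^2\phi_j(y)-D^2\phi_j(\pi y)\bigr)\sigma_r(\pi y)$, with $\pi y$ the nearest boundary point, is merely $o(1)$, not $O(|\phi_j(y)|)$. For a single constraint, replacing $\phi_j$ by the signed distance to $\{\phi_j=0\}$ repairs this. At corners the cross terms are not controlled by the sketch you give.
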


\begin{proof}
Necessity follows by applying It\^o's formula to each active $\phi_j$: a
nonzero martingale coefficient contradicts one-sided invariance, and after it
vanishes the finite-variation coefficient must point inward. Sufficiency is the
smooth stochastic Nagumo theorem under the stated qualification
\citep{aubin1990stochastic}. Applied to grid evaluations of positivity,
calendar monotonicity, and butterfly inequalities, it gives
\eqref{eq:gridviability}.
\end{proof}

\subsection{Interior survival: a Borell--TIS estimate}\label{sec:interior}

Away from the boundary the news is good, and quantifiably so. The estimate requires the noise to have two derivatives pointwise, which we state as a hypothesis on the model class rather than pretend it follows from $\xi\in\Hs$.

\begin{assumption}[Smooth noise]\label{ass:smooth}
The increment field $\xi\sim N(0,C)$ admits a version with sample paths in
$C^2(\overline{\Dom})$, and
$M\coloneqq\E\norm{\xi}_{C^2(\overline{\Dom})}<\infty$, where
$\norm{h}_{C^2}=\max_{|\beta|\le2}\sup_{\overline{\Dom}}|\partial^\beta h|$.
A concrete sufficient condition is: for some $\alpha>0$, the KL modes satisfy
$e_i\in C^{2,\alpha}(\overline\Dom)$ and
\begin{equation}\label{eq:C2summability}
 \sum_{i\ge1}\sqrt{\lambda_i}\,
 \norm{e_i}_{C^{2,\alpha}(\overline\Dom)}<\infty.
\end{equation}
Indeed, the Gaussian KL series then converges absolutely in mean in
$C^{2,\alpha}$, hence a.s.\ in $C^2$, and $M<\infty$
\citep{adler2007random,bogachev1998gaussian}.
\end{assumption}

\begin{proposition}[Interior survival; \tier{Proved} under Assumption~\ref{ass:smooth}]\label{prop:survival}
Let $w_0\in\Ksurf_m\cap C^2(\overline{\Dom})$ have joint margin
\[
\delta \;\coloneqq\; \min\Bigl\{\,1,\ \inf_{\overline{\Dom}}(w_0-m),\ \operatorname*{ess\,inf}_{\Dom}\partial_\tau w_0,\ \ L^{-1}\operatorname*{ess\,inf}_{\Dom} g[w_0]\Bigr\} \;>\;0,
\]
where $L$ is a Lipschitz constant of $h\mapsto g[w_0+h]$ from the unit $C^2$-ball into $L^\infty$ (finite because $g$ is a polynomial in $(1/w,\,k,\,w_k,\,w_{kk})$ and $w_0+h\ge m$ there for $\delta\le\inf(w_0-m)$). Let $\bar\sigma^2\coloneqq\sup\{\E\,\lambda(\xi)^2:\ \lambda\in(C^2)^\ast,\ \norm{\lambda}\le1\}$. Then under Assumption~\ref{ass:smooth}, provided $\delta>M$,
\begin{equation}\label{eq:borell}
\Prob\bigl[\,w_0+\xi\notin\Ksurf_m\,\bigr] \;\le\; \Prob\bigl[\norm{\xi}_{C^2}>\delta\bigr] \;\le\; \exp\!\Bigl(-\frac{(\delta-M)^2}{2\bar\sigma^2}\Bigr).
\end{equation}
\end{proposition}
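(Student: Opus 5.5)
The argument splits into a deterministic containment and a Gaussian concentration bound. The first inequality in \eqref{eq:borell} is the event inclusion $\{w_0+\xi\notin\Ksurf_m\}\subseteq\{\norm{\xi}_{C^2}>\delta\}$. We prove it by showing that every $h\in C^2(\overline\Dom)$ with $\norm{h}_{C^2}\le\delta$ satisfies $w_0+h\in\Ksurf_m$. The second inequality is the Borell--TIS inequality applied to the Gaussian process $\{\lambda(\xi):\lambda\in(C^2)^\ast,\ \norm{\lambda}\le1\}$, whose supremum is $\norm{\xi}_{C^2}$ by Hahn--Banach. Assumption~\ref{ass:smooth} supplies the separable $C^2$-valued version and the finite mean $M=\E\norm{\xi}_{C^2}$. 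Because $\norm{\xi}_{C^2}$ is a supremum over a countable norming family of centered Gaussian variables (take countably many point evaluations of $\partial^\beta\xi$ at rational points), Borell--TIS gives $\Prob[\norm{\xi}_{C^2}-M>u]\le\exp(-u^2/(2\bar\sigma^2))$ for $u>0$. Setting $u=\delta-M>0$ yields the right-hand bound.

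For the containment, fix $h$ with $\norm{h}_{C^2}\le\delta\le1$ and check the three constraint families in turn.

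\emph{Floor.} We have $w_0+h\ge w_0-\delta\ge m$, since $\delta\le\inf(w_0-m)$.

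\emph{Calendar.} We have $\partial_\tau(w_0+h)\ge\operatorname*{ess\,inf}\partial_\tau w_0-\norm{h}_{C^2}\ge\delta-\delta=0$ a.e.

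\emph{Butterfly.} Since $h$ lies in the unit $C^2$-ball and the floor step keeps $w_0+h\ge m$, the Lipschitz hypothesis applies between $0$ and $h$:
\[
\norm{g[w_0+h]-g[w_0]}_{L^\infty}\le L\norm{h}_{C^2}\le L\delta\le\operatorname*{ess\,inf} g[w_0].
\]
Hence $g[w_0+h]\ge0$ a.e. Membership $w_0+h\in\Hs$ holds because $\xi\in\Hs$ almost surely ($C$ is a covariance on $\Hs$), so the $C^2$ sample version and the $\Hs$-valued variable agree a.s.

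The constant $L$ is finite because $g$ is a polynomial in $(k,1/w,w_k,w_{kk})$ with $1/w\le1/m$ on the relevant set. Its coefficients are bounded on $\{w_0+h:\norm{h}_{C^2}\le1,\ w_0+h\ge m\}$, since $w_0\in C^2$, so the mean value theorem in $C^2$ gives the Lipschitz bound.

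The main obstacle is the justification of Borell--TIS in this setting: identifying $\bar\sigma^2$ as $\sup_\lambda\E\lambda(\xi)^2$ over the dual unit ball, and reducing to a countable index set so that measurability and a.s. boundedness hold. I would handle both through the $C^{2,\alpha}$ KL series of Assumption~\ref{ass:smooth}, citing \citet{adler2007random}.
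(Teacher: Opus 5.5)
Your proposal is correct and follows essentially the same route as the paper's proof. Both arguments show that $w_0+h\in\Ksurf_m$ whenever $\norm{h}_{C^2}\le\delta$, checking the floor, calendar, and butterfly constraints (the last via the Lipschitz constant $L$), and then apply Borell--TIS to $\norm{\xi}_{C^2}$ with $t=\delta-M$. Your reduction to a countable norming family of point evaluations of $\partial^\beta\xi$ is a detail the paper leaves to the citation; it is harmless, since the variance supremum over that family is at most $\bar\sigma^2$.
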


\begin{proof}
If $\norm{\xi}_{C^2}\le\delta\le1$ then: $w_0+\xi\ge m+\inf(w_0-m)-\delta\ge m$; $\partial_\tau(w_0+\xi)\ge\operatorname{ess\,inf}\partial_\tau w_0-\delta\ge0$ a.e.; and $g[w_0+\xi]\ge g[w_0]-L\norm{\xi}_{C^2}\ge0$ a.e. This proves the first inequality. The second is the Borell--TIS inequality for the norm of a centered Gaussian element of the separable Banach space $C^2(\overline{\Dom})$ \citep{borell1975brunn,adler2007random}: $\Prob[\norm{\xi}\ge \E\norm{\xi}+t]\le\exp(-t^2/(2\bar\sigma^2))$ with $t=\delta-M$.
\end{proof}

Together, Theorem~\ref{thm:boundary} and Proposition~\ref{prop:survival} say that the Gaussian factor picture is a boundary-layer approximation: exponentially safe at margin $\delta\gg M$, exactly half-wrong at margin $0$. Empirically, index surfaces spend calm regimes in the interior and stress regimes with the butterfly constraint nearly active in the short-dated wings --- which is where Section~\ref{sec:empirics} predicts unconstrained generators fail (hypothesis H1).

\section{Infinite-dimensional dynamics and exact modal reduction}\label{sec:hilbertdynamics}

The surface has one genuine time variable: calendar time $t$. Log-moneyness and
time to maturity are state coordinates, not additional filtration directions.
Accordingly, the natural dynamic object is an $\Hs$-valued It\^o process
\begin{equation}\label{eq:hilbertSDE}
 \dd w_t=(\mathcal A w_t+F(w_t))\dd t+B(w_t)\dd W_t,
\end{equation}
where $\mathcal A$ generates a strongly continuous semigroup, $W$ is a cylindrical or
$Q$-Wiener process on a separable noise space, and $B(w)$ is Hilbert--Schmidt
from the Cameron--Martin space into $\Hs$.  Its mild form is
\begin{equation}\label{eq:mild}
w_t=S(t)w_0+\int_0^tS(t-s)F(w_s)\dd s
       +\int_0^tS(t-s)B(w_s)\dd W_s .
\end{equation}
In the time-to-maturity parametrization, the generator $\mathcal A$ must include
the Musiela transport $\partial_\tau$ from \eqref{eq:musiela}; the remaining
drift is not arbitrary when \eqref{eq:hilbertSDE} is intended as a pricing
model, because Proposition~\ref{prop:musiela} must then hold jointly with the
forward and discount-factor dynamics. For physical-measure forecasting,
$F$ may instead be estimated statistically, while static feasibility is
enforced separately.
This formulation distinguishes two questions that are often conflated.  The
first is whether a function-valued surface exists; the second is whether a few
factors approximate it accurately.  Trace-class smoothing answers the first,
while the following theorem answers the second in the benchmark case.

\begin{assumption}[Diagonal linear surface dynamics]\label{ass:diagonal}
There is an orthonormal basis $(e_i)_{i\ge1}$ of $\Hs$ such that
$\mathcal A e_i=-\alpha_i e_i$, $\alpha_i\ge0$, and $Qe_i=q_i e_i$, $q_i\ge0$.
The dynamics are $\dd w_t=\mathcal A w_t\dd t+\dd W_t^Q$, and
\begin{equation}\label{eq:summability}
 \sum_{i\ge1}\frac{q_i}{1+\alpha_i}<\infty .
\end{equation}
\end{assumption}

\begin{theorem}[Exact modal SDE and truncation error; \tier{Proved}]
\label{thm:modalSDE}
Under Assumption~\ref{ass:diagonal}, write $x_i(t)=\ip{w_t}{e_i}_{\Hs}$ and
$x_i(0)=\ip{w_0}{e_i}_{\Hs}$. Then
\begin{equation}\label{eq:modeSDE}
 \dd x_i(t)=-\alpha_i x_i(t)\dd t+\sqrt{q_i}\dd\beta_i(t),
 \qquad
 x_i(t)=e^{-\alpha_it}x_i(0)+\sqrt{q_i}\int_0^t e^{-\alpha_i(t-s)}\dd\beta_i(s),
\end{equation}
for independent Brownian motions $(\beta_i)$.  For every $m$, the vector
$(x_1,\ldots,x_m)$ is an exact $m$-dimensional SDE and
\begin{equation}\label{eq:dynamictruncation}
 \E\left\|w_t-\sum_{i=1}^m x_i(t)e_i\right\|_{\Hs}^{2}
 =\sum_{i>m}e^{-2\alpha_it}|x_i(0)|^2
  +\sum_{i>m}q_i\rho_i(t),
 \qquad
 \rho_i(t)=\begin{cases}
 (1-e^{-2\alpha_it})/(2\alpha_i),&\alpha_i>0,\\ t,&\alpha_i=0.
 \end{cases}
\end{equation}
The stochastic convolution is $\Hs$-valued for every $t>0$ if and only if
$\sum_iq_i\rho_i(t)<\infty$, which is equivalent to
\eqref{eq:summability}.
\end{theorem}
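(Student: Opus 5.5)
The plan is to work entirely in the eigenbasis $(e_i)$ and reduce everything to scalar Ornstein--Uhlenbeck computations. First I will represent the $Q$-Wiener process as $W_t^Q=\sum_i\sqrt{q_i}\,\beta_i(t)e_i$, with $\beta_i(t)=q_i^{-1/2}\ip{W_t^Q}{e_i}_{\Hs}$ for $q_i>0$ and an auxiliary independent Brownian motion for $q_i=0$. Because $Q$ is diagonal in $(e_i)$, the $\beta_i$ are jointly Gaussian with covariance $\delta_{ij}t$, hence independent. Next I will use the mild solution \eqref{eq:mild} with $F=0$ and $B=I$; the diagonal generator gives the semigroup $S(t)e_i=e^{-\alpha_it}e_i$. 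Pairing \eqref{eq:mild} with $e_i$ (using that $S(t-s)^\ast e_i=e^{-\alpha_i(t-s)}e_i$ and that bounded functionals pass through the stochastic integral) gives the variation-of-constants formula in \eqref{eq:modeSDE}. It\^o's product rule then shows this is the unique strong solution of the scalar SDE. Independence of the $\beta_i$ makes $(x_1,\ldots,x_m)$ a closed $m$-dimensional linear SDE, so no other mode enters its coefficients.

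For the truncation error I will use Parseval, $\|w_t-\sum_{i\le m}x_ie_i\|_{\Hs}^2=\sum_{i>m}x_i(t)^2$, and take expectations termwise by monotone convergence. Each term has mean $e^{-\alpha_it}x_i(0)$ and, by the It\^o isometry, variance $q_i\int_0^te^{-2\alpha_i(t-s)}\dd s=q_i\rho_i(t)$. So $\E x_i(t)^2=e^{-2\alpha_it}x_i(0)^2+q_i\rho_i(t)$, and summing over $i>m$ gives \eqref{eq:dynamictruncation}. Here $w_0$ is deterministic; for random $w_0$ independent of $W$ the same identity holds with $\E|x_i(0)|^2$. The initial-condition sum is finite because $w_0\in\Hs$ and $e^{-2\alpha_it}\le1$.

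The existence claim is the only step needing care. The stochastic convolution $W_A(t)=\sum_i\sqrt{q_i}\int_0^te^{-\alpha_i(t-s)}\dd\beta_i(s)\,e_i$ is a series of independent centered Gaussians with variances $q_i\rho_i(t)$. If $\sum_iq_i\rho_i(t)<\infty$, the partial sums are Cauchy in $L^2(\Omega;\Hs)$. Conversely, if the series defines an $\Hs$-valued random element, then $\sum_i\ip{W_A(t)}{e_i}^2<\infty$ a.s. By independence and the Gaussian three-series theorem (or Kolmogorov's criterion for sums of squares of independent centered Gaussians), this forces $\sum_iq_i\rho_i(t)<\infty$.

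Finally, to match \eqref{eq:summability} for fixed $t>0$ I will prove the two-sided comparison $c_t/(1+\alpha)\le\rho(t;\alpha)\le C_t/(1+\alpha)$ for $\alpha\ge0$. The upper bound follows from $\rho\le\min(t,1/(2\alpha))$. The lower bound follows from $1-e^{-x}\ge x/(1+x)$ with $x=2\alpha t$, which gives $\rho\ge t/(1+2\alpha t)\ge\min(t,1/2)/(1+\alpha)$; the case $\alpha=0$ gives $\rho=t$ directly. The series $\sum_iq_i\rho_i(t)$ and $\sum_iq_i/(1+\alpha_i)$ are then comparable, which also shows the condition does not depend on $t>0$. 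I expect the main obstacle to be the converse ("only if") direction. The way around it is to insist that $\Hs$-valuedness means a.s. square-summability of the coordinates, at which point the independence argument closes it.
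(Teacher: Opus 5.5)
Your proposal is correct and follows the paper's proof: you expand $W^Q$ in the eigenbasis, project the mild equation onto each $e_i$, obtain the truncation identity from Parseval and the It\^o isometry, and compare $\rho_i(t)$ above and below with $1/(1+\alpha_i)$. Two of your steps are sharper than the paper's. Your lower bound via $1-e^{-x}\ge x/(1+x)$ gives an explicit constant, where the paper splits at $\alpha_i=1$. Your Gaussian argument for the ``only if'' direction (almost-sure square-summability of independent centered Gaussians forces summable variances) supplies a step the paper leaves implicit.
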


\begin{proof}
Expand $W_t^Q=\sum_i\sqrt{q_i}\,\beta_i(t)e_i$ and project the mild equation
onto $e_i$.  This gives \eqref{eq:modeSDE}. Orthogonality and It\^o isometry
give \eqref{eq:dynamictruncation}.  For fixed $t>0$ there are constants
$0<c_t\le C_t<\infty$ such that
$c_t/(1+\alpha_i)\le\rho_i(t)\le C_t/(1+\alpha_i)$: use
$1-e^{-u}\le\min\{u,1\}$ for the upper bound and split the lower bound at
$\alpha_i=1$ (continuity supplies the constant on $[0,1]$, while
$1-e^{-2\alpha_it}\ge1-e^{-2t}$ for $\alpha_i\ge1$). The stated equivalence follows.
\end{proof}

\begin{remark}[What a factor model asserts]\label{rem:factorassertion}
Theorem~\ref{thm:modalSDE} makes the approximation content explicit. A
finite-factor surface model is not merely a regression: it chooses a subspace,
a stochastic law within that subspace, and an omitted-energy budget. If the
factor manifold is also required to lie in $\Ksurf$, the boundary conditions of
Section~\ref{sec:dynamics} constrain its diffusion coefficients. Low rank alone
does not imply no-arbitrage.
\end{remark}

\begin{remark}[Almost-diagonal stability; \tier{Proved, cited}]
\label{rem:almostdiagonal}
Assumption~\ref{ass:diagonal} is a transparent benchmark, not a qualitative
boundary. It is a statistical benchmark on the fixed $(k,\tau)$ grid, not a
diagonalization of the transported pricing generator: the Musiela shift
$\partial_\tau$ is not diagonalizable in the orthonormal dissipative form
postulated by the assumption. Let $Q_0$ be the diagonal part of $Q$ in the eigenbasis of
$\mathcal A$, and suppose the semigroup is contractive. Coupling the two
stochastic convolutions with the same cylindrical noise and using the
Powers--St\o rmer inequality gives
\[
\E\left\|\int_0^tS(t-s)(Q^{1/2}-Q_0^{1/2})\dd W_s\right\|_{\Hs}^2
\le t\|Q^{1/2}-Q_0^{1/2}\|_{\mathrm{HS}}^2
\le t\|Q-Q_0\|_1.
\]
Thus the diagonal reduction is stable in trace norm. On a finite spectral
truncation, gaps of $\mathcal A$ convert the commutator identity
$[\mathcal A,Q]_{ij}=(\alpha_j-\alpha_i)Q_{ij}$ into explicit off-diagonal
bounds \citep{powers1970free}.
\end{remark}

For nonlinear $F$ and state-dependent $B$, the coordinates are coupled and a
Galerkin system is an approximation rather than an exact subsystem. Standard
Hilbert-space SDE theory supplies convergence under Lipschitz or monotonicity
conditions \citep{daprato2014stochastic}. The geometry developed above adds the
separate viability requirement: on every active smooth constraint $\Phi$, the
diffusion must satisfy $B(w)^\ast D\Phi(w)=0$, with the corresponding inward
drift condition after the It\^o correction. This is the analytic bridge from
infinite-dimensional existence to arbitrage-free existence.

\section{Spectral structure: covariance operators and Karhunen--Lo\`eve factors}\label{sec:spectral}

\subsection{The increment covariance operator}\label{sec:covop}

Model daily increments $\Delta w_t = w_{t+1}-w_t$ (or log-increments; the choice is part of the empirical design) as centered second-order random elements of $\Hs$ with covariance operator
\begin{equation}\label{eq:cov}
C:\Hs\to\Hs,\qquad \ip{u}{Cv}_{\Hs} \;=\; \E\,\ip{u}{\Delta w}_{\Hs}\,\ip{v}{\Delta w}_{\Hs}.
\end{equation}
$C$ is self-adjoint, nonnegative, and trace class with $\tr C=\E\norm{\Delta w}^2_{\Hs}$ \citep{daprato2014stochastic,bogachev1998gaussian}. The spectral theorem yields eigenpairs $(\lambda_i,e_i)_{i\ge1}$, $\lambda_1\ge\lambda_2\ge\cdots\ge0$, $\sum_i\lambda_i<\infty$, orthonormal in $\Hs$.

\begin{theorem}[Karhunen--Lo\`eve expansion; \tier{Proved}]\label{thm:kl}
With $(\lambda_i,e_i)$ as above, define, for $\lambda_i>0$,
\[
Z_i=\lambda_i^{-1/2}\ip{e_i}{\Delta w}_{\Hs}.
\]
\[
\Delta w \;=\; \sum_{i\ge1}\sqrt{\lambda_i}\,Z_i\,e_i,
\]
with $(Z_i)$ uncorrelated, mean zero, unit variance, convergence in $L^2(\Omega;\Hs)$, and truncation error $\E\norm{\Delta w-\sum_{i\le q}\sqrt{\lambda_i}Z_ie_i}^2_{\Hs}=\sum_{i>q}\lambda_i$. If $\Delta w$ is Gaussian the $Z_i$ are i.i.d.\ standard normal and convergence also holds a.s.
\end{theorem}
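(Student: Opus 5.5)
The plan is the standard Hilbert-space Karhunen--Lo\`eve argument, organized around the orthonormal eigenbasis of $C$.

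\textbf{Step 1: completing the basis.} The eigenvectors for $\lambda_i>0$ span $\overline{\operatorname{ran}C}$. On $\ker C$, any $u$ satisfies $\E\ip{u}{\Delta w}^2=\ip{u}{Cu}=0$, so $\ip{u}{\Delta w}=0$ a.s. Since $\Hs$ is separable, I will take a countable orthonormal basis of $\ker C$ and conclude that $\Delta w\in\overline{\operatorname{ran}C}$ a.s. This is the one point needing care, because the a.s.\ statement must hold simultaneously for all $u$; countability of the basis handles it. With this, the expansion over indices with $\lambda_i>0$ is complete, and I set $Z_i=0$ (or omit the index) when $\lambda_i=0$.

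\textbf{Step 2: the $Z_i$ and the truncation error.} Put $\Delta w=\sum_i\ip{e_i}{\Delta w}e_i$ pathwise by Parseval. Then $\E Z_iZ_j=\lambda_i^{-1/2}\lambda_j^{-1/2}\ip{e_i}{Ce_j}=\delta_{ij}$, which gives uncorrelated, unit-variance coefficients; mean zero follows from centering. For the remainder, Parseval again gives
\[
\Bigl\|\Delta w-\sum_{i\le q}\sqrt{\lambda_i}Z_ie_i\Bigr\|^2=\sum_{i>q}\ip{e_i}{\Delta w}^2 .
\]
Taking expectations (Tonelli) yields $\sum_{i>q}\lambda_i$, which tends to $0$ because $\tr C<\infty$. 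That is exactly convergence in $L^2(\Omega;\Hs)$.

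\textbf{Step 3: the Gaussian case.} If $\Delta w$ is Gaussian, the vector $(Z_i)$ is jointly Gaussian as a family of continuous linear functionals of $\Delta w$. Being uncorrelated, the $Z_i$ are independent, hence i.i.d.\ $N(0,1)$. For almost sure convergence I will use that the partial sums are then sums of independent symmetric $\Hs$-valued terms converging in $L^2$. The It\^o--Nisio theorem (or L\'evy's equivalence) upgrades this to a.s.\ convergence. More directly, the partial sums are pathwise orthogonal projections of $\Delta w$ onto $\operatorname{span}\{e_1,\dots,e_q\}$, so they converge a.s.\ by Parseval once Step 1 holds. I expect no real obstacle beyond the null-set bookkeeping in Step 1.
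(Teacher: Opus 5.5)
Your proof is correct. It is the standard spectral-theorem argument the paper relies on implicitly: the paper writes out no proof of Theorem~\ref{thm:kl}, only the trace-class property of $C$ and citations to Da Prato--Zabczyk and Bogachev. Your proof also makes explicit two points the paper's statement passes over: the treatment of $\ker C$, which is needed because a decreasing sequence of eigenpairs need not be a basis of $\Hs$, and the pathwise-projection argument, which shows that a.s.\ convergence holds even without the Gaussian hypothesis.
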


\begin{figure}[htbp]
\centering
\includegraphics[width=0.98\textwidth]{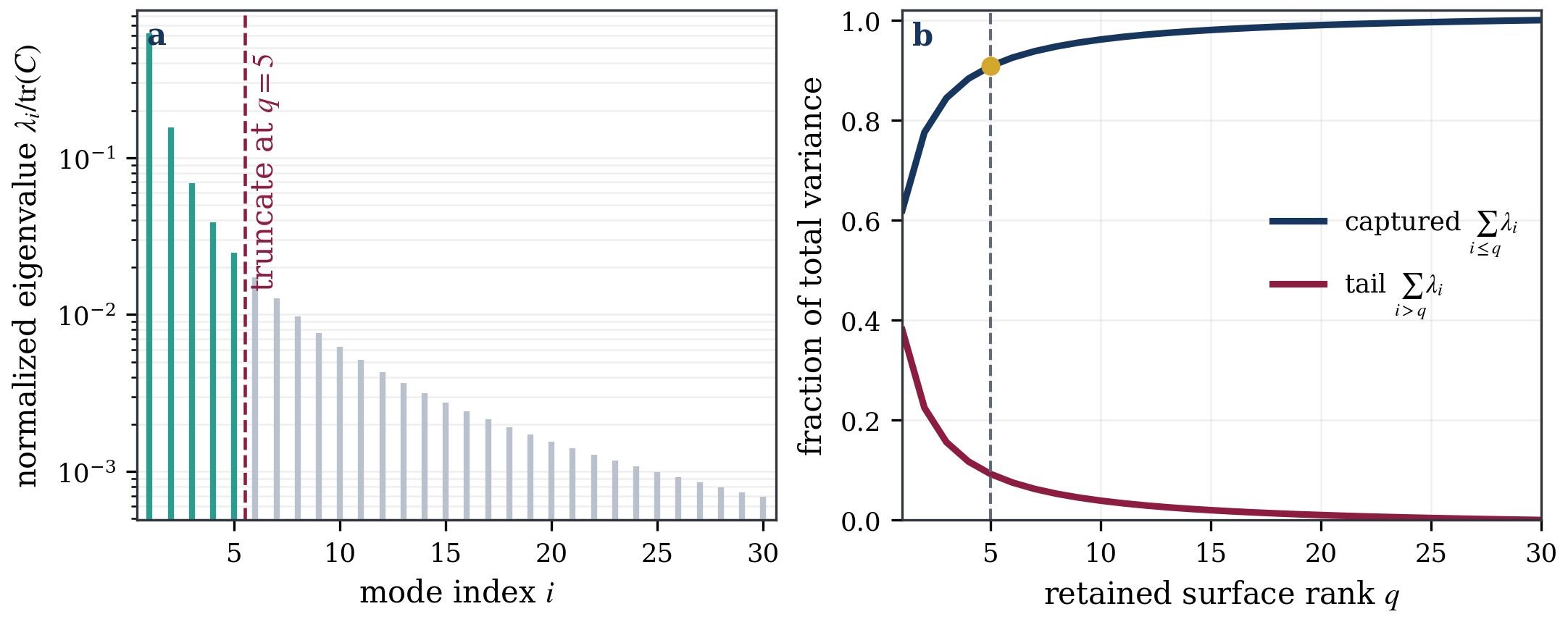}
\caption{Spectral truncation and its auditable error. The display uses the
normalized reference spectrum $\lambda_i\propto i^{-2}$ only to visualize
Theorem~\ref{thm:kl}: retaining rank $q$ captures
$\sum_{i\le q}\lambda_i$ and leaves the exact mean-square tail
$\sum_{i>q}\lambda_i$. It is not an empirical eigenspectrum.}
\label{fig:spectralbudget}
\end{figure}

Three remarks tie the abstract statement to surface practice. First, \emph{the modes are the factors}: level, term-structure tilt, skew, and smile curvature emerge as the leading $e_i$ in every published surface PCA since \citet{skiadopoulos1999dynamics} and \citet{cont2002dynamics}; Theorem~\ref{thm:kl} equips them with a norm, an orthogonality, and an error bound that grid PCA lacks. Second, \emph{the weight matters}: on a discrete quote grid $\{x_j\}$ with quadrature weights $\{q_j\}$ and Sobolev weight $\rho$, consistent estimation of $(\lambda_i,e_i)$ is PCA of the matrix $S^{1/2}\widehat\Sigma S^{1/2}$ with $S=\operatorname{diag}(q_j\rho(x_j))$ extended to derivative penalties --- i.e., functional PCA \citep{ramsay2005functional,horvath2012inference}, not raw PCA; ignoring $S$ tilts eigenvectors toward densely quoted short-dated strikes. Third, \emph{estimation is its own problem}: $\widehat C$ inherits every pathology of large covariance estimation, and the repair taxonomy of \citet{noguer2026covariance} --- which remedy fixes which failure --- applies verbatim to the operator setting, with the market-mode dominance analysis of \citet{noguer2026market} describing the $\lambda_1$-gap regime that surface increments empirically occupy.

\subsection{Constraints and the spectral picture}\label{sec:constraintspectral}

The KL expansion is an interior object: it diagonalizes the unconditional second moment, while the conditional law near $\partial\Ksurf_m$ is non-Gaussian by Theorem~\ref{thm:boundary} no matter what the modes are. The reconciliation is quantitative, not qualitative:

\begin{corollary}[Constraint-consistent truncation; \tier{Proved} under Assumption~\ref{ass:smooth}]\label{cor:truncation}
Let $\xi^{(q)}=\sum_{i\le q}\sqrt{\lambda_i}Z_ie_i$ be a Gaussian KL truncation whose modes satisfy the sufficient condition of Assumption~\ref{ass:smooth}, and let $w_0$ have margin $\delta$ as in Proposition~\ref{prop:survival} with $M_q\coloneqq\E\norm{\xi^{(q)}}_{C^2}<\delta$. Then $\Prob[w_0+\xi^{(q)}\notin\Ksurf_m]\le\exp(-(\delta-M_q)^2/(2\bar\sigma_q^2))$, with $M_q$ and $\bar\sigma_q$ nondecreasing in $q$.
\end{corollary}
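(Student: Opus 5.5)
The plan is to apply Proposition~\ref{prop:survival} verbatim with $\xi$ replaced by the truncated field $\xi^{(q)}$, and then to prove the monotonicity in $q$ separately. Three things need checking: that $\xi^{(q)}$ satisfies Assumption~\ref{ass:smooth}, that the margin $\delta$ and the Lipschitz constant $L$ depend only on $w_0$ and not on the noise, and that the resulting Borell--TIS constants are $M_q$ and $\bar\sigma_q^2\coloneqq\sup\{\E\,\lambda(\xi^{(q)})^2:\norm{\lambda}_{(C^2)^\ast}\le1\}$.

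\emph{Step 1: regularity.} Under the sufficient condition \eqref{eq:C2summability}, each $e_i$ with $i\le q$ lies in $C^{2,\alpha}(\overline\Dom)$. Hence $\xi^{(q)}$ is a finite Gaussian sum with $C^2$ sample paths. Its mean norm obeys
\[
M_q\le\sum_{i\le q}\sqrt{\lambda_i}\,\norm{e_i}_{C^2}\,\E|Z_i|<\infty .
\]
It is also a centered Gaussian element of the separable Banach space $C^2(\overline\Dom)$: the $Z_i$ are jointly Gaussian under the Gaussian reading of Theorem~\ref{thm:kl}. So Assumption~\ref{ass:smooth} holds for $\xi^{(q)}$ with $M=M_q$.

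\emph{Step 2: application of Proposition~\ref{prop:survival}.} The first inequality in \eqref{eq:borell} is deterministic. On the event $\{\norm{\xi^{(q)}}_{C^2}\le\delta\}$, the floor, calendar and butterfly checks in that proof go through unchanged, because they use only $w_0$, $\delta$ and $L$. Borell--TIS with $t=\delta-M_q>0$ then gives the stated bound.

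\emph{Step 3: monotonicity.} This is the main point. Write $\xi^{(q+1)}=\xi^{(q)}+\eta$, where $\eta=\sqrt{\lambda_{q+1}}Z_{q+1}e_{q+1}$ is independent of $\xi^{(q)}$, since Gaussian and uncorrelated implies independent.

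For the variance term: for every $\lambda\in(C^2)^\ast$ we have
\[
\E\lambda(\xi^{(q+1)})^2=\E\lambda(\xi^{(q)})^2+\E\lambda(\eta)^2\ge\E\lambda(\xi^{(q)})^2 .
\]
Taking the supremum gives $\bar\sigma_{q+1}\ge\bar\sigma_q$.

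For the mean norm: conditional Jensen with $\E[\eta\mid\xi^{(q)}]=0$ and convexity of the norm give
\[
\norm{\xi^{(q)}}_{C^2}=\norm{\E[\xi^{(q)}+\eta\mid\xi^{(q)}]}_{C^2}\le\E\bigl[\norm{\xi^{(q+1)}}_{C^2}\bigm|\xi^{(q)}\bigr].
\]
Taking expectations yields $M_q\le M_{q+1}$. The only technical care needed is that the conditional expectation is a Bochner integral in $C^2$, which is harmless because the sum is finite-dimensional.

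I expect Step 3 to be the only nontrivial step, and the symmetrization/Jensen argument should settle it. A side remark should be added: monotonicity means the bound weakens as $q$ grows, while it still remains below the full-field bound whenever $M<\delta$.
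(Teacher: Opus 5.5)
Your proposal is correct and takes essentially the paper's route. The paper gives no written proof: the corollary is meant as a direct application of Proposition~\ref{prop:survival} to the truncated field $\xi^{(q)}$, and the monotonicity of $M_q$ and $\bar\sigma_q$ is asserted without justification, which your independence argument for $\bar\sigma_q$ and Jensen argument for $M_q$ correctly supply.
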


Truncation is thus a \emph{safety parameter}: fewer modes mean smaller $M_q$, hence exponentially fewer violations --- at the price of the reconstruction error $\sum_{i>q}\lambda_i$. Hypothesis H3 of Section~\ref{sec:empirics} makes the resulting trade-off falsifiable.

\section{Functional Greeks and minimum-variance field hedging}\label{sec:hedging}

\subsection{The vega field}\label{sec:vegafield}

Let $\Pi:\mathcal{U}\subset\Hs\to\R$ be the mark-to-market value of a book as a functional of the surface, defined and Fr\'echet differentiable on an open set $\mathcal{U}\supset\Ksurf_m$ of surfaces (for books of European payoffs priced by the quoted surface, differentiability follows from \eqref{eq:vega} and the chain rule; for path-dependent books it is a model property).

\begin{definition}[Vega field]\label{def:vegafield}
The \emph{vega field} of $\Pi$ at $w$ is the Riesz representer $\nu_w\in\Hs$ of $D\Pi[w]$:
\begin{equation}\label{eq:vegafielddef}
D\Pi[w]h \;=\; \ip{\nu_w}{h}_{\Hs} \qquad \text{for all } h\in\Hs .
\end{equation}
\end{definition}

The vega field is one object where the desk keeps a spreadsheet of buckets, and the two are reconciled by the kernel structure of $\Hs$:

\begin{lemma}[Bucketed vegas are samples of the field; \tier{Proved}]\label{lem:bucket}
Let $R_x$ be the kernel sections of Proposition~\ref{prop:embedding}(ii) and define the \emph{kernel bump} at node $x_j$ as the surface perturbation $h_j=R_{x_j}$. Then the bucketed vega with respect to kernel bumps equals the field sampled at the node:
\[
D\Pi[w]\,R_{x_j} \;=\; \ip{\nu_w}{R_{x_j}}_{\Hs} \;=\; \nu_w(x_j).
\]
\end{lemma}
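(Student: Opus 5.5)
The identity is a two-step chain: apply Definition~\ref{def:vegafield} to the specific direction $h=R_{x_j}$, then apply the reproducing property of Proposition~\ref{prop:embedding}(ii) to the function $\nu_w$. No approximation or limiting argument is needed, because everything happens at the level of bounded linear functionals on $\Hs$.

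First I would check that the kernel bump is an admissible direction. By Proposition~\ref{prop:embedding}(ii), point evaluation $\delta_{x_j}$ is bounded on $\Hs$ for every $x_j\in\overline{\Dom}$. Its Riesz representer $R_{x_j}$ therefore exists and lies in $\Hs$. Since $\Pi$ is Fr\'echet differentiable on the open set $\mathcal U\supset\Ksurf_m$, the derivative $D\Pi[w]$ is a bounded linear functional on all of $\Hs$, so $D\Pi[w]R_{x_j}$ is well defined. The \emph{bucketed vega} is then the Gateaux limit $\lim_{t\to0}t^{-1}\bigl(\Pi[w+tR_{x_j}]-\Pi[w]\bigr)$, which equals $D\Pi[w]R_{x_j}$ because $w+tR_{x_j}\in\mathcal U$ for small $|t|$.

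Second, \eqref{eq:vegafielddef} with $h=R_{x_j}$ gives $D\Pi[w]R_{x_j}=\ip{\nu_w}{R_{x_j}}_{\Hs}$. The inner product is real and symmetric, and the reproducing property $u(x)=\ip{R_x}{u}_{\Hs}$ holds for every $u\in\Hs$. Applying it to $u=\nu_w\in\Hs$ yields $\ip{\nu_w}{R_{x_j}}_{\Hs}=\nu_w(x_j)$. Here $\nu_w(x_j)$ means the value of the continuous representative, which exists by the embedding $H^2\hookrightarrow C^0$ in Proposition~\ref{prop:embedding}(i).

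I expect no genuine obstacle. The only point requiring care is that pointwise sampling of $\nu_w$ makes sense, and that is exactly what the RKHS structure guarantees. I would also note that $R_{x_j}$, and hence the bucket, depends on the weight $\rho$, but the identity holds for every admissible $\rho$.
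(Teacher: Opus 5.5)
Your proposal is correct and matches the paper's proof: the first equality is the defining identity \eqref{eq:vegafielddef} applied with $h=R_{x_j}$, and the second is the reproducing property of $\Hs$ applied to $\nu_w$. Your extra remarks on well-definedness are accurate details the paper leaves implicit: the Gateaux limit, the symmetry of the real inner product, and the use of the continuous representative.
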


\begin{proof}
The first equality is \eqref{eq:vegafielddef}; the second is the reproducing property.
\end{proof}

\begin{lemma}[Covariance under a change of Sobolev weight; \tier{Proved}]
\label{lem:weightcovariance}
Let $\ip{\cdot}{\cdot}_1$ and $\ip{\cdot}{\cdot}_2$ be the equivalent
$H^2$ inner products generated by two admissible weights $\rho_1,\rho_2$.
There is a unique bounded, positive, self-adjoint, invertible operator $T$ on
$(\Hs,\ip{\cdot}{\cdot}_1)$ such that
\begin{equation}\label{eq:weightmetric}
 \ip{u}{v}_2=\ip{Tu}{v}_1.
\end{equation}
For the same derivative covector $D\Pi[w]$, kernel evaluation, and random
increment $X$, let $\nu_i,R_x^{(i)},C_i$ denote their Riesz representers and
covariance operators in metric $i$. Then
\begin{equation}\label{eq:weighttransforms}
 \nu_2=T^{-1}\nu_1,\qquad
 R_x^{(2)}=T^{-1}R_x^{(1)},\qquad
 C_2=C_1T.
\end{equation}
If every book and instrument sensitivity is transformed accordingly, both the
hedge weights \eqref{eq:alphastar} and the residual variance
\eqref{eq:residual} are unchanged.
\end{lemma}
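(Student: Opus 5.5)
The plan is to build $T$ from the Riesz/Lax--Milgram theorem on the first metric, then read off every transformation law by testing against arbitrary $h\in\Hs$ and comparing representers. The only analytic input is the norm equivalence recorded after \eqref{eq:norm}. Both weights satisfy $0<\rho_{\min}\le\rho_i\le\rho_{\max}$, so each $\ip{\cdot}{\cdot}_i$ is equivalent to the standard $H^2$ inner product, and hence to the other.

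\emph{Constructing $T$.} Consider the bilinear form $a(u,v)=\ip{u}{v}_2$ on $(\Hs,\ip{\cdot}{\cdot}_1)$. It is bounded, symmetric and coercive: $a(u,u)\ge c\norm{u}_1^2$ with $c=\min\rho_2/\max\rho_1>0$, because every derivative term in \eqref{eq:norm} is weighted pointwise. For fixed $u$, Riesz representation in metric $1$ gives a unique $Tu$ with $a(u,v)=\ip{Tu}{v}_1$ for all $v$. Linearity and boundedness of $T$ follow from those of $a$, and symmetry of $a$ gives self-adjointness in metric $1$. Coercivity gives $\ip{Tu}{u}_1\ge c\norm{u}_1^2$, so $T$ is positive. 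Lax--Milgram then gives invertibility with $\norm{T^{-1}}\le 1/c$. Uniqueness holds because two such operators agree against every $v$. A useful identity to record is $\ip{u}{Tz}_1=\ip{Tu}{z}_1=\ip{u}{z}_2$.

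\emph{Transformation laws.} For the vega field, $\ip{\nu_1}{h}_1=D\Pi[w]h=\ip{\nu_2}{h}_2=\ip{T\nu_2}{h}_1$ for all $h$, so $\nu_1=T\nu_2$. The kernel sections satisfy the same identity with the covector $\delta_x$ in place of $D\Pi[w]$, so the same argument gives $R_x^{(2)}=T^{-1}R_x^{(1)}$. For the covariance, use the definition \eqref{eq:cov} in metric $2$:
\[
\ip{u}{C_2v}_2=\E\,\ip{Tu}{X}_1\ip{Tv}{X}_1=\ip{Tu}{C_1Tv}_1=\ip{u}{C_1Tv}_2 .
\]
This yields $C_2=C_1T$. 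As a consistency check, I will verify that $C_1T$ is self-adjoint and nonnegative in metric $2$:
\[
\ip{u}{C_1Tv}_2=\ip{Tu}{C_1Tv}_1,
\]
and the right side is symmetric in $u,v$ and nonnegative when $u=v$.

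\emph{Hedge invariance; this is where the bookkeeping is delicate.} I interpret $H:\R^n\to\Hs$ as $H\alpha=\sum_j\alpha_j\eta_j$, where $\eta_j$ is the vega field of traded instrument $j$, and $H^\ast$ as the adjoint taken in the ambient metric. This is the convention in which \eqref{eq:alphastar} is written as $(H^\ast CH)^{-1}H^\ast C\nu$; it is defined only later in the paper, so I state it as an assumption. Instruments transform like books, so $H_2=T^{-1}H_1$. The key point is that the adjoint changes with the metric:
\[
\ip{H_2\alpha}{u}_2=\ip{TH_2\alpha}{u}_1=\ip{H_1\alpha}{u}_1 ,
\]
so $H_2^{\ast_2}=H_1^{\ast_1}$. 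Then
\[
H_2^{\ast}C_2H_2=H_1^{\ast}C_1TT^{-1}H_1=G_1,
\qquad
H_2^{\ast}C_2\nu_2=H_1^{\ast}C_1TT^{-1}\nu_1=H_1^\ast C_1\nu_1 ,
\]
so $\alpha^\ast$ is unchanged. For the residual variance \eqref{eq:residual}, I expect it to be built from the quadratic forms $\ip{\nu}{C\nu}$, $H^\ast C\nu$ and $G$. The first is invariant by the same computation: $\ip{\nu_2}{C_1T\nu_2}_2=\ip{T\nu_2}{C_1T\nu_2}_1=\ip{\nu_1}{C_1\nu_1}_1$. The other two were shown invariant above, so the residual is invariant.

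A cleaner cross-check, and my fallback if the formula's exact form differs, is intrinsic: the residual equals $\E\bigl(D\Pi[w]X-\sum_j\alpha_jDV_j[w]X\bigr)^2$, which involves only covectors applied to $X$ and is therefore metric-free. The main obstacle is not analytic but notational: keeping track of which adjoint is meant.
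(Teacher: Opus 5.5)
Your proposal is correct and follows the paper's proof essentially step for step. You obtain $T$ by Riesz representation in metric~1, derive the representer laws by testing against arbitrary $h$, get $C_2=C_1T$ from the same covariance computation, and prove hedge invariance through $H_2=T^{-1}H_1$ and $H_2^{\ast_2}=H_1^{\ast_1}$. Your explicit check that $\ip{\nu_2}{C_2\nu_2}_2=\ip{\nu_1}{C_1\nu_1}_1$, together with the metric-free form $\E\bigl(D\Pi[w]X-\sum_j\alpha_j DP_j[w]X\bigr)^2$ of the residual, usefully fills in a term that the paper's ``substitution'' step leaves implicit.
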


\begin{proof}
The Riesz theorem applied to the identity map between the two equivalent
Hilbert norms gives $T$ and \eqref{eq:weightmetric}. Since
$D\Pi[w]h=\ip{\nu_1}{h}_1=\ip{T\nu_2}{h}_1$, one has
$\nu_1=T\nu_2$; the same calculation for evaluation gives
$R_x^{(1)}=TR_x^{(2)}$. Moreover,
\[
 \ip{u}{C_2v}_2
 =\E\ip{u}{X}_2\ip{v}{X}_2
 =\ip{Tu}{C_1Tv}_1
 =\ip{u}{C_1Tv}_2,
\]
so $C_2=C_1T$. Let $H_i^{\ast_i}$ denote the adjoint in metric $i$.
Since $H_2=T^{-1}H_1$ by the same Riesz argument,
$H_2^{\ast_2}C_2H_2=H_1^{\ast_1}C_1H_1$ and
$H_2^{\ast_2}C_2\nu_2=H_1^{\ast_1}C_1\nu_1$; substitution into
\eqref{eq:alphastar}--\eqref{eq:residual} proves invariance.
\end{proof}

Lemma~\ref{lem:bucket} says the familiar bucket report is a pointwise
discretization of a well-defined function, \emph{provided} the bump shapes are
the kernel sections of the chosen inner product. For arbitrary linearly
independent bumps $b_1,\ldots,b_J$, the reported vector
$\beta_j=D\Pi[w]b_j$ and the coefficients $a$ of the projected representer
$\nu_B=\sum_i a_i b_i$ satisfy
\begin{equation}\label{eq:bumpgram}
 \beta=G_Ba,\qquad (G_B)_{ij}=\ip{b_i}{b_j}_{\Hs}.
\end{equation}
Thus triangular buckets are Gram coordinates; kernel buckets are literal
samples. Lemma~\ref{lem:weightcovariance} shows how the field and covariance
change with $\rho$. The weight changes their representation, not the underlying
derivative or optimal hedge.

\subsection{The minimum-variance field hedge}\label{sec:mvhedge}

Hedge with $n$ traded instruments (options, VIX futures, variance swaps) whose values $P_i$ are Fr\'echet differentiable in $w$ with sensitivity fields $\eta_i=$ Riesz representers of $DP_i[w]$. Assemble the linear map
\[
H:\R^n\to\Hs,\qquad H\alpha=\sum_{i=1}^n\alpha_i\,\eta_i,
\qquad\text{with adjoint}\qquad
H^\ast:\Hs\to\R^n,\quad (H^\ast u)_i=\ip{\eta_i}{u}_{\Hs}.
\]
To first order, the one-period P\&L of the hedged book against a surface increment $\Delta w$ with covariance operator $C$ is $\ip{\nu-H\alpha}{\Delta w}_{\Hs}$, with variance
\begin{equation}\label{eq:variance}
V(\alpha) \;=\; \ip{\,\nu-H\alpha\,}{\;C\,(\nu-H\alpha)\,}_{\Hs}.
\end{equation}

\begin{theorem}[Minimum-variance field hedge; \tier{Proved}]\label{thm:hedge}
Let $G\coloneqq H^\ast C H\in\R^{n\times n}$ (the Gram matrix of the instrument fields in the $C$-semi-inner-product $\ip{u}{v}_C\coloneqq\ip{u}{Cv}_{\Hs}$). If $G\succ0$ --- equivalently, the fields $C^{1/2}\eta_1,\dots,C^{1/2}\eta_n$ are linearly independent --- then $V$ has the unique minimizer
\begin{equation}\label{eq:alphastar}
\boxed{\ \alpha^\ast \;=\; G^{-1}\,H^\ast C\,\nu\ }
\end{equation}
with minimized residual variance
\begin{equation}\label{eq:residual}
V(\alpha^\ast) \;=\; \ip{\nu}{C\nu}_{\Hs} \;-\; \bigl(H^\ast C\nu\bigr)^{\!\top} G^{-1}\bigl(H^\ast C\nu\bigr) \;\ge\;0 .
\end{equation}
Moreover $H\alpha^\ast$ is the $\ip{\cdot}{\cdot}_C$-orthogonal projection of $\nu$ onto $\operatorname{span}\{\eta_i\}$, and \eqref{eq:residual} vanishes iff $C^{1/2}\nu\in\operatorname{span}\{C^{1/2}\eta_i\}$.
\end{theorem}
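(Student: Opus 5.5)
The plan is to reduce the minimization of $V$ to a finite-dimensional quadratic program in $\alpha\in\R^n$. I will expand \eqref{eq:variance} using self-adjointness of $C$ and the definition of the adjoint $H^\ast$:
\[
V(\alpha)=\ip{\nu}{C\nu}_{\Hs}-2\,\alpha^\top H^\ast C\nu+\alpha^\top H^\ast C H\alpha
=\ip{\nu}{C\nu}_{\Hs}-2\,\alpha^\top b+\alpha^\top G\alpha,
\]
where $b\coloneqq H^\ast C\nu\in\R^n$. This step needs $\ip{H\alpha}{C\nu}_{\Hs}=\alpha^\top H^\ast C\nu$ and the symmetry $\ip{H\alpha}{CH\alpha}=\alpha^\top G\alpha$, both immediate from $C=C^\ast$. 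I will also record that $G_{ij}=\ip{\eta_i}{C\eta_j}_{\Hs}=\ip{C^{1/2}\eta_i}{C^{1/2}\eta_j}_{\Hs}$, using the nonnegative square root of the nonnegative self-adjoint trace-class operator $C$. Hence $G$ is the Gram matrix of $C^{1/2}\eta_1,\dots,C^{1/2}\eta_n$. It is always positive semidefinite, and $G\succ0$ exactly when these vectors are linearly independent, since $\alpha^\top G\alpha=\norm{C^{1/2}H\alpha}^2$. This gives the stated equivalence.

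With $G\succ0$, I will complete the square:
\[
V(\alpha)=(\alpha-G^{-1}b)^\top G(\alpha-G^{-1}b)+\ip{\nu}{C\nu}_{\Hs}-b^\top G^{-1}b.
\]
Strict convexity then gives the unique minimizer $\alpha^\ast=G^{-1}b$, which is \eqref{eq:alphastar}, and the minimal value \eqref{eq:residual}. Nonnegativity of $V(\alpha^\ast)$ is automatic, because $V(\alpha)=\norm{C^{1/2}(\nu-H\alpha)}^2_{\Hs}\ge0$ for every $\alpha$.

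For the projection statement, I will work in the Hilbert space $\Hs$ with the vectors $C^{1/2}\eta_i$. The first-order condition $G\alpha^\ast=b$ reads $\ip{\eta_i}{C(\nu-H\alpha^\ast)}_{\Hs}=0$ for all $i$. In other words, $\nu-H\alpha^\ast$ is $\ip{\cdot}{\cdot}_C$-orthogonal to $\operatorname{span}\{\eta_i\}$, which characterizes $H\alpha^\ast$ as the $C$-orthogonal projection. I will note that this projection is unique as an element of the span modulo the $C$-null space, and unique in coefficients because $G\succ0$.

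Finally, $V(\alpha^\ast)=\norm{C^{1/2}\nu-P(C^{1/2}\nu)}^2$, where $P$ is the ordinary orthogonal projection onto the finite-dimensional, hence closed, subspace $\operatorname{span}\{C^{1/2}\eta_i\}$. So $V(\alpha^\ast)$ vanishes iff $C^{1/2}\nu$ lies in that span. The main obstacle is only bookkeeping: making sure the semi-inner product $\ip{\cdot}{\cdot}_C$, which may be degenerate, does not spoil uniqueness. Passing to $C^{1/2}$ and the genuine Hilbert structure handles this cleanly.
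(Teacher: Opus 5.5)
Your proposal is correct and follows essentially the same route as the paper: it writes $V(\alpha)=\norm{C^{1/2}(\nu-H\alpha)}^2_{\Hs}$ as a strictly convex quadratic with Hessian $2G$, solves the normal equations $G\alpha=H^\ast C\nu$, and reads the projection claim off the normal-equation characterization of least squares in $\ip{\cdot}{\cdot}_C$. Your completing-the-square step and the reduction to the ordinary projection of $C^{1/2}\nu$ onto $\operatorname{span}\{C^{1/2}\eta_i\}$ simply make explicit two points the paper asserts without detail: the equivalence of $G\succ0$ with linear independence, and the vanishing criterion.
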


\begin{proof}
$V(\alpha)=\norm{C^{1/2}(\nu-H\alpha)}^2_{\Hs}$ is a convex quadratic in $\alpha$ with Hessian $2G\succ0$; the first-order condition $H^\ast C(\nu-H\alpha)=0$ gives $G\alpha=H^\ast C\nu$, hence \eqref{eq:alphastar}, and substituting back gives \eqref{eq:residual}. The projection statement is the normal-equation characterization of least squares in the semi-inner product $\ip{\cdot}{\cdot}_C$.
\end{proof}

\begin{figure}[htbp]
\centering
\includegraphics[width=0.98\textwidth]{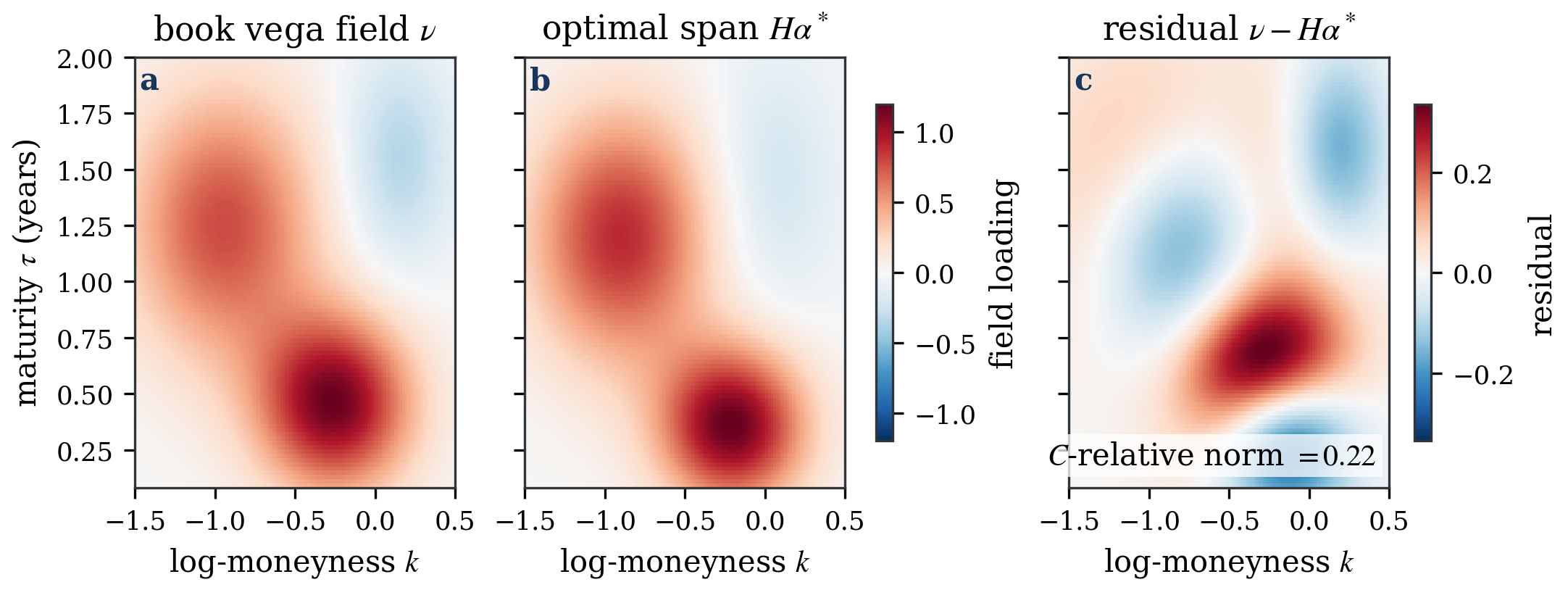}
\caption{Field hedging as covariance-weighted projection. A deterministic
schematic book field $\nu$ (left) is projected onto three smooth instrument
fields to obtain $H\alpha^\ast$ (center); the unspanned component is the
residual field (right). The displayed relative norm uses a stylized diagonal
$C$-metric emphasizing short maturities. The construction illustrates
Theorem~\ref{thm:hedge} and contains no fitted market observations.}
\label{fig:fieldhedge}
\end{figure}

\begin{remark}[Reading the formula]\label{rem:hedgeread}
Formula \eqref{eq:alphastar} is generalized least squares of the vega field on the instrument fields, with the market's own increment covariance as the metric. Its content relative to bucket practice: (i) hedge weights should load on instruments whose sensitivity fields align with $\nu$ \emph{in the directions the surface actually moves} --- a short-dated wing option is a poor hedge for a long-dated ATM exposure even if their buckets overlap, because $C$ assigns those directions nearly orthogonal dynamics; (ii) the residual \eqref{eq:residual} is a computable lower bound on achievable vega P\&L variance, hence an audit statistic (hypothesis H4); (iii) when instruments are $C$-collinear, $G$ is ill-conditioned and \eqref{eq:alphastar} should be regularized, $\alpha^\ast_\lambda=(G+\lambda I)^{-1}H^\ast C\nu$ --- and which regularization to choose is precisely the operator-repair question catalogued in \citet{noguer2026covariance}. The joint hedge including the underlying (spot delta alongside the field) is a two-block version of the same normal equations, Appendix~\ref{app:joint}.
\end{remark}

\section{Three viable constructions}\label{sec:constructions}

Corollary~\ref{cor:trichotomy} leaves exactly three doors open. Each of the following templates walks through one of them.

\subsection{Construction I: price-coordinate dynamics (convexity door)}\label{sec:constructionI}

Fix an economically nonbinding upper variance cap $\Mcap>m$ and work in the banded
price set $\Kcmbar$. Proposition~\ref{prop:chartcorrespondence} makes this set
closed and convex while keeping $B^{-1}$ uniformly regular. Pose reflected
(Skorokhod-type) dynamics
\begin{equation}\label{eq:reflected}
\dd c_t \;=\; \mu^c_t\,\dd t \;+\; \Sigma^c\,\dd W_t \;+\; \dd\eta_t,
\qquad c_t\in\Kcmbar,\qquad
\eta \ \text{of bounded variation, } \dd\eta_t\in -N_{\Kcmbar}(c_t),
\end{equation}
with $N_{\Kcmbar}$ the (convex-analysis) normal cone; the reflection term acts
only when a constraint is active and pushes inward. The surface is recovered
uniquely as $w_t=B^{-1}[c_t]\in\Ksurf_{m,\Mcap}$, and a martingale drift restriction
on $\mu^c$ (each $c_t(k,\tau)$ deflated appropriately) connects the construction
to the market-model literature
\citep{schonbucher1999market,schweizer2008term,carmona2009local,kallsen2015hjm}.

\begin{proposition}[Status; \tier{Conditional}]\label{prop:reflectedstatus}
Existence and uniqueness theories for stochastic variational inequalities and reflected equations on closed convex sets in Hilbert space --- \citet{haussmann1989stochastic} for the parabolic variational setting, \citet{nualart1992reflection} for reflected SPDEs, \citet{cepa1998skorohod} and \citet{slominski2001euler} in finite dimensions --- cover equations of the form \eqref{eq:reflected} under structural hypotheses on the set and coefficients. We tier well-posedness for the specific set $\Kcmbar$ \tier{Conditional} on verifying those hypotheses (notably the characterization of $N_{\Kcmbar}$ generated by the static and band constraints, and the compatibility of $\Sigma^c$ with them); the verification is a self-contained technical project that this paper scopes but does not execute. The band removes the separate inverse-map pathology: no reflected path can be pinned at the intrinsic boundary or at $c=1$.
\end{proposition}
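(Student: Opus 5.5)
The statement itself is tiered \tier{Conditional}, so what can be proved here is its conditional content together with the one claim it makes outright, the final sentence about the band. The plan has three parts. First, reduce well-posedness of \eqref{eq:reflected} to the abstract hypotheses of a cited theorem. Second, check which of those hypotheses already follow from earlier results. Third, isolate the remaining verification as the explicit condition.

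\emph{Step one: set-theoretic hypotheses.} The abstract theories (Cépa's multivalued SDEs, and the reflected-SPDE framework of \citet{nualart1992reflection}) need a nonempty closed convex set $K$ in a separable Hilbert space. Proposition~\ref{prop:chartcorrespondence} gives closedness and convexity of $\Kcmbar$ directly. Nonemptiness of the interior relative to the noise directions is needed for Cépa's Skorokhod problem. I would exhibit a strictly admissible point by taking $c=B[w^\circ]$ for a smooth SSVI-type $w^\circ$ with $m<w^\circ<\Mcap$, $\partial_\tau w^\circ>0$ and $g[w^\circ]>0$. The identities \eqref{eq:chartidentities} transfer these strict margins to the price chart. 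Since the price constraints are affine, the indicator $I_{\Kcmbar}$ is a proper lower-semicontinuous convex function. Its subdifferential is $N_{\Kcmbar}$, which is maximal monotone by Rockafellar's theorem. This yields the multivalued-SDE formulation $\dd c_t\in\mu^c\dd t+\Sigma^c\dd W_t-\partial I_{\Kcmbar}(c_t)\dd t$.

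\emph{Step two: coefficient hypotheses.} I would take $\mu^c$ Lipschitz and $\Sigma^c$ Hilbert--Schmidt. Then the finite-dimensional Cépa theorem applies verbatim on any finite grid, where $\Kcmbar$ becomes a polyhedron. That gives a \tier{Proved} finite-grid statement, matching the ``fully well-posed finite-grid reflected price construction'' announced in the contributions. Slominski's Euler convergence supplies the approximation scheme.

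\emph{Step three: the infinite-dimensional case.} This is the main obstacle. The cited Hilbert-space results assume either a specific order-type constraint (for example $u\ge0$ in \citet{nualart1992reflection}) or an interior condition compatible with the Cameron--Martin space of $\Sigma^c$. The set $\Kcmbar$ has empty interior in $\Hs$ for the a.e.\ differential constraints, and its normal cone involves adjoints of $\partial_\tau$ and $\partial_{kk}-\partial_k$. The first thing I would try is Galerkin projection onto grids, using uniform bounds from monotonicity of $N_{\Kcmbar}$ to pass to a limit. Identifying the limiting reflection measure with an element of $-N_{\Kcmbar}(c_t)$ requires the characterization of $N_{\Kcmbar}$ as generated by the three constraint families. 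That characterization is a constraint qualification of the kind in Remark~\ref{rem:cq}, and it is precisely what I would leave as the stated condition.

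\emph{The band claim.} This part is unconditional and proved directly. Any solution satisfies $c_t\in\Kcmbar$, so $B(k,m)\le c_t\le B(k,\Mcap)$. Because $B(\cdot,m)>(1-e^k)_+$ and $B(\cdot,\Mcap)<1$ on the compact window, $c_t$ stays uniformly away from the intrinsic wall and from $c=1$. The inverse map $B^{-1}$ is then uniformly smooth there by Proposition~\ref{prop:chartcorrespondence}, so $w_t=B^{-1}[c_t]\in\Ksurf_{m,\Mcap}$.
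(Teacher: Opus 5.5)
Your proposal is correct and follows essentially the paper's own reasoning: closed convexity of $\Kcmbar$ from Proposition~\ref{prop:chartcorrespondence}; the finite-grid multivalued-Skorokhod and projected-Euler route of \citet{cepa1998skorohod} and \citet{slominski2001euler}, which the paper makes rigorous separately in Theorem~\ref{thm:gridreflection}; the function-space characterization of $N_{\Kcmbar}$ left as the explicit condition; and the band claim from $B(k,m)>(1-e^k)_+$, $B(k,\Mcap)<1$ and uniform positivity of $B_w$ on the compact band. One precision: on a grid the relevant polytope is defined by the difference matrices $D_\tau^\Gamma,L_k^\Gamma$ rather than being $\Kcmbar$ itself, and the paper assumes its interior is nonempty; your continuum strictly admissible point supplies that interior only once the grid is fine enough that consistency errors fall below the strict margins, not on an arbitrary grid.
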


\begin{theorem}[Finite-grid reflected price realization; \tier{Proved, cited}]
\label{thm:gridreflection}
Fix a rectangular quote grid $\Gamma=\{(k_i,\tau_i)\}_{i=1}^N$. Let
$D_\tau^\Gamma$ and $L_k^\Gamma$ be fixed consistent difference matrices for
$\partial_\tau$ and $\partial_{kk}-\partial_k$, and define
\begin{equation}\label{eq:gridpriceset}
K_{c,m,\Mcap}^\Gamma=\left\{x\in\R^N:
B(k_i,m)\le x_i\le B(k_i,\Mcap),\quad
D_\tau^\Gamma x\ge0,\quad L_k^\Gamma x\ge0\right\}.
\end{equation}
Assume $K_{c,m,\Mcap}^\Gamma$ has nonempty interior, let $b$ and $\sigma$ be
globally Lipschitz, and let $X_0\in K_{c,m,\Mcap}^\Gamma$. Then this set is a
compact convex polytope, and the normally reflected equation
\begin{equation}\label{eq:gridreflected}
\dd X_t=b(X_t)\dd t+\sigma(X_t)\dd B_t+\dd\eta_t,\qquad
\dd\eta_t\in-N_{K_{c,m,\Mcap}^\Gamma}(X_t),
\end{equation}
has a unique strong solution. Moreover, the projection Euler scheme
\begin{equation}\label{eq:gridprojectedeuler}
X_{r+1}^{\Delta}=\Pi_{K_{c,m,\Mcap}^\Gamma}
\left(X_r^{\Delta}+b(X_r^{\Delta})\Delta
+\sigma(X_r^{\Delta})\Delta B_r\right)
\end{equation}
converges to that solution uniformly on compact time intervals in probability.
At every node, $w_i=B^{-1}(k_i,X_i)$ is unique and lies in
$[m,\Mcap]$; hence the resulting grid satisfies the declared discrete static
constraints exactly, at every time and every discretization step.
\end{theorem}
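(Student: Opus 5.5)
The plan is to reduce everything to the classical finite-dimensional theory of reflected SDEs in convex domains and then read off the static guarantees from the band. First, the geometry of $K\coloneqq K_{c,m,\Mcap}^\Gamma$. It is an intersection of finitely many closed half-spaces: the box constraints $B(k_i,m)\le x_i\le B(k_i,\Mcap)$ and the rows of the linear inequalities $D_\tau^\Gamma x\ge0$, $L_k^\Gamma x\ge0$. Hence $K$ is a closed convex polyhedron. It is bounded because it lies inside the box, so it is a compact convex polytope. By hypothesis it has nonempty interior.

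Second, well-posedness of \eqref{eq:gridreflected}. A convex set with nonempty interior satisfies the hypotheses of the Tanaka/Lions--Sznitman theory of the Skorokhod problem with normal reflection; no smoothness of $\partial K$ is needed in the convex case. I will therefore invoke \citet{cepa1998skorohod}, or Tanaka's theorem for convex domains, with globally Lipschitz $b,\sigma$.

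\begin{itemize}
\item Uniqueness follows by the standard computation. For two solutions, apply It\^o to $|X_t-X'_t|^2$. The cross term $\langle X_t-X'_t,\dd\eta_t-\dd\eta'_t\rangle$ is nonpositive by monotonicity of the normal cone. Lipschitz bounds plus Gronwall then give $X=X'$.
\item Existence follows from the same monotonicity estimate, via Picard iteration or penalization.
\end{itemize}

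Third, convergence of the projection Euler scheme \eqref{eq:gridprojectedeuler}. I would cite \citet{slominski2001euler}, whose results cover projected Euler schemes for SDEs reflected on convex sets and give uniform convergence in probability on compacts under Lipschitz coefficients. I will check that the metric projection $\Pi_K$ is single-valued and $1$-Lipschitz, which holds since $K$ is closed and convex. I will also check that the discrete reflection increment $\Pi_K(y)-y$ lies in $-N_K(\Pi_K y)$, which is exactly the discrete Skorokhod structure his analysis requires.

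Finally, the static conclusions. Every iterate is an image of $\Pi_K$, so it lies in $K$ at every step; the continuous solution lies in $K$ by definition. Since $w\mapsto B(k_i,w)$ is a strictly increasing bijection by \eqref{eq:vega}, the bound $B(k_i,m)\le X_i\le B(k_i,\Mcap)$ gives a unique $w_i=B^{-1}(k_i,X_i)\in[m,\Mcap]$. The discrete calendar and butterfly inequalities are the remaining defining rows of $K$, so they hold exactly.

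The main obstacle is matching the cited Euler-convergence hypotheses. Słomiński's results are for convex domains, possibly with assumptions such as an interior ball or a uniform condition; for a polytope with nonempty interior this is satisfied. The mode of convergence also has to be confirmed as uniform on compacts in probability rather than in $L^p$. If the citation does not cover this exactly, my fallback is a direct proof. It would rest on the stability of the Skorokhod map on convex sets, namely its Lipschitz continuity in the sup norm for polyhedra with interior points, combined with convergence of the unconstrained Euler driver.
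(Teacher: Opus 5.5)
Your proposal is correct and is essentially the paper's proof. The paper likewise gets the polytope structure from finitely many closed half-spaces bounded by the Black--Scholes price band, strong well-posedness from the maximal-monotone (C\'epa) Skorokhod theory for the normal cone, projected-Euler convergence from S{\l}omi\'nski's convex-domain result, and the nodewise inverse from strict monotonicity of $B(k_i,\cdot)$. Your monotonicity--Gronwall uniqueness computation and your check that $\Pi_K(y)-y\in-N_K(\Pi_K y)$ only make explicit what the paper delegates to those citations.
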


\begin{proof}
The set in \eqref{eq:gridpriceset} is a finite intersection of closed
half-spaces and is bounded by its Black--Scholes price band, hence it is a
compact convex polytope. Its normal cone is maximal monotone. The multivalued
Skorokhod theorem of \citet{cepa1998skorohod} gives strong existence and
pathwise uniqueness for \eqref{eq:gridreflected}; the projected Euler
convergence is the finite-dimensional result of \citet{slominski2001euler}.
Strict positivity of $B_w$ on the compact band gives the final nodewise inverse
by Proposition~\ref{prop:chartcorrespondence}. This is a finite-grid static
realization; the fixed-contract martingale restriction of
Proposition~\ref{prop:musiela} remains a separate requirement.
\end{proof}

The construction's virtue is exactness: no arbitrage is enforced pathwise by geometry, not approximately by penalty. Theorem~\ref{thm:gridreflection} closes the existence and numerical-convergence argument on every fixed quote grid; Proposition~\ref{prop:reflectedstatus} isolates the remaining function-space verification. Its costs are the nonlinearity of $B^{-1}$ (Gaussian price factors are non-Gaussian vol factors) and the opacity of the reflection term to economic interpretation.

\subsection{Construction II: tangent-projected innovations}\label{sec:constructionII}

Keep total-variance coordinates and Gaussian proposals, but project each proposed innovation onto the feasible cone before applying it. Given $w_j\in\Ksurf_m$, a step $\Delta$, and a KL proposal $\widehat\delta_j=\Delta\mu(w_j)+\sqrt{\Delta}\sum_{i\le q}\sqrt{\lambda_i}Z_{ij}e_i$, set
\begin{equation}\label{eq:qp}
\delta_j \;=\; \operatorname*{arg\,min}_{\delta\in\Hs_q}\ \norm{\delta-\widehat\delta_j}_{\Hs}^2
\quad\text{s.t.}\quad
\begin{cases}
w_j+\delta \ge m \ \text{on } \overline{\Dom},\\
\partial_\tau(w_j+\delta)\ge 0 \ \text{a.e.},\\
g[w_j] + Dg[w_j]\,\delta \ge -\epsilon_j \ \text{a.e.},
\end{cases}
\qquad w_{j+1}=w_j+\delta_j,
\end{equation}
where $\Hs_q=\operatorname{span}\{e_1,\dots,e_q\}$ (plus a spline correction space if desired), the butterfly constraint is linearized via Lemma~\ref{lem:Dg}, and $\epsilon_j\downarrow0$ is a slack schedule absorbing the linearization error of Appendix~\ref{app:Dg}. On a quote grid, \eqref{eq:qp} is a finite-dimensional convex quadratic program: the objective is quadratic, and all constraints are affine in $\delta$. When the linearized step overshoots (the true $g[w_{j+1}]$ dips below $0$ within tolerance), an inner correction re-solves \eqref{eq:qp} at $w_{j+1}$ with $\widehat\delta=0$ --- a projection step onto the linearized feasible set.

\begin{proposition}[Status]\label{prop:qpstatus}
(i) \tier{Proved}: each QP \eqref{eq:qp} on a finite grid has a strictly convex objective and affine constraints, hence a unique solution whenever feasible (and $\delta=$ projection of $0$ restores feasibility), and the scheme preserves the linearized constraints by construction. (ii) \tier{Conditional}: in price coordinates, projected-Euler schemes for reflected diffusions on convex sets converge to the reflected dynamics of Construction I as $\Delta\downarrow0$; the finite-dimensional theory is \citet{slominski2001euler}, and the tier is conditional on its infinite-dimensional extension under the hypotheses of Proposition~\ref{prop:reflectedstatus}. (iii) \tier{Conjectural}: in total-variance coordinates, we conjecture the scheme converges weakly to a degenerate-diffusion limit whose noise operator annihilates the active constraint normals --- the object described abstractly by Corollary~\ref{cor:trichotomy}(i); we know of no off-the-shelf theorem covering the nonconvex, second-order-constraint case, and flag it as the paper's main open problem.
\end{proposition}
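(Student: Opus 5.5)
The plan is to prove the three parts of Proposition~\ref{prop:qpstatus} separately, since they have different tiers and need different tools.

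\emph{Part (i).} First I reduce \eqref{eq:qp} to a finite-dimensional problem. Once the constraints are imposed at the finitely many grid nodes, the decision variable is the coefficient vector $\beta\in\R^q$ (plus spline coefficients) with $\delta=\sum_i\beta_ie_i$. The objective $\norm{\delta-\widehat\delta_j}_{\Hs}^2$ is a quadratic form in $\beta$ whose Hessian is twice the Gram matrix of the basis. That matrix is positive definite because the basis is linearly independent; for $e_i$ this is $\Hs$-orthonormality, so the Gram matrix is the identity. Hence the objective is strictly convex.

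Next I check the constraints. At each node, $w_j+\delta\ge m$ and $D_\tau(w_j+\delta)\ge0$ are affine in $\beta$. The linearized butterfly inequality $g[w_j]+Dg[w_j]\delta\ge-\epsilon_j$ is also affine, because $Dg[w_j]$ in \eqref{eq:Dg} is linear in $h$ with coefficients fixed by $w_j$. So the feasible set is a closed convex polyhedron. A strictly convex coercive quadratic on a nonempty closed convex set has exactly one minimizer, which gives existence and uniqueness whenever the set is feasible.

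Feasibility is automatic at the start: $\delta=0$ is feasible because $w_j\in\Ksurf_m$, which gives $g[w_j]\ge0\ge-\epsilon_j$. The correction step with $\widehat\delta=0$ is the metric projection of $0$ onto the linearized feasible set at $w_{j+1}$, so it is well defined whenever that set is nonempty. Preservation of the linearized constraints holds by construction, since the output is a feasible point. For the nonlinear constraint, the gap from $g[w_{j+1}]$ is at most the quadratic remainder \eqref{eq:remainder}, which the slack $\epsilon_j$ is designed to absorb.

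\emph{Part (ii).} On a fixed grid in price coordinates, the scheme is exactly the projection Euler scheme \eqref{eq:gridprojectedeuler}. Convergence then follows from Theorem~\ref{thm:gridreflection}, which relies on \citet{slominski2001euler}. To pass to the function-space statement I would argue by grid refinement. I would need uniform-in-grid estimates, specifically Lipschitz bounds and the maximal monotonicity of $N_{\Kcmbar}$ that Proposition~\ref{prop:reflectedstatus} leaves unverified. That is exactly why this part is tiered \tier{Conditional}, and I would state the reduction rather than prove it.

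\emph{Part (iii).} This part is conjectural, so I only give supporting heuristics. Apply the boundary-layer expansion of Corollary~\ref{cor:boundarylayer} to the projected step. Near an active window, projection removes the component of order $\sqrt\Delta$ along $\ell(w)$ on the half of the outcomes that would exit. This suggests the limiting diffusion coefficient annihilates the constraint normals, consistent with Corollary~\ref{cor:trichotomy}(i). The main obstacle is that the feasible set is nonconvex (Proposition~\ref{prop:convexity}(ii)) and the constraint is second order. Because of this, neither maximal monotonicity nor prox-regularity is available, and no tightness-plus-identification argument is known to me that yields the limit. So I would flag this part as open rather than attempt a proof.
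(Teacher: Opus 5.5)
Your proposal is correct and follows the paper's own inline justification. On a grid, the QP is a strictly convex quadratic in the coefficient vector over a polyhedron, so it has a unique minimizer whenever feasible; part (ii) reduces to the finite-dimensional projected-Euler theory of \citet{slominski2001euler}, with the function-space extension left conditional; and part (iii) is supported only heuristically. Your added remark that $\delta=0$ is feasible is a useful sharpening, but it requires $w_j\in\Ksurf_m$, which later iterates need not satisfy, since the slack $-\epsilon_j$ loosens the linearized butterfly constraint rather than building in a margin.
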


Construction II is the practitioner's construction: it retains the KL factors of Section~\ref{sec:spectral} (hence interpretability and calibration to historical increments), touches proposals only near the boundary (Proposition~\ref{prop:survival} quantifies how rarely), and costs one small QP per step.

\subsection{Construction III: arbitrage-free parameter maps (submanifold door)}\label{sec:constructionIII}

Choose a smooth parametrization $\varphi:\Theta\subset\R^p\to\Hs$ whose image lies in $\Ksurf$ by construction, and diffuse the parameters:
\begin{equation}\label{eq:paramsde}
\dd\theta_t=b(\theta_t)\dd t+\gamma(\theta_t)\dd B_t \ \ \text{in }\Theta_{\mathrm{adm}},\qquad w_t=\varphi(\theta_t).
\end{equation}
The canonical example is the SSVI family of \citet{gatheral2014arbitrage}: with ATM total variance term structure $\theta_\tau$ and smile function $\phi$,
\[
w(k,\tau)=\frac{\theta_\tau}{2}\Bigl(1+\rho\,\phi(\theta_\tau)k+\sqrt{(\phi(\theta_\tau)k+\rho)^2+1-\rho^2}\Bigr),
\]
with sufficient no-butterfly conditions of the form $\theta_\tau\phi(\theta_\tau)(1+|\rho|)<4$ and $\theta_\tau\phi(\theta_\tau)^2(1+|\rho|)\le4$, and calendar conditions on the monotonicity of $\theta_\tau$ and the behavior of $\phi$ \citep[Thm.~4.2, Lem.~4.3]{gatheral2014arbitrage}; the eSSVI extension of \citet{hendriks2019essvi} allows maturity-dependent $\rho$. Keeping $\theta_t$ inside $\Theta_{\mathrm{adm}}$ is a finite-dimensional invariance problem (log/logit coordinates, or one-dimensional reflection), which is standard. Neural parametrizations with certified no-arbitrage layers \citep{cohen2023neural,cont2023simulation,cuchiero2020gan} are Construction III with $p$ large and $\varphi$ learned.

The construction's virtue is tractability; its cost is rigidity. The image $\varphi(\Theta)$ is a $p$-dimensional submanifold of an infinite-dimensional set: the model asserts that surface innovations have exactly rank $p$, with shapes confined to $\operatorname{ran}D\varphi$. Hypothesis H5 tests precisely this assertion's failure mode --- autocorrelated residual fields off the manifold.

\subsection{Comparison}\label{sec:comparison}

The constructions implement the three viability mechanisms at different
computational prices. Table~\ref{tab:constructions} records exactly where each
guarantee is proved and where an infinite-dimensional well-posedness check
remains conditional.

\begin{table}[ht]
\centering
\small
\begin{tabularx}{\textwidth}{@{}>{\raggedright\arraybackslash}X>{\raggedright\arraybackslash}X>{\raggedright\arraybackslash}Xll@{}}
\toprule
 & \textbf{No-arbitrage} & \textbf{Well-posedness} & \textbf{Factor rank} & \textbf{Interpretation} \\
\midrule
I. Price-coordinate reflected & exact, pathwise & \tier{Conditional} & full & reflection opaque \\
II. Tangent-projected KL & exact per step (grid) & QP \tier{Proved} & $q$ chosen & KL factors retained \\
III. Parameter maps (SSVI/neural) & exact on image & \tier{Proved} (fin.\ dim.) & $p$ fixed & parameters named \\
\bottomrule
\end{tabularx}
\caption{The three constructions against the trichotomy of Corollary~\ref{cor:trichotomy}. ``No-arbitrage'' describes how the static constraints are enforced; ``factor rank'' describes how the dimension of surface innovations is determined.}
\label{tab:constructions}
\end{table}

\section{Neural operators under arbitrage constraints}\label{sec:neuraloperators}

A neural operator is a parametrized map between function spaces, built from
pointwise maps and learned integral kernels. A branch--trunk representation
\begin{equation}\label{eq:branchtrunk}
G_\vartheta(v)(x)=\sum_{j=1}^{p}b_j(v)t_j(x)
\end{equation}
factors through $\R^p$ and is therefore a learned finite-rank realization
\citep{kovachki2023neural,lu2021deeponet}. Its
linear orthogonal member cannot improve on the KL reconstruction floor.

\begin{proposition}[KL optimality among linear rank-$p$ decoders;
\tier{Proved}]\label{prop:kloptimal}
Let $X$ be centered in $\Hs$ with covariance eigenvalues
$\lambda_1\ge\lambda_2\ge\cdots$. For every orthogonal projection $P_V$ onto a
$p$-dimensional subspace,
\[
\E\|X-P_VX\|_{\Hs}^2=\tr C-\tr(P_VC)\ge\sum_{j>p}\lambda_j,
\]
with equality for the leading KL eigenspace.
\end{proposition}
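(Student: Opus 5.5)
The argument has two parts: an exact identity for the projection error, and Ky Fan's maximum principle for the trace of the covariance operator compressed to $V$. We need $\E\|X\|_{\Hs}^2=\tr C<\infty$, which holds because $C$ is trace class, as in Section~\ref{sec:covop}.

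\emph{Step 1: the identity.} Since $P_V$ is an orthogonal projection, $\|X-P_VX\|^2=\|X\|^2-\|P_VX\|^2$ pointwise on $\Omega$. Take expectations. The first term gives $\E\|X\|^2=\tr C$. For the second, let $v_1,\dots,v_p$ be an orthonormal basis of $V$. Then $\|P_VX\|^2=\sum_{j\le p}\ip{v_j}{X}^2$, so by \eqref{eq:cov}
\[
\E\|P_VX\|^2=\sum_{j\le p}\ip{v_j}{Cv_j}=\tr(P_VCP_V)=\tr(P_VC).
\]
The last equality uses cyclicity of the trace with a finite-rank factor, together with $P_V^2=P_V$. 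All interchanges of expectation and finite sums are trivial, and the single infinite sum behind $\tr C=\E\|X\|^2$ is justified by monotone convergence over any orthonormal basis.

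\emph{Step 2: the Ky Fan bound.} It remains to show $\sum_{j\le p}\ip{v_j}{Cv_j}\le\sum_{j\le p}\lambda_j$ for every orthonormal $p$-frame. Expand in the KL eigenbasis $(e_i)$ of Theorem~\ref{thm:kl}. This gives $\sum_j\ip{v_j}{Cv_j}=\sum_i\lambda_i c_i$ with $c_i=\sum_{j\le p}\ip{v_j}{e_i}^2=\|P_Ve_i\|^2$. The weights satisfy $0\le c_i\le1$ and $\sum_i c_i=\tr P_V=p$. If $\ker C\neq\{0\}$, complete $(e_i)$ with a basis of the kernel carrying eigenvalue $0$; those terms contribute nothing to either side.

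Maximizing the linear functional $\sum_i\lambda_ic_i$ over this box-with-budget constraint, with the $\lambda_i$ nonincreasing, is a bathtub argument. Moving mass from any index $i>p$ to an index $j\le p$ with $c_j<1$ does not decrease the sum. Hence the maximum $\sum_{j\le p}\lambda_j$ is attained at $c=\mathbf 1_{\{1,\dots,p\}}$. Concretely, $\sum_i\lambda_ic_i-\sum_{j\le p}\lambda_j=\sum_{i\le p}(c_i-1)\lambda_i+\sum_{i>p}c_i\lambda_i\le\lambda_p\bigl(\sum_{i\le p}(c_i-1)+\sum_{i>p}c_i\bigr)=0$. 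Combined with $\tr C=\sum_i\lambda_i$, this yields $\E\|X-P_VX\|^2\ge\sum_{j>p}\lambda_j$.

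\emph{Step 3: equality.} For $V=\operatorname{span}\{e_1,\dots,e_p\}$ we have $c_i=\mathbf 1_{i\le p}$, so equality holds. This also matches the truncation error of Theorem~\ref{thm:kl}.

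The only delicate point is Step 2 in infinite dimensions. The bathtub inequality needs the expansion of $\ip{v_j}{Cv_j}$ over a complete orthonormal system including $\ker C$, and it needs $\sum_i c_i=p$, which is Parseval applied to each $v_j$. Once the basis is completed this way, everything else is routine.
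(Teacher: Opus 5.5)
Your proof is correct and follows the paper's route: the identity $\E\|X-P_VX\|_{\Hs}^2=\tr C-\tr(P_VC)$, then Ky Fan's maximum principle. The paper simply cites Ky Fan, whereas you prove it directly with the bathtub inequality over an eigenbasis completed by $\ker C$, with the budget $\sum_i c_i=p$ coming from Parseval; this makes the argument self-contained and rigorous in infinite dimensions.
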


\begin{proof}
The identity follows from cyclicity of the trace; Ky Fan's maximum principle
gives $\tr(P_VC)\le\sum_{j\le p}\lambda_j$.
\end{proof}

Nonlinear encoders can improve on this linear floor only by using non-Gaussian
or state-dependent structure. The boundary theorem identifies exactly where
that structure is unavoidable.

\begin{corollary}[Learned Gaussian generators inherit the half-law;
\tier{Proved}]\label{cor:neuralhalflaw}
Suppose a learned one-step generator has
$w_\Delta=w_0+\Delta b_\vartheta(w_0)+\sqrt\Delta\,
\Sigma_\vartheta(w_0)Z$, $Z\sim N(0,I)$. At an active smooth constraint with
normal $\ell$, if $\Sigma_\vartheta(w_0)^*\ell\ne0$, then its violation
probability tends to $1/2$, independently of the drift and any soft penalty
used in training.
\end{corollary}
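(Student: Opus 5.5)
The plan is to reduce the statement to Theorem~\ref{thm:boundary} by treating the learned generator as an Euler shock with a state-frozen covariance. Fix $w_0$ with an active smooth window functional $\Phi$, so $\Phi(w_0)=0$, and let $\ell$ be the Riesz representer of $D\Phi[w_0]$. The learned coefficients enter only through their values at $w_0$. So we set $\mu\coloneqq b_\vartheta(w_0)\in\Hs$ and $\xi\coloneqq\Sigma_\vartheta(w_0)Z$. Then $w_\Delta=w_0+\Delta\mu+\sqrt\Delta\,\xi$ has exactly the form \eqref{eq:euler}, with $\xi\sim N(0,C_\vartheta)$ and $C_\vartheta=\Sigma_\vartheta(w_0)\Sigma_\vartheta(w_0)^*$.

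Two hypotheses of Theorem~\ref{thm:boundary} must be checked.

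First, $\xi$ must be a genuine $\Hs$-valued Gaussian with trace-class covariance. If $Z$ is finite-dimensional, as for a network output layer of width $p$, this is automatic because $C_\vartheta$ has finite rank. If $Z$ is cylindrical, we add the standing assumption that $\Sigma_\vartheta(w_0)$ is Hilbert--Schmidt. This is the same assumption made for \eqref{eq:sde}, and it gives $\E\norm{\xi}_{\Hs}^2=\tr C_\vartheta<\infty$.

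Second, the noise must be nondegenerate along the normal. We compute
\[
\sigma_\Phi^2=\ip{\ell}{C_\vartheta\ell}_{\Hs}=\norm{\Sigma_\vartheta(w_0)^*\ell}^2 .
\]
So the hypothesis $\Sigma_\vartheta(w_0)^*\ell\neq0$ is precisely \eqref{eq:nondeg}.

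With both checks in place, Theorem~\ref{thm:boundary} gives $\Prob[\Phi(w_\Delta)<0]\to\tfrac12$. The inclusion $\{\Phi(w_\Delta)<0\}\subseteq\{w_\Delta\notin\Ksurf_m\}$ then yields the claimed violation limit, in the form \eqref{eq:violationliminf}.

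It remains to justify "independently of the drift and any soft penalty." In the proof of Theorem~\ref{thm:boundary}, the drift contributes $\sqrt\Delta\ip{\ell}{\mu}$ after rescaling, and this vanishes for every fixed $\mu$. A training penalty only changes the parameters $\vartheta$, and hence the pair $(b_\vartheta(w_0),\Sigma_\vartheta(w_0))$. It cannot alter the functional form of the step. Whatever trained $\vartheta$ results, the conclusion holds as long as the normal component $\Sigma_\vartheta(w_0)^*\ell$ is nonzero. A penalty can only matter by forcing this component to vanish, which is option (i) of Corollary~\ref{cor:trichotomy}.

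The one step needing care is the "smooth constraint" qualifier. For the three window functionals \eqref{eq:Phi}, differentiability is supplied by Lemma~\ref{lem:Dg}, and the quadratic remainder is supplied by Appendix~\ref{app:Dg} and Lemma~\ref{lem:PhiC2}. For a pointwise normal, as in Remark~\ref{rem:pointcontact}, one would instead need an $H^s$ state with $s>3$, and there I would simply invoke that remark. I expect no real obstacle beyond stating the Hilbert--Schmidt assumption on $\Sigma_\vartheta(w_0)$ explicitly.
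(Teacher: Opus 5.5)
Your proposal is correct and is essentially the paper's argument. The paper gives no separate proof because the corollary is Theorem~\ref{thm:boundary} applied with the frozen drift $\mu=b_\vartheta(w_0)$ and covariance $C=\Sigma_\vartheta(w_0)\Sigma_\vartheta(w_0)^*$, for which $\ip{\ell}{C\ell}_{\Hs}=\norm{\Sigma_\vartheta(w_0)^*\ell}^2>0$ is exactly \eqref{eq:nondeg}; your Hilbert--Schmidt caveat for cylindrical $Z$ and your reading of the violation probability as $\Prob[\Phi(w_\Delta)<0]$, with only \eqref{eq:violationliminf} for exiting $\Ksurf_m$ as a whole, are appropriate refinements rather than departures.
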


Thus exact feasibility must be architectural. In the convex price chart it is
obtained by a simplex head.

\begin{theorem}[Constrained universal approximation; \tier{Proved}]
\label{thm:simplexhead}
Let $X$ be a Banach space, $K\subset X$ closed and convex, $A$ compact, and
$F:A\to K$ continuous. For every $\varepsilon>0$ there exist anchors
$c_1,\ldots,c_J\in F(A)$ and continuous logits $\psi_j$ such that
\begin{equation}\label{eq:simplexhead}
G(v)=\sum_{j=1}^J p_j(v)c_j,\qquad
p(v)=\operatorname{softmax}(\psi_1(v),\ldots,\psi_J(v)),
\end{equation}
satisfies $G(v)\in K$ for every $v$ and
$\sup_{v\in A}\|G(v)-F(v)\|_X<\varepsilon$. If the logits are approximated by a
universal neural-functional class, the same conclusion holds up to an
arbitrarily small additional error while feasibility remains exact.
\end{theorem}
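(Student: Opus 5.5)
The plan is a partition-of-unity argument, with feasibility coming from convexity. The softmax weights are strictly positive and sum to one, so every output $G(v)$ is a convex combination of anchors $c_j\in F(A)\subset K$. Because $K$ is convex, $G(v)\in K$ for every input $v$, not only for $v\in A$. This already gives exact feasibility. The work is therefore all in the approximation, which I would organize in four steps.

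\emph{Step 1: anchors.} $F$ is continuous on the compact set $A$, so it is uniformly continuous and $F(A)$ is compact. Choose a finite $\delta$-net $v_1,\dots,v_J$ of $A$ such that $\|F(v)-F(v')\|_X<\varepsilon/3$ whenever $d(v,v')<2\delta$. Set $c_j=F(v_j)$. Here $A$ is taken as a compact metric space, which is the natural reading of the statement.

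\emph{Step 2: ideal weights.} Build a continuous partition of unity subordinate to the balls $B(v_j,2\delta)$, for instance
\[
q_j(v)=\frac{\max\{0,\,2\delta-d(v,v_j)\}}{\sum_i \max\{0,\,2\delta-d(v,v_i)\}}.
\]
The denominator is bounded below on $A$ because the $\delta$-balls cover $A$. With these weights,
\[
\Bigl\|\sum_j q_j(v)c_j-F(v)\Bigr\|_X\le\sum_j q_j(v)\,\|F(v_j)-F(v)\|_X<\varepsilon/3,
\]
since $q_j(v)>0$ only when $d(v,v_j)<2\delta$.

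\emph{Step 3: realizing the weights by softmax.} This is the main obstacle. Softmax cannot output exact zeros, so $q$ itself is not a softmax. Instead take logits $\psi_j=\log(q_j+\eta)$ for a small $\eta>0$. These are continuous, and their softmax is $(q_j+\eta)/(1+J\eta)$. Then
\[
\sup_{v\in A}\|p(v)-q(v)\|_{\ell^1}\le\frac{2J\eta}{1+J\eta}.
\]
With $D=\max_j\|c_j\|_X<\infty$, the error $\|G(v)-\sum_j q_j(v)c_j\|_X$ is at most $D\|p-q\|_{\ell^1}$. Choosing $\eta$ with $2J\eta D<\varepsilon/3$ makes this term smaller than $\varepsilon/3$. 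Combined with Step 2, $\sup_{v\in A}\|G(v)-F(v)\|_X<\varepsilon$.

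\emph{Step 4: neural logits.} Softmax is $1$-Lipschitz from $\ell^\infty$ to $\ell^1$ (it is $1/2$-Lipschitz, so $1$ suffices). Hence if each $\psi_j$ is approximated uniformly on $A$ to within $\eta'$ by a member of the universal class, the output moves by at most $D\eta'$. Feasibility is untouched, because the anchors and the convex-combination structure do not change. Taking $\eta'$ small gives the final clause of the theorem. Closedness of $K$ plays no role in this argument: convexity alone gives feasibility, and closedness matters only if one passes to limits of such heads.
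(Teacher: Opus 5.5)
Your proof is correct and follows the paper's own argument: a finite net of feasible anchors, a continuous partition of unity, lifting the weights by a small constant before renormalizing and taking logarithms to get softmax logits, and uniform continuity of softmax for the neural-logit clause. The only difference is that you place the net in $A$ via uniform continuity, so you need $A$ metric, whereas the paper takes an $\varepsilon/3$-net of $F(A)$ and pulls the balls back through $F$, which works for any compact Hausdorff $A$.

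One harmless slip: softmax is $\tfrac12$-Lipschitz from $\ell^\infty$ into total variation (and from $\ell^2$ to $\ell^2$), but its $\ell^\infty\to\ell^1$ constant is exactly $1$ (perturb logits $(0,0)$ by $(t,-t)$ and let $t\to0$). So your parenthetical is false, but the constant you actually use is correct.
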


\begin{proof}
Compactness of $F(A)$ supplies a finite $\varepsilon/3$-net of feasible anchors.
A continuous partition of unity subordinate to the inverse images of the net
balls gives a feasible convex-combination approximation. Lift its weights by a
small positive constant, renormalize, and take logarithms to obtain continuous
softmax logits. Uniform approximation of those logits and uniform continuity
of softmax complete the proof.
\end{proof}

Taking $K=\Kcmbar$ yields arbitrage-free price-surface operators whose implied
variance inverse is uniformly well defined for every parameter value; taking
$K=\Kc$ remains valid for price-only outputs. The local chart gives a second
route:
\begin{equation}\label{eq:conehead}
v\longmapsto a_\vartheta(v)=\exp(G_\vartheta(v))
\longmapsto c[a_\vartheta(v)].
\end{equation}
Every output is arbitrage-free, and Proposition~\ref{prop:localinverse} makes
the construction dense and stable on continuous uniformly elliptic target
families. The price route reaches boundary anchors cheaply; the local route is
interior-only and pays for a parabolic solve, but its constraint head is global
and unconstrained in log coordinates.

\section{Arbitrage-free normalizing flow maps}\label{sec:normalizingflows}

Normalizing flows add something that a generic generator does not: an
invertible map with a tractable likelihood \citep{rezende2015flows,papamakarios2021flows}.
The word ``flow'' here refers to composition of density transforms, not to
calendar-time market dynamics. The constraint geometry dictates where the
invertible map should live. In total-variance coordinates the admissible set is
nonconvex; in local coordinates its interior is a positive cone; in price
coordinates finite convex hulls have simplex interiors. The last two admit
explicit normalizing charts.

\subsection{Local-cone flow and exact grid likelihood}\label{sec:localflow}

Fix $p$ local-variance grid nodes, a conditioning state $s$, an invertible
matrix $V\in\R^{p\times p}$, a location $\mu(s)$, and a conditional $C^1$
diffeomorphism $F_\vartheta(\cdot;s):\R^p\to\R^p$. Let
\begin{equation}\label{eq:localflowmap}
\varepsilon\sim p_0,\qquad y=F_\vartheta(\varepsilon;s),\qquad
u=\mu(s)+Vy,\qquad a=\exp(u),
\end{equation}
where the exponential is componentwise. Between grid nodes, interpolate $u$
and exponentiate, never interpolate $a$ with a sign-indefinite basis.

\begin{theorem}[Local-volatility normalizing flow; \tier{Proved}]
\label{thm:localflow}
Map \eqref{eq:localflowmap} is a diffeomorphism from $\R^p$ onto the positive
orthant. Its conditional log-density is
\begin{equation}\label{eq:localflowdensity}
\log p_A(a\mid s)=\log p_0(\varepsilon)
-\log|\det D_\varepsilon F_\vartheta(\varepsilon;s)|
-\log|\det V|-\sum_{j=1}^p\log a_j,
\end{equation}
where
$\varepsilon=F_\vartheta^{-1}(V^{-1}(\log a-\mu(s));s)$. The positive
interpolated field, composed with the inverse Dupire map of
Proposition~\ref{prop:localinverse}, produces a statically arbitrage-free price
surface for every latent draw and every parameter value.
\end{theorem}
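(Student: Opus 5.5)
The plan is to treat the map \eqref{eq:localflowmap} as a composition of three diffeomorphisms and to apply the finite-dimensional change-of-variables formula to each piece. The pieces are: $T_1=F_\vartheta(\cdot;s):\R^p\to\R^p$, a $C^1$ diffeomorphism by hypothesis; $T_2(y)=\mu(s)+Vy$, an affine bijection of $\R^p$ with constant Jacobian $V$ because $V$ is invertible; and $T_3(u)=\exp(u)$ componentwise, a smooth bijection $\R^p\to(0,\infty)^p$ with smooth inverse $\log$ and diagonal Jacobian $\operatorname{diag}(e^{u_j})=\operatorname{diag}(a_j)$. The composition $T_3\circ T_2\circ T_1$ is then a $C^1$ diffeomorphism of $\R^p$ onto the open positive orthant. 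Inverting the three maps in reverse order gives the stated formula $\varepsilon=F_\vartheta^{-1}(V^{-1}(\log a-\mu(s));s)$.

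For the density, I would use the standard identity $p_A(a\mid s)=p_0(\varepsilon)\,|\det D_\varepsilon(T_3\circ T_2\circ T_1)|^{-1}$, valid for a $C^1$ diffeomorphism pushing forward an absolutely continuous law. The chain rule factors the Jacobian determinant as $\det(\operatorname{diag}(a))\cdot\det V\cdot\det D_\varepsilon F_\vartheta$. Taking logarithms produces the three subtracted terms in \eqref{eq:localflowdensity}, with $\log\det\operatorname{diag}(a)=\sum_j\log a_j$ because every $a_j>0$. Conditioning on $s$ is harmless: $s$ only indexes the family of maps, so the formula holds for each fixed $s$, and measurability in $s$ follows from that of $\mu,F_\vartheta$.

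For the arbitrage statement, the key is to interpolate $u$ rather than $a$. Interpolating the nodal values $u$ with any continuous scheme gives a continuous field $\tilde u$ on $\overline\Dom$, and $\tilde a=\exp(\tilde u)$ is then continuous. Compactness gives $0<\underline a=\exp(\min\tilde u)\le\tilde a\le\exp(\max\tilde u)=\overline a$, so $\tilde a$ lies in the interior of $\Ka$ and is uniformly elliptic for every latent draw and every parameter value. Proposition~\ref{prop:localinverse} then needs $\tilde a\in W^{1,\infty}(Q)$. This holds if the interpolation basis is Lipschitz (piecewise-linear or spline), since $\exp$ is smooth on the bounded range of $\tilde u$. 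The proposition then supplies the unique solution $c[\tilde a]$ of \eqref{eq:dupireforward}, together with preservation of the static call-price inequalities.

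I expect the main obstacle to be this last step. Proposition~\ref{prop:localinverse} guarantees the static inequalities only for whole-line data, or for truncated-window boundary traces compatible with the global solution. I would therefore solve on the whole line, extending $\tilde a$ beyond the grid window by a bounded, positive, Lipschitz extrapolation such as constant extension of $\tilde u$, and then restrict to $\Dom$. The diffusion representation quoted in that proof then delivers the bounds, strike convexity \eqref{eq:convexk}, and calendar monotonicity, i.e.\ membership in $\Kc$, pathwise in the latent draw.
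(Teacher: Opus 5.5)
Your proposal is correct and follows the paper's own route: you decompose the map into the same three invertible factors, obtain the chain-rule determinant $|\det D F_\vartheta|\,|\det V|\prod_j a_j$ with the ordinary change-of-variables formula, and derive feasibility from positivity plus Proposition~\ref{prop:localinverse}. Your extra steps fill in details that the paper's one-line feasibility argument leaves implicit. Lipschitz interpolation of $u$ gives the uniformly elliptic $W^{1,\infty}$ field that Proposition~\ref{prop:localinverse} requires, and solving on the whole line before restricting meets that proposition's caveat about boundary traces on a truncated window.
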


\begin{proof}
Each component is invertible: $F_\vartheta$ by assumption, $V$ by nonsingularity,
and the exponential from $\R^p$ to $(0,\infty)^p$. The chain-rule determinant is
$|\det D F_\vartheta|\,|\det V|\prod_ja_j$; the ordinary change-of-variables
formula gives \eqref{eq:localflowdensity}. Positivity and
Proposition~\ref{prop:localinverse} give feasibility.
\end{proof}

Training in $a$-coordinates retains the exact likelihood
\eqref{eq:localflowdensity}. A likelihood assigned directly to the resulting
price grid must also include the Jacobian of the discrete parabolic solution
map; omitting it changes the statistical model.

\subsection{Invertible price-simplex flow}\label{sec:simplexflow}

The softmax head of Theorem~\ref{thm:simplexhead} is feasible but not injective:
adding a common constant to all logits changes nothing. Fix the gauge through
stick breaking. For $z\in\R^{J-1}$, let $v_j=(1+e^{-z_j})^{-1}$ and
\begin{equation}\label{eq:stickbreaking}
p_j=v_j\prod_{\ell<j}(1-v_\ell),\quad j<J,
\qquad p_J=\prod_{\ell<J}(1-v_\ell).
\end{equation}

\begin{theorem}[Price-simplex normalizing flow; \tier{Proved}]
\label{thm:simplexflow}
Fix $0<m<\Mcap<\infty$. Let $c_1,\ldots,c_J\in\Kcmbar$ be affinely independent
and let $F_\vartheta:\R^{J-1}\to\R^{J-1}$ be a $C^1$ diffeomorphism. The composition
\[
\varepsilon\mapsto z=F_\vartheta(\varepsilon)\mapsto p(z)\mapsto
c=\sum_{j=1}^Jp_jc_j
\]
is a diffeomorphism onto the relative interior of
$\operatorname{conv}\{c_1,\ldots,c_J\}\subset\Kcmbar$. Define the anchor Gram
matrix $G_{ij}=\ip{c_i-c_J}{c_j-c_J}_{\Hs}$, $i,j<J$. The density relative to
$(J-1)$-dimensional Hausdorff volume on that affine hull is
\begin{align}\label{eq:simplexflowdensity}
\log p_C(c)=\log p_0(\varepsilon)
&-\log|\det D F_\vartheta(\varepsilon)|
-\frac12\log\det G\\
&-\sum_{j=1}^{J-1}\{\log v_j+(J-j)\log(1-v_j)\}.\nonumber
\end{align}
Every draw is statically arbitrage-free, including during training, and has a
unique inverse in $\Ksurf_{m,\Mcap}$.
\end{theorem}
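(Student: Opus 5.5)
The plan is to factor the map into three pieces and treat each separately: the flow $F_\vartheta$, the stick-breaking map $S:z\mapsto p(z)$, and the affine anchor map $A:p\mapsto\sum_j p_jc_j$. Then the density formula \eqref{eq:simplexflowdensity} follows from the chain rule together with the area formula for an injective immersion into a $(J-1)$-dimensional affine subspace of $\Hs$.

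\emph{Stick-breaking step.} I will show that $S$ is a diffeomorphism from $\R^{J-1}$ onto the open simplex $\Delta^\circ=\{p\in(0,1)^J:\sum_jp_j=1\}$, which I parametrize by its first $J-1$ coordinates.

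1. Each logistic map $z_j\mapsto v_j$ is a diffeomorphism $\R\to(0,1)$.
2. The map $(v_1,\ldots,v_{J-1})\mapsto(p_1,\ldots,p_{J-1})$ is triangular. Its inverse is explicit: $v_j=p_j/(1-\sum_{\ell<j}p_\ell)$, and the open-simplex conditions are exactly $v_j\in(0,1)$. So the map is a smooth bijection with smooth inverse.
3. By triangularity its Jacobian is diagonal with entries $\prod_{\ell<j}(1-v_\ell)$.
4. Multiply by $dv_j/dz_j=v_j(1-v_j)$. The factor $(1-v_\ell)$ appears once for each $j>\ell$, plus once from $v_\ell(1-v_\ell)$. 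Hence
\[
\log|\det DS|=\sum_{j<J}\{\log v_j+(J-j)\log(1-v_j)\},
\]
which is the last term in \eqref{eq:simplexflowdensity}. Doing this bookkeeping correctly is the step I expect to need the most care.

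\emph{Affine step.} Write $c=c_J+\sum_{j<J}p_j(c_j-c_J)$. Define $L:\R^{J-1}\to\Hs$ by $Lp=\sum_{j<J}p_j(c_j-c_J)$.

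1. Affine independence of the anchors makes $L$ injective. Its Gram matrix $L^\ast L=G$ is therefore positive definite.
2. $A$ maps $\Delta^\circ$ bijectively onto the relative interior of $\operatorname{conv}\{c_j\}$, because barycentric coordinates are unique.
3. The inverse is smooth: it is $G^{-1}L^\ast(c-c_J)$.
4. The $(J-1)$-dimensional Hausdorff volume element on the affine hull pulls back as $\sqrt{\det G}\,dp$. This is the area formula for a linear injection into a Hilbert space, reduced to an orthonormal basis of $\operatorname{ran}L$. It contributes $-\tfrac12\log\det G$.

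\emph{Assembly.} Composing with $F_\vartheta$ gives a diffeomorphism onto the relative interior. The change-of-variables formula with the product of the three Jacobian factors then yields \eqref{eq:simplexflowdensity}.

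\emph{Feasibility and inverse.} Feasibility follows from convexity of $\Kcmbar$ (Proposition~\ref{prop:chartcorrespondence}): every convex combination of anchors lies in $\Kcmbar$. This holds for every $\varepsilon$ and every $\vartheta$, so in particular during training. The unique inverse in $\Ksurf_{m,\Mcap}$ is $w=B^{-1}[c]$, again by the bijection in Proposition~\ref{prop:chartcorrespondence}.
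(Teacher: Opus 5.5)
Your proposal is correct and follows the paper's proof step for step. It uses the same factorization into flow, stick-breaking and affine anchor map, and the same triangular Jacobian bookkeeping giving $\sum_{j<J}\{\log v_j+(J-j)\log(1-v_j)\}$. It also uses the same Gram volume factor $\sqrt{\det G}$, and obtains feasibility and invertibility from convexity of $\Kcmbar$ and Proposition~\ref{prop:chartcorrespondence}. One wording slip: the Jacobian of $(v_1,\ldots,v_{J-1})\mapsto(p_1,\ldots,p_{J-1})$ is lower triangular, not diagonal, with diagonal entries $\prod_{\ell<j}(1-v_\ell)$; that is all the determinant computation needs.
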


\begin{proof}
Stick breaking is a bijection from $\R^{J-1}$ to the open simplex; affine
independence makes barycentric coordinates unique. Its Jacobian in the first
$J-1$ coordinates is lower triangular and has log-determinant
$\sum_{j<J}[\log v_j+(J-j)\log(1-v_j)]$. The affine anchor map contributes the
constant volume factor $\sqrt{\det G}$. Multiplying these determinants with the
flow determinant proves \eqref{eq:simplexflowdensity}; convexity of $\Kcmbar$
and Proposition~\ref{prop:chartcorrespondence} prove feasibility and
invertibility back to total variance.
\end{proof}

\subsection{Conditional signature flows and the function-space boundary}
\label{sec:functionspaceflow}

The fading signature supplies the conditioning state without compromising
invertibility in the innovation:
\begin{equation}\label{eq:conditionalsigflow}
y_t=F_\vartheta\!\left(\varepsilon_t;
\mathbb S_t^{\lambda,\le M}(z)\right).
\end{equation}
Composing $y_t$ with either Theorem~\ref{thm:localflow} or
Theorem~\ref{thm:simplexflow} gives a history-dependent conditional density of
arbitrage-free surfaces. Unlike a deterministic signature readout, it separates
predictable history from irreducible innovation and can be scored by log
likelihood.

There is, however, no translation-invariant Lebesgue measure on an
infinite-dimensional Hilbert space. Consequently, the determinant formulae
above are exact for finite grids or finite-rank manifolds, not automatically
discretization-invariant function-space likelihoods. If a reference Gaussian
measure $\gamma$ on $\Hs$ is used, an infinite-dimensional transformation must
preserve quasi-invariance. For maps of the form $T=I+K$, this requires, in the
classical Ramer framework, shifts in the Cameron--Martin directions and a
Hilbert--Schmidt derivative together with integrability and invertibility
conditions; the Radon--Nikodym derivative involves a regularized Fredholm
determinant rather than $\det DT$ \citep{ramer1974nonlinear,bogachev1998gaussian}.
Generic coordinatewise neural flows need not satisfy these conditions and can
send equivalent Gaussian measures to mutually singular laws. This paper
therefore makes exact likelihood claims only in the finite-dimensional settings
of Theorems~\ref{thm:localflow}--\ref{thm:simplexflow}.

Even in finite dimensions, log scores are meaningful only relative to a common
observation coordinate and reference measure. If $x=T(y)$ is a diffeomorphism,
then
\begin{equation}\label{eq:scoringmeasure}
 \log p_X(x)=\log p_Y(T^{-1}x)+\log|\det D T^{-1}(x)|;
\end{equation}
comparing the two raw expressions without the Jacobian compares different
statistical experiments. In Section~\ref{sec:empirics}, every likelihood test
for M9 is therefore conducted on the same local-variance grid relative to
$\dd a$. Its Gaussian benchmark $G_a$ uses the identical conditioning state and
$y=\log a\sim N(m(s),\Sigma(s))$, hence
$p_A^{G_a}(a\mid s)=\phi_{m(s),\Sigma(s)}(\log a)\prod_j a_j^{-1}$ relative to
that same measure. No KL-total-variance likelihood is compared directly with
this score.

\section{Signature fields and path-dependent surface dynamics}\label{sec:signaturefields}

KL coordinates summarize the current surface; they do not summarize how it was
reached. To encode volatility hysteresis, leverage, and regime dependence, let
$z_t\in\R^d$ collect observable surface factors, spot, realized variance, and
time. For a continuous semimartingale we use \emph{Stratonovich} iterated
integrals, so the lift is geometric. A fixed lower limit would give the usual
signature but would remember the remote past without decay; a literal sliding
window would be a delay state because both integration endpoints move. We use
instead the autonomous exponentially fading signature. For a word
$I=(i_1,\ldots,i_m)$ and $\lambda\ge0$, set
\begin{equation}\label{eq:signature}
\mathbb S_t^{\lambda,I}(z)=
\int_{0<u_1<\cdots<u_m<t}
\exp\!\left[-\lambda\sum_{r=1}^m(t-u_r)\right]
\circ\dd z^{i_1}_{u_1}\cdots\circ\dd z^{i_m}_{u_m},
\qquad \mathbb S_t^{\lambda,\varnothing}=1.
\end{equation}
For $\lambda=0$ this is the fixed-inception geometric signature. For
$\lambda>0$, level-$m$ memory decays at rate $m\lambda$. At the
full-signature level, time augmentation removes the tree-like ambiguity
relevant to forecasting; a fixed level-$M$ truncation remains a finite
approximation and is not itself injective on arbitrary paths.

\begin{definition}[Signature field]\label{def:signaturefield}
A level-$M$ fading-signature field is a surface-valued map
\begin{equation}\label{eq:sigfield}
w_t(x)=\Phi_M\!\left(x,\mathbb S^{\lambda,\le M}_{t}(z)\right),
\qquad x=(k,\tau)\in\Dom,
\end{equation}
where $\Phi_M$ is continuous (linear, kernel, or neural) in the truncated
signature coordinates. It is arbitrage-free by construction when
$\Phi_M(\Dom,\cdot)$ takes values in $\Ksurf$.
\end{definition}

\begin{proposition}[Finite-dimensional controlled realization;
\tier{Proved}]\label{prop:sigrealization}
For fixed $M$ and $d$, the truncated signature has
$N_M=\sum_{m=0}^M d^m$ tensor coordinates and satisfies the triangular
autonomous Stratonovich system
\begin{equation}\label{eq:sigcontrolled}
\dd\mathbb S^{\lambda,i_1\cdots i_m}_t
=-m\lambda\mathbb S^{\lambda,i_1\cdots i_m}_t\dd t
+\mathbb S^{\lambda,i_1\cdots i_{m-1}}_t\circ\dd z^{i_m}_t,
\qquad \mathbb S^{\lambda,\varnothing}_t=1.
\end{equation}
Consequently, if $z$ is a semimartingale (respectively a geometric rough path),
every smooth signature field is a finite-dimensional semimartingale
(respectively controlled rough-path) realization in $\Hs$.
\end{proposition}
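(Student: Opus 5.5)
The plan is to derive \eqref{eq:sigcontrolled} by differentiating the defining iterated integral \eqref{eq:signature} in its upper limit. I would do this with a recursive representation rather than by expanding the full simplex integral. For a word $I=(i_1,\ldots,i_m)$ write $I^-=(i_1,\ldots,i_{m-1})$. The first step is to show that
\[
\mathbb S^{\lambda,I}_t=\int_0^t e^{-m\lambda(t-u)}\,\mathbb S^{\lambda,I^-}_u\circ\dd z^{i_m}_u .
\]
To see this, split the exponential weight in \eqref{eq:signature} using
\[
\sum_{r=1}^m(t-u_r)=m(t-u_m)+\sum_{r<m}(u_m-u_r)+(m-1)\cdot 0 .
\]
More precisely, $t-u_r=(t-u_m)+(u_m-u_r)$. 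This gives
\[
\exp\Bigl[-\lambda\sum_r(t-u_r)\Bigr]=e^{-m\lambda(t-u_m)}\exp\Bigl[-\lambda\sum_{r<m}(u_m-u_r)\Bigr].
\]
The inner integral over $0<u_1<\cdots<u_{m-1}<u_m$ of the second factor is exactly $\mathbb S^{\lambda,I^-}_{u_m}$. The identity follows by Fubini for iterated Stratonovich integrals. This holds by definition if the iterated integral is understood recursively, which is the standard convention.

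Next, I would apply the Stratonovich product rule to
\[
\mathbb S^{\lambda,I}_t=e^{-m\lambda t}\int_0^t e^{m\lambda u}\,\mathbb S^{\lambda,I^-}_u\circ\dd z^{i_m}_u .
\]
The first factor is deterministic and of finite variation, so there is no correction term. This yields
\[
\dd\mathbb S^{\lambda,I}_t=-m\lambda\,\mathbb S^{\lambda,I}_t\dd t+\mathbb S^{\lambda,I^-}_t\circ\dd z^{i_m}_t ,
\]
with $\mathbb S^{\lambda,\varnothing}\equiv1$ as the base case. Induction on the level $m\le M$ shows that every coordinate is a well-defined continuous semimartingale. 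The coordinate count $N_M=\sum_{m\le M}d^m$ is the number of words of length at most $M$. The system is triangular because level $m$ is driven only by level $m-1$ and by $z$. It is autonomous because the coefficients $-m\lambda x$ and $x\mapsto x\,e_{i}$ do not depend on $t$; this is exactly where a fixed sliding window would fail. The system is linear with constant coefficients in the state, so strong existence and uniqueness are immediate.

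For the realization claim, $w_t=\Phi_M(\cdot,\mathbb S_t)$ is the composition of an $\Hs$-valued $C^2$ map with an $\R^{N_M}$-valued semimartingale. I would interpret ``smooth signature field'' as saying that $s\mapsto\Phi_M(\cdot,s)\in\Hs$ is $C^2$ (Fréchet). Then the finite-dimensional Itô (Stratonovich) formula, applied in the Banach-valued form, makes $w_t$ an $\Hs$-valued semimartingale driven by the finite state $\mathbb S_t$. In the rough-path case, I would instead solve \eqref{eq:sigcontrolled} as a linear RDE driven by the geometric rough path. Linear RDEs with smooth vector fields have unique global solutions. Composing with $C^3$ $\Phi_M$ then gives a controlled path, with Gubinelli derivative $D\Phi_M\cdot\mathbb S^{I^-}$.

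The main obstacle is the first step, namely justifying the splitting and Fubini manipulation for Stratonovich iterated integrals, since the exponential weight depends on the upper limit $t$. I would sidestep it by taking the recursive formula above as the rigorous definition of \eqref{eq:signature}, checking consistency with the $\lambda=0$ classical signature. The only genuinely careful point is that pulling the deterministic factor $e^{-m\lambda t}$ out of a Stratonovich integral is legitimate, which holds because it has finite variation.
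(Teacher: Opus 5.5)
Your proposal is correct and follows the paper's route. The paper's proof is the one-line version of your argument: moving the upper endpoint produces the prefix term, the $t$-dependence of the $m$ exponential factors produces $-m\lambda\mathbb S^{\lambda,I}_t\dd t$, and composing with smooth $\Phi_M$ through the Stratonovich or rough chain rule gives the realization.

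The Fubini step you flag as the main obstacle is harmless, so you do not need to redefine \eqref{eq:signature}. The weight factorizes as $e^{-m\lambda t}\prod_r e^{\lambda u_r}$, so \eqref{eq:signature} equals $e^{-m\lambda t}$ times the ordinary iterated Stratonovich integral of $\tilde z^{i}_u=\int_0^u e^{\lambda s}\circ\dd z^{i}_s$. The standard signature recursion for $\tilde z$ then gives your representation $\mathbb S^{\lambda,I}_t=e^{-m\lambda t}\int_0^t e^{m\lambda u}\,\mathbb S^{\lambda,I^-}_u\circ\dd z^{i_m}_u$ directly.
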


\begin{proof}
The coordinate count is the dimension of the truncated tensor algebra.
Differentiating \eqref{eq:signature} moves the upper endpoint, producing the
prefix term, and differentiates $m$ exponential factors, producing
$-m\lambda\mathbb S^{\lambda,I}\dd t$. This proves
\eqref{eq:sigcontrolled}. Composition with smooth $\Phi_M$ gives the stated
realization by the Stratonovich or rough-chain rule. In It\^o coordinates the
usual quadratic-covariation correction must be added.
\end{proof}

The key benefit is approximation of path functionals, not a repeal of
constraints. Universality of full signatures on compact sets of time-augmented
geometric paths implies that continuous history-dependent surface rules can be
approximated by increasing $M$ and a sufficiently expressive readout; on a
finite horizon the deterministic exponential reweighting in
\eqref{eq:signature} is invertible before truncation and therefore preserves
full-signature point separation
\citep{hambly2010uniqueness,chevyrev2022signature}. Exact
arbitrage is then imposed through either convex chart; for the price head fix
$J$ feasible anchors $c_1,\ldots,c_J$:
\begin{align}
\text{price signature head:}\quad &
c_t=\sum_{j=1}^J\operatorname{softmax}_j
\bigl(L\mathbb S^{\lambda,\le M}_{t}\bigr)c_j,\quad c_j\in\Kc,
\label{eq:pricesighead}\\
\text{local signature head:}\quad &
a_t=\exp\bigl(\Phi_M(\mathbb S^{\lambda,\le M}_{t})\bigr),\qquad
c_t=c[a_t].\label{eq:localsighead}
\end{align}
Both are feasible for every history and every fitted coefficient. The first is
a path-dependent simplex realization; the second is a path-dependent local
volatility codebook. A direct total-variance readout remains subject to the
same tangent-normal conditions as any other generator.

Signature truncation and KL truncation address orthogonal dimensions of model
risk: KL truncates the \emph{surface} direction, while signatures truncate the
\emph{history} direction. A joint model
\begin{equation}\label{eq:jointtruncation}
w_t=\bar w+\sum_{i=1}^{q}e_i\,
f_i\!\left(\mathbb S^{\lambda,\le M}_{t}(z)\right)
\end{equation}
therefore has two auditable approximation axes $(q,M)$. Feasibility should be
imposed after the joint readout, not inferred from either truncation.

\begin{theorem}[Surface--memory error decomposition; \tier{Proved}]
\label{thm:twowayerror}
Let $c$ be a square-integrable $\Kc$-valued random price surface, let
$\bar c=\E c$, and let $P_q$ project onto the first $q$ eigenfunctions of the
covariance of $c-\bar c$. Suppose a level-$M$ signature readout
$\widehat y_{q,M}$ takes values in $\operatorname{ran}P_q$ and satisfies
\[
\E\|P_q(c-\bar c)-\widehat y_{q,M}\|_{\Hs}^2\le\varepsilon_{q,M}^2.
\]
Then the raw joint approximation $\widetilde c_{q,M}=\bar c+\widehat y_{q,M}$
obeys the exact orthogonal decomposition
\begin{equation}\label{eq:twowayerror}
\E\|c-\widetilde c_{q,M}\|_{\Hs}^2
=\sum_{i>q}\lambda_i
+\E\|P_q(c-\bar c)-\widehat y_{q,M}\|_{\Hs}^2
\le\sum_{i>q}\lambda_i+\varepsilon_{q,M}^2.
\end{equation}
Moreover the convex repair $c^\sharp_{q,M}=\Pi_{\Kc}\widetilde c_{q,M}$ is
arbitrage-free and has no larger error:
\begin{equation}\label{eq:projectionimproves}
\E\|c-c^\sharp_{q,M}\|_{\Hs}^2
\le\E\|c-\widetilde c_{q,M}\|_{\Hs}^2.
\end{equation}
\end{theorem}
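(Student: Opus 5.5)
The plan is to split the error into the component off the retained eigenspace and the component on it, then show the cross term vanishes. Set $X=c-\bar c$, which is centered and square-integrable in $\Hs$, with covariance $C$ and eigenpairs $(\lambda_i,e_i)$ as in Theorem~\ref{thm:kl}. Then
\[
c-\widetilde c_{q,M}=X-\widehat y_{q,M}=(I-P_q)X+\bigl(P_qX-\widehat y_{q,M}\bigr).
\]
Since $\widehat y_{q,M}\in\operatorname{ran}P_q$ by hypothesis, the second bracket lies in $\operatorname{ran}P_q$. The first lies in its orthogonal complement. So the two are orthogonal pathwise, not merely in expectation. Pythagoras holds for every $\omega$, and taking expectations gives
\[
\E\|c-\widetilde c_{q,M}\|^2=\E\|(I-P_q)X\|^2+\E\|P_qX-\widehat y_{q,M}\|^2 .
\]
By the truncation identity of Theorem~\ref{thm:kl}, or equivalently the equality case of Proposition~\ref{prop:kloptimal}, the first term is $\tr C-\tr(P_qC)=\sum_{i>q}\lambda_i$. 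The hypothesis bounds the second term by $\varepsilon_{q,M}^2$, which proves \eqref{eq:twowayerror}.

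The only care point in this part is measurability and integrability of $\widehat y_{q,M}$. I will assume that a readout of the signature is a measurable random element. Square-integrability then follows from the stated error bound and the triangle inequality.

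For the repair, $\Kc$ is nonempty, closed and convex by Proposition~\ref{prop:convexity}(i), so the metric projection $\Pi_{\Kc}$ is well defined. It is also measurable, since it is $1$-Lipschitz. Hence $c^\sharp_{q,M}$ is a $\Kc$-valued random element, and so it is statically arbitrage-free by the definition of $\Kc$.

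The key step is pathwise again. Because $c(\omega)\in\Kc$, the projection characterization $\langle y-\Pi y,\,z-\Pi y\rangle\le0$ for all $z\in\Kc$ gives the firm-nonexpansive inequality
\[
\|c-\Pi_{\Kc}y\|\le\|c-y\|\quad\text{for every }y\in\Hs.
\]
Applying it with $y=\widetilde c_{q,M}(\omega)$, then squaring and taking expectations, yields \eqref{eq:projectionimproves}.

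I expect no genuine obstacle. The subtle point is that the comparison uses that $c$ itself lies in $\Kc$ almost surely, which the statement assumes. The inequality would fail for a repair onto $\Ksurf_m$ in variance coordinates, where there is no convexity (Proposition~\ref{prop:convexity}(ii)), and this is why the theorem is stated in the price chart.
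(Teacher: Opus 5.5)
Your proof is correct and matches the paper's argument. Pathwise orthogonality of $(I-P_q)(c-\bar c)$ to the $\operatorname{ran}P_q$ error gives the Pythagorean split, with the first term equal to the KL tail. The inequality $\|c-\Pi_{\Kc}y\|\le\|c-y\|$ for $c\in\Kc$, applied with $y=\widetilde c_{q,M}(\omega)$ and integrated, gives the repair bound. Your measurability and integrability remarks fill in details the paper leaves implicit.
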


\begin{proof}
The omitted component $(I-P_q)(c-\bar c)$ is orthogonal to both terms in
$\operatorname{ran}P_q$, giving \eqref{eq:twowayerror}; the first expectation
is the KL tail. Since $\Kc$ is closed and convex, its metric projection exists,
is unique, and satisfies $\|\Pi_{\Kc}x-y\|\le\|x-y\|$ for every $y\in\Kc$.
Take $y=c$ pointwise and integrate to obtain \eqref{eq:projectionimproves}.
\end{proof}

\begin{figure}[htbp]
\centering
\includegraphics[width=0.98\textwidth]{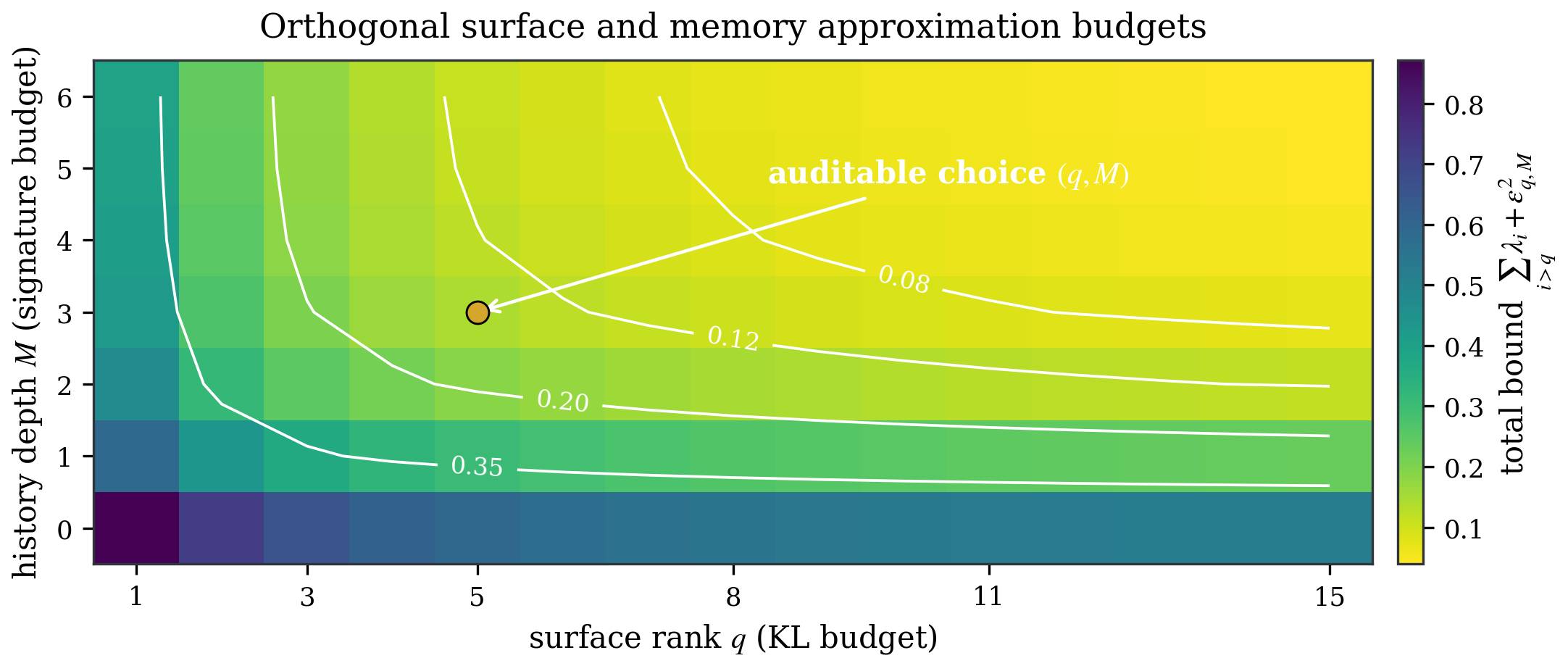}
\caption{The two approximation axes of Theorem~\ref{thm:twowayerror}. The
illustration uses normalized $\lambda_i\propto i^{-2}$ and
$\varepsilon_{q,M}^2=0.48e^{-0.9M}$ to display the bound
$\sum_{i>q}\lambda_i+\varepsilon_{q,M}^2$. Rank $q$ controls omitted surface
directions; depth $M$ controls path-memory approximation. The values are
illustrative and are not model estimates.}
\label{fig:memorybudget}
\end{figure}

Theorem~\ref{thm:twowayerror} turns model selection into a two-dimensional
budget: increase $q$ until the spectral tail is acceptable, then increase $M$
and readout capacity until the history error is acceptable. Constraint repair
cannot worsen the Hilbert error in price coordinates; a simplex signature head
achieves feasibility without the repair step.

\section{A falsifiable empirical program}\label{sec:empirics}

This section is a pre-registered design. It fixes data, models, metrics, and hypotheses before any estimation; thresholds named below are design constants, not results. No numbers in this section are estimates.

\subsection{Data and surface construction}\label{sec:data}

End-of-day SPX option quotes, January 2005 -- December 2025, covering at least two stress episodes (2008--09, 2020) and extended calm regimes. Filters: positive bids, midquotes from bid--ask, maturities $\tau\in[7/365,\,2]$ years, forward log-moneyness $k\in[-1.5,\,0.5]$, standard OptionMetrics-style screens for stale and crossed quotes. Forwards and discounting from put--call parity per maturity. Two construction pipelines, to control for construction artifacts: (a) arbitrage-free smoothing in the sense of \citet{fengler2009smoothing}; (b) arbitrage repair in price space in the sense of \citet{cohen2020repair}, followed by interpolation. Common evaluation grid: $41\times21$ points on $\Dom$; Sobolev weight $\rho\equiv1$ for headline results, vega-weighting as robustness. Increments computed at fixed $(k,\tau)$ on the grid; fixed-strike variants as robustness (the calendar constraint distinguishes them, Section~\ref{sec:families}).

\subsection{Models M1--M9}\label{sec:models}

\begin{enumerate}[label=\textbf{M\arabic*.},leftmargin=2.6em]
\item Unconstrained Gaussian KL field: Section~\ref{sec:spectral} factors, Gaussian innovations, no projection. (The straw man that Theorem~\ref{thm:boundary} says must fail at the boundary --- included to measure \emph{how}.)
\item Tangent-projected KL field: Construction II with the same factors, slack schedule $\epsilon_j$ per Section~\ref{sec:constructionII}.
\item Price-coordinate reflected field: Construction I discretized by projected Euler in $c$-coordinates.
\item SSVI diffusion: Construction III with \citet{gatheral2014arbitrage} sufficient conditions.
\item eSSVI diffusion: Construction III per \citet{hendriks2019essvi}.
\item Neural-SDE market model in the sense of \citet{cohen2023neural}, authors' recommended configuration.
\item Generative surface simulator in the sense of \citet{cont2023simulation}, authors' recommended configuration.
\item Fading-signature-field generator: the same observable state as M2, augmented by spot and realized variance history, with pre-registered $(q,M,\lambda)$ and the price-simplex head \eqref{eq:pricesighead}.
\item Conditional normalizing-flow field: the M8 fading-signature state conditions the local-cone flow \eqref{eq:conditionalsigflow}, trained by the exact grid likelihood \eqref{eq:localflowdensity} and mapped to prices through Dupire.
\end{enumerate}
All models are fit on a common training window and evaluated out of sample on rolling windows; splits, seeds, and configurations are fixed in the registered protocol (Appendix~\ref{app:checklist}).

\subsection{Metric families F1--F6}\label{sec:metrics}

\begin{enumerate}[label=\textbf{F\arabic*.},leftmargin=2.6em]
\item \emph{Arbitrage integrity.} (a) violation frequency: share of simulated steps with $\operatorname{ess\,inf} g<0$, calendar violation, or floor breach on the grid; (b) violation severity: distribution of $\operatorname{ess\,inf} g$ conditional on violation; (c) independent rerun of the interval certificate in Proposition~\ref{prop:convexity}(ii), plus a wider search for lower-complexity rational certificates and a separate search requiring the certified violating point to lie inside the empirical window $k\in[-1.5,0.5]$.
\item \emph{Increment distribution.} Per-mode innovation distributions against Gaussian benchmarks (tails, skewness); spectrum $\{\lambda_i\}$ shape and stability across calm/stress subsamples.
\item \emph{Dynamic consistency.} Autocorrelation of level/skew/curvature mode scores; skew-stickiness behavior; regime transfer (fit calm, evaluate stress, and conversely), with rough/path-dependent benchmarks \citep{gatheral2018rough,guyon2023path} as descriptive references.
\item \emph{Hedging.} Realized out-of-sample vega P\&L variance of test books hedged by (a) bucket regression and (b) the field hedge \eqref{eq:alphastar} with $\widehat C$ estimated on the training window; the audit statistic is the realized-to-lower-bound ratio against \eqref{eq:residual}.
\item \emph{Boundary behavior.} Conditional on small pre-step margin $\delta$ (empirical proxy: grid $\operatorname{ess\,inf} g$ in the lowest decile), one-step violation frequency by model, against the standardized Gaussian boundary-layer curve \eqref{eq:uniformboundarylayer}, the $1/2$ limit of Theorem~\ref{thm:boundary}, and the exponential interior bound \eqref{eq:borell}.
\item \emph{Distributional calibration.} Out-of-sample conditional log score for models with tractable densities; Rosenblatt/PIT diagnostics in each model's declared observation coordinate; energy score for the full surface; and left-tail coverage of vega-weighted surface P\&L. The M9 likelihood and its Gaussian benchmark $G_a$ are both densities of the same local-variance vector relative to $\dd a$, with the same conditioning state. Likelihoods in KL total-variance coordinates are reported separately and are never ranked against local-coordinate scores, as required by \eqref{eq:scoringmeasure}.
\end{enumerate}

\subsection{Hypotheses}\label{sec:hypotheses}

\begin{enumerate}[label=\textbf{H\arabic*.},leftmargin=2.6em]
\item \textbf{Boundary law.} M1's one-step violation frequency, conditional on active-margin states (F5), exceeds $0.25$ and follows the standardized-margin curve $\overline N(\Phi/(\sqrt\Delta\sigma_\Phi))$ of Corollary~\ref{cor:boundarylayer}, increasing toward $0.5$ as the conditioning margin and simulation step shrink; M2--M9 remain below $0.01$ on the same conditioning set. Failure of the M1 curve or boundary limit falsifies the empirical relevance of Theorem~\ref{thm:boundary} at daily steps.
\item \textbf{Interior equivalence.} On interior states (top half of the margin distribution), M1 and M2 increment laws are statistically indistinguishable at the pre-registered test level across F2 statistics; i.e., projection matters only near the boundary, as Proposition~\ref{prop:survival} predicts.
\item \textbf{Spectral parsimony.} $q^\ast=5$ KL modes capture at least $90\%$ of increment variance ($\sum_{i\le5}\lambda_i\ge0.9\tr C$) in both calm and stress subsamples, with mode \emph{shapes} stable across regimes (subspace angle below the registered threshold) while mode \emph{scales} are not.
\item \textbf{Hedging value.} The field hedge \eqref{eq:alphastar} reduces out-of-sample vega P\&L variance relative to bucket regression by at least $10\%$ on the registered test books; the improvement concentrates in stress windows.
\item \textbf{Manifold rigidity.} For M4--M5, the residual field $w_t-\varphi(\widehat\theta_t)$ exhibits significant autocorrelation in at least one mode score at the registered level, i.e., low-dimensional parameter maps leave dynamically structured curvature on the table.
\item \textbf{Value of path memory.} At matched surface rank $q$, the primary attribution test compares M8 with $M\ge1$ against an M8$(0)$ ablation in which the positive-level signature block is replaced by a zero vector of the same dimension while the price-simplex head, downstream architecture, optimizer, training schedule, and trainable-parameter budget are held fixed. M8 must reduce one-step out-of-sample Sobolev RMSE by at least $5\%$ and improve calm-to-stress transfer at the registered level. M2 remains a separate memoryless benchmark, not an identity claim. A separate $M=1$ versus $M>1$ ablation tests whether iterated interactions, rather than exponentially weighted first-level history alone, carry the gain.
\item \textbf{Value of conditional density.} On the common local-variance grid and conditioning state, M9 improves the paired out-of-sample conditional log score over $G_a$, the Gaussian model for $\log a$ defined after \eqref{eq:scoringmeasure}, at the registered level and passes the joint Rosenblatt/PIT uniformity test after multiplicity correction. Failure of either condition rejects the claim that invertible flow flexibility is calibrated rather than merely expressive.
\end{enumerate}
Each hypothesis names its own falsification: the design is symmetric in what it can lose.

\section{Scope and theorem frontier}\label{sec:scope}

The proved statements have six deliberate boundaries. First, $\tau_{\min}>0$
and finite $k$ exclude the expiry singularity and global wings; no claim here
replaces small-time asymptotics or the moment constraints of
\citet{lee2004moment}. Second, Proposition~\ref{prop:localinverse} is uniformly
elliptic and therefore does not cover the zero--pole strata of
Corollary~\ref{cor:dupiredivisor}. Third, the ordinary normalizing-flow
determinants are finite-grid or finite-rank results; genuine function-space
likelihoods require the quasi-invariance conditions stated in
Section~\ref{sec:functionspaceflow}. Fourth, the surface constructions enforce
static feasibility; a pricing model must additionally satisfy
Proposition~\ref{prop:musiela}, including stochastic-forward and discount-factor
terms. Fifth, the hedge is a frictionless, first-order, one-period covariance
hedge, and Section~\ref{sec:empirics} is a frozen protocol rather than an
empirical result. Sixth, Theorem~\ref{thm:gridreflection} proves reflected
well-posedness and numerical convergence on each fixed quote grid, while the
specific infinite-dimensional reflected equation remains conditional in
Proposition~\ref{prop:reflectedstatus} and the native total-variance limit in
Proposition~\ref{prop:qpstatus}(iii) remains conjectural. Expiry asymptotics, global-wing geometry, degenerate local
ellipticity, dynamically risk-neutral learned generators, transaction costs,
and execution-aware hedging are separate extensions rather than hidden
assumptions.

\section{Conclusion}\label{sec:conclusion}

The volatility surface is a stochastic field that lives against a wall. The wall is visible in three coordinate systems: a nonconvex total-variance set, a convex price set, and a positive local-variance cone. Dupire's ratio shows that the calendar and butterfly boundaries become the zeros and poles of the local field. The boundary half-law proves that unconstrained Gaussian dynamics cannot be exact there; the viable exits are tangency, reflection, or a constrained realization. Musiela transport then marks the second layer of consistency: staying inside the wall prevents static arbitrage, while fixed-contract martingales impose the separate dynamic drift restriction. Inside the set, operator covariance supplies spectral factors, truncation error, and the metric for the optimal vega-field hedge. Neural operators do not escape this geometry: soft penalties inherit the half-law, while simplex and cone heads achieve exact feasibility. Normalizing flows add likelihood without sacrificing feasibility when invertibility is built inside those charts; their ordinary determinants remain finite-dimensional and coordinate-measure specific, while genuine function-space likelihoods require Gaussian quasi-invariance. Fading-signature fields add memory and condition the flow without changing the constraint rule. The resulting theory separates two approximation budgets---surface rank and history depth---and three statistical tasks---reconstruction, conditional prediction, and calibrated density estimation. Reflected price dynamics are fully well posed and numerically convergent on every fixed quote grid; the exact function-space verification and the native nonconvex tangent-projection limit remain open. Constructing flexible learned generators satisfying the full fixed-contract drift restriction is the other central frontier, and the empirical program is designed to determine which refinements matter at market time scales.

\appendix

\section{Directional derivative and remainder for the butterfly functional}\label{app:Dg}

Fix $w\in\Hs$ with $m\le w\le M_0$ on $\overline{\Dom}$ and $h\in\Hs$. Write $A=1-\frac{k w_k}{2w}$ and $g[w]=A^2-\frac{w_k^2}{4}\bigl(\frac1w+\frac14\bigr)+\frac{w_{kk}}{2}$ as in \eqref{eq:gfun}.

\paragraph{Derivative.} Differentiating each term of $g[w+\varepsilon h]$ at $\varepsilon=0$:
\[
\frac{\dd}{\dd\varepsilon}\Big|_0 A_\varepsilon
= -\frac{k}{2}\cdot\frac{h_k w - w_k h}{w^2},
\qquad
\frac{\dd}{\dd\varepsilon}\Big|_0\Bigl[-\frac{(w_k+\varepsilon h_k)^2}{4}\Bigl(\frac{1}{w+\varepsilon h}+\frac14\Bigr)\Bigr]
= -\frac{w_k h_k}{2}\Bigl(\frac1w+\frac14\Bigr)+\frac{w_k^2}{4w^2}h,
\]
and the linear term contributes $h_{kk}/2$. Collecting, and using $2A\cdot\bigl(-\frac{k}{2w^2}\bigr)(h_kw-w_kh) = -\frac{kA}{w^2}(wh_k-w_kh)$, gives \eqref{eq:Dg}.

\paragraph{Remainder.} Let $u=w+\theta h$, $\theta\in[0,1]$, and suppose $\norm{h}$ small enough that $u\ge m/2$. Each term of $g$ is a polynomial in $(k,\,u^{-1},\,u_k,\,u_{kk})$ of joint degree at most four with $u_{kk}$ appearing linearly. Taylor with integral remainder in $\varepsilon$ gives
\begin{equation}\label{eq:remainder}
g[w+h] - g[w] - Dg[w]h \;=\; \mathcal{Q}[w;h],
\qquad
\norm{\mathcal{Q}[w;h]}_{L^1(\Dom)} \;\le\; \kappa\bigl(m,M_0,\norm{w}_{\Hs}\bigr)\,\norm{h}_{\Hs}^2,
\end{equation}
because every second-order-in-$h$ term is a product of at most two factors from $\{h,h_k\}$ and bounded coefficients, controlled by $\norm{h}_{C^0}\norm{h_k}_{L^2}$, $\norm{h_k}_{L^4}^2$, or $\norm{h}_{C^0}\norm{h_{kk}}_{L^2}$ --- each $\lesssim\norm{h}^2_{\Hs}$ by Proposition~\ref{prop:embedding} --- while no term is quadratic in $h_{kk}$ (linearity of $g$ in the top order). Polarizing the second variation and applying the same estimates to coefficient differences gives the Hessian bound and continuity stated explicitly in Lemma~\ref{lem:PhiC2}. If moreover $w,h\in C^2(\overline{\Dom})$ then term-by-term the same bound holds in $L^\infty$ with $\norm{h}^2_{C^2}$ on the right, which is the form used in Proposition~\ref{prop:feasible}.

\section{The joint spot--field hedge}\label{app:joint}

Extend the state to $(s,w)\in\R\oplus\Hs$ with $s$ the log-forward, joint increment covariance
\[
\mathcal{C} \;=\;
\begin{pmatrix}
c_{ss} & c_{sw}^\ast \\[2pt]
c_{sw} & C
\end{pmatrix},
\qquad c_{ss}\ge0,\quad c_{sw}\in\Hs,
\]
book sensitivity $(\partial_s\Pi,\ \nu)$, and instruments with sensitivities $(d_i,\ \eta_i)$, $i=1,\dots,n$, stacked into $\mathsf{H}:\R^n\to\R\oplus\Hs$. The one-period variance of the hedged book is the same quadratic \eqref{eq:variance} with $(\nu,H,C)$ replaced by $((\partial_s\Pi,\nu),\mathsf{H},\mathcal{C})$, so Theorem~\ref{thm:hedge} applies verbatim:
\[
\alpha^\ast_{\mathrm{joint}} \;=\; \bigl(\mathsf{H}^\ast\mathcal{C}\mathsf{H}\bigr)^{-1}\mathsf{H}^\ast\mathcal{C}
\begin{pmatrix}\partial_s\Pi\\ \nu\end{pmatrix}.
\]
Writing the blocks out, the cross-covariance $c_{sw}$ --- the spot--surface leverage field, empirically dominated by the short-dated skew reaction --- is what shifts option hedge weights away from the pure-field solution \eqref{eq:alphastar} and reallocates part of the vega hedge into the delta line. Setting $c_{sw}=0$ decouples the blocks and recovers \eqref{eq:alphastar} together with the classical minimum-variance delta $\alpha_{\mathrm{spot}}=\partial_s\Pi/1$ against a unit-delta instrument. The formula also exhibits the familiar practitioner fact that ``sticky'' conventions are statements about $c_{sw}$, not about $\nu$.

\section{Reproducibility checklist}\label{app:checklist}

To be published with the empirical implementation, before estimation:
\begin{enumerate}[label=(\arabic*),leftmargin=2.2em]
\item Data vendor, extraction date, and the exact filter cascade with per-filter attrition counts.
\item Both surface-construction pipelines (smoothing and repair) with library versions and settings; the evaluation grid; the weight $\rho$.
\item Model configurations M1--M9 and $G_a$: factor count $q$, signature depth $M$, decay $\lambda$, normalizing-flow architecture and base law, observation coordinate and reference measure, slack schedule $\epsilon_j$, parameter families, network architectures and seeds, training windows. Record the exact M8$(0)$ masking rule and matched parameter budget.
\item Rolling-window scheme: training/evaluation spans, step, and the calm/stress subsample definitions by date.
\item Test levels and all thresholds of H1--H7 as registered in Section~\ref{sec:hypotheses}, frozen before data contact.
\item Code and environment: repository hash, container image, hardware; one-command reproduction of every table.
\item The certificate search protocol for F1(c), including the SVI parameter box searched and separate declaration rules for a global-window certificate and a violation certified inside the empirical window $k\in[-1.5,0.5]$.
\end{enumerate}

\bibliographystyle{plainnat}
\bibliography{references}

\end{document}